\keywords{monads, distributive laws, compositionality, no-go theorems}
\theoremstyle{plain}
\newtheorem{nonexample}[thm]{Non-Example}
\newtheorem{counter}[thm]{Counterexample}
\newcommand{\define}[1]{\textbf{#1}}
\newcommand{\reason}[1]{\;\;\left\{\; #1 \;\right\}}
\newcommand{\bb}[1]{\mathbb{#1}}
\newcommand{\given}{\, | \,}
\newcommand{\anyterm}{\phi}
\newcommand{\anytermtwo}{\phi'}
\newcommand{\anytermthree}{\phi''}
\newcommand{\theoryeq}[1]{=_\mathbb{#1}}
\newcommand{\specialopS}{s}
\newcommand{\specialopT}{t}
\newcommand{\signatureT}{\Sigma^\bb{T}}
\newcommand{\eqT}{E^\bb{T}}
\newcommand{\essuniqfunctionA}{f}
\newcommand{\essuniqfunctionB}{f'}
\newcommand{\essuniqfunctionX}{h}
\newcommand{\essuniqfunctionY}{h'}
\mathchardef\mathhyphen="2D % chktex 18
\newcommand{\catset}{\mathbf{Set}}
\newcommand{\refsec}{Section}
\newcommand{\refsecs}{Sections}
\newcommand{\refthm}{Theorem}
\newcommand{\refthms}{Theorems}
\newcommand{\reflem}{Lemma}
\newcommand{\reflems}{Lemmas}
\newcommand{\refex}{Example}
\newcommand{\refexs}{Examples}
\newcommand{\refcounter}{Counterexample}
\newcommand{\refnon}{Non-Example}
\newcommand{\refcor}{Corollary}
\newcommand{\refeqn}{Equation}
\newcommand{\refeqns}{Equations}
\newcommand{\refdef}{Definition}
\newcommand{\refpro}{property}
\newcommand{\reftab}{Table}
\newcommand{\refprop}{Proposition}
\DeclareMathOperator{\var}{var}
\DeclareMathOperator{\ran}{ran}
\DeclareMathOperator{\supp}{supp}
\newcommand{\plotkinspecial}{\Xi}
\newcommand{\treemonad}{B}
\newcommand{\no}{\ensuremath{\times}\xspace}
\newcommand{\yes}{\ensuremath{\checkmark}\xspace}
\newcommand{\maybe}{\ensuremath{?}\xspace}
\begin{document}

\title[No-Go Theorems]{No-Go Theorems for Distributive Laws}
\titlecomment{{\lsuper*} This is an extended version of~\cite{ZwartMarsden2019}.}

\author[M.~Zwart]{Maaike Zwart}
\author[D.~Marsden]{Dan Marsden}
\address{Department of Computer Science\\ University of Oxford}
\email{maaike.annebeth@gmail.com, daniel.marsden@cs.ox.ac.uk}

% \IEEEoverridecommandlockouts
%  \IEEEpubid{\makebox[\columnwidth]{978-1-7281-3608-0/19/\$31.00~
%  \copyright2019 IEEE \hfill} \hspace{\columnsep}\makebox[\columnwidth]{ }}

\begin{abstract} %'about 100 words' allowed
Monads are commonplace in computer science, and can be composed using Beck's distributive laws. Unfortunately, finding distributive laws can be extremely difficult and error-prone. The literature contains some general principles for constructing distributive laws. However, until now there have been no such techniques for establishing when no distributive law exists.

We present three families of theorems for showing when there can be no distributive law between two monads. The first widely generalizes a counterexample attributed to Plotkin. It covers all the previous known no-go results for specific pairs of monads, and includes many new results. The second and third families are entirely novel, encompassing various new practical situations. For example, they negatively resolve the open question of whether the list monad distributes over itself, reveal a previously unobserved error in the literature, and confirm a conjecture made by Beck himself in his first paper on distributive laws. In addition, we establish conditions under which there can be at most one possible distributive law between two monads, proving various known distributive laws to be unique.
\end{abstract}

\maketitle

\section{Introduction}
Monads have become a key tool in computer science. They are, amongst other things, used to provide semantics for computational effects such as state, exceptions, and I/O~\cite{Moggi1991}. They are also used to structure functional programs~\cite{Wadler1995,PeytonJones2001}, and even appear explicitly in the standard library of the Haskell programming language~\cite{HaskellReport}. As such, it is important to fully understand and characterise their behaviour.

Monads are a categorical concept. A monad on a category~$\mathcal{C}$ is a triple $\langle T, \eta, \mu\rangle$ consisting of an endofunctor~$T:\mathcal{C} \rightarrow \mathcal{C}$ and two natural transformations~$\eta: 1 \Rightarrow T$ and~$\mu: T \circ T \Rightarrow T$ satisfying axioms described in \refdef~\ref{def:monad}. Given two monads with underlying functors~$S$ and~$T$, it is natural to ask if~$T \circ S$ always carries the structure of a monad. This would, for example, provide a way to combine simple monads together to model more complex computational effects.

Unfortunately, composing the functor parts of two monads does not, in general, result in a new monad. Beck has shown that the existence of a~\emph{distributive law} provides sufficient (but not necessary) conditions for such a composition to form a monad~\cite{Beck1969}. A distributive law between monads $S$ and $T$ is a natural transformation of type:
\[
S \circ T \Rightarrow T \circ S,
\]
satisfying four equations described in \refdef~\ref{def:distlaw} below. This important idea has since been generalised to notions of distributive laws for combining monads with comonads, monads with pointed endofunctors, endofunctors with endofunctors and various other combinations, see for example the work by Lenisa \emph{et al.}~\cite{LenisaPowerWatanabe2000}.

General-purpose techniques have been developed to construct distributive laws~\cite{Bonsangue2013, Jacobs1994, Dahlqvist2017b, ManesMulry2007,ManesMulry2008}. These methods are highly valuable, for, in the words of Bonsangue \emph{et al.}: \emph{``It can be rather difficult to prove the defining axioms of a distributive law.''}~\cite{Bonsangue2013}. In fact, it can be so difficult that on occasion a distributive law has been published which later turned out to be incorrect; Klin and Salamanca have made an overview of such cases involving the powerset monad~\cite{KlinSalamanca2018}.

The literature has tended to focus on positive results, either demonstrating specific distributive laws, or developing general-purpose techniques for constructing them. By comparison, there is a relative paucity of negative results, showing when no distributive law can exist. The most well-known result of this type appears in the paper \emph{Distributing probability over non-determinism}~\cite{VaraccaWinskel2006}, where it is shown that there is no distributive law combining the powerset and probability distribution monads, via a proof credited to Plotkin. This result was strengthened by Dahlqvist and Neves to show that the composite functor carries no monad structure at all~\cite{DahlqvistNeves2018}. Recently, the same proof technique was used by Klin and Salamanca to show that composing the covariant powerset functor  with itself yields an endofunctor that does not carry any monad structure~\cite{KlinSalamanca2018}, correcting an earlier error in the literature~\cite{ManesMulry2007}. To the best our knowledge, these are currently the only published impossibility results.

In this paper we present several theorems for proving the absence of distributive laws for large classes of monads on the category $\catset$ of sets and functions. These theorems significantly extend the current understanding of distributive laws.
Our results can roughly be divided into three classes:

\begin{itemize}
\item Firstly, we generalize Plotkin's method to general-purpose theorems covering all the previously
  published no-go results about distributive laws, and yielding new results as well.
\item Secondly, we develop a completely new family of no-go theorems, emphasising unitality axioms rather than the idempotence central to Plotkin's argument.
\item Finally, we prove two further no-go theorems, motivated by a question in Beck's original paper~\cite{Beck1969}. Here we introduce further novel techniques, based on the ability to judiciously ``make variables disappear'' in terms.
\end{itemize}

\noindent
As one application of our new methods, we show that the list monad cannot distribute over itself,
resolving an open question~\cite{ManesMulry2007, ManesMulry2008} and previous error~\cite{KingWadler1993} in the literature.
Another open question we resolve is a conjecture made by Beck in 1969~\cite{Beck1969}, stating that the Abelian group monad does not distribute over the list monad. In addition to answering open questions, the no-go theorems produced by these methods reveal yet another faulty distributive law in the literature, involving the list and exception monads.

Apart from our negative results, we also prove a uniqueness result. Under certain conditions, we prove that there is only one possible distributive law facilitating monad compositions, namely the distributive law resembling the times over plus distributivity. This proves that several known distributive laws, such as the distributive law for the multiset monad over itself, are unique.

Monads have deep connections with universal algebra. We fully embrace this algebraic viewpoint on monads, basing our proofs on an explicit algebraic equivalent of distributive laws formulated by Pir\'og and Staton~\cite{PirogStaton2017}, which was inspired by the work of Cheng~\cite{Cheng2011}. Formulating our results in algebraic form is a key contribution of our work, simplifying and clarifying the essentials of our proofs, which can be obscured by more direct calculations.

In physics, theorems proving the impossibility of certain things are called \emph{no-go theorems}, because they clearly identify theoretical directions that cannot succeed. We follow this example, and hope that by sharing our results, we prevent others from wasting time on forlorn searches for distributive laws that do not exist.

\subsection{Contributions}
We briefly outline our contribution. By taking an algebraic perspective, we demonstrate the non-existence of distributive laws for large classes of monads:
  \begin{itemize}
  \item In \refsec~\ref{section:Plotkin}  we widely generalize the essentials of a counterexample due to Plotkin~\cite{VaraccaWinskel2006}:
    \begin{itemize}
    \item We establish in \refthm~\ref{thm:plotkin-1} purely algebraic conditions under which a no-go result holds. This theorem recovers all the known negative results we are aware of, and many useful new results. The key ingredients for this theorem are binary terms that are idempotent and commutative.
    \item In~\refthm~\ref{thm:plotkin-2} we generalize \refthm~\ref{thm:plotkin-1} further, showing there is nothing essential about binary terms.
    \item \refthm~\ref{thm:plotkin-3} eliminates commutativity assumptions, yielding more useful insights.
    \item Finally, in \refsec~\ref{sec:plotkin-idempunit} we replace one of the idempotent terms from \refthm~\ref{thm:plotkin-1} with a unital one, still following the same general proof technique, yielding \refthm~\ref{thm:NoGo-Videm-Punit}.
    \end{itemize}
\item In \refsec~\ref{section:beyondPlotkin} we present three entirely new no-go theorems:
     \begin{itemize}
     \item Theorem~\ref{thm:nogoConstants} states conditions under which algebraic theories with more than one constant do not combine well with other theories. The usefulness of this theorem is demonstrated by \refex~\ref{ex:error-in-the-literature}, which identifies a previously unnoticed error in a theorem in the literature.
     \item Theorem~\ref{thm:nogoTreeList} is a general no-go theorem for monads that do not satisfy the so-called \emph{abides} equation~\cite{Bird1988}. One of its applications is that it negatively answers the open question of whether the list monad distributes over itself.
     \item Theorem~\ref{thm:nogoTreeIdem} is a general no-go theorem focussing on the combination of idempotence and units. From this theorem it follows that there is no distributive law for the powerset monad over the multiset monad: $P \circ M \Rightarrow M \circ P$.
     \item Section~\ref{section:beyondPlotkin} also contains two characterization results, giving insights into how a distributive law has to behave, if it exists at all. Proposition~\ref{propnary} identifies when the unit of one algebraic theory, presenting a monad, acts as an annihilator when another monad is distributed over it. For example, this characterizes the behaviour of empty lists and sets in certain composite monads. Theorem~\ref{thm:times-over-plus} gives conditions under which at most one distributive law can exist, and what form it must take. For example, the well-known distributive law for the multiset monad over itself is unique.
     \end{itemize}
   \item In \refsec~\ref{sec:plus-over-times} a second new family of no-go theorems is presented,
     based on another new approach to yielding no-go results within our algebraic framework:
     \begin{itemize}
     \item Theorem~\ref{theoremwithconst} confirms a negative conjecture of Beck from the original paper on distributive laws~\cite{Beck1969}.
     \item Theorem~\ref{theoremwithvar} provides further new results, showing there is no distributive law $bL \circ P \Rightarrow P \circ bL$ between the powerset and bounded distributive lattice monads.
     \end{itemize}
   \item In \refsec~\ref{ch:boom} we provide a detailed analysis of the availability of distributive laws for some natural families of monads.
     This includes many examples of applications of our theorems, as well as results from elsewhere in the literature to provide a detailed picture.
     \begin{itemize}
     \item We examine distributive laws between monads in the so-called~Boom hierarchy,
       a small family of monads representing data structures, which are studied in the functional programming community.
     \item To gain more examples of existing/non-existing distributive laws, we consider an extension of the Boom hierarchy previously studied by Uustalu~\cite{UUSTALU2016}.
     \item Finally, we consider possible iterations of compositions in the Boom hierarchy.
     \end{itemize}
   \item In \refsec~\ref{sec:conclusion}, we provide a summary of all axioms used for the various theorems in this paper, together with an overview of which theorem uses which axioms.
\end{itemize}

\noindent
Throughout this paper we restrict our attention to monads on the category $\catset$ of sets and functions, as this is already an incredibly rich setting. Our results are general-purpose in the sense that they are phrased in terms of abstract properties of the algebraic theories corresponding to both monads.
\begin{rem}
   To compose monads~$\langle S,\eta,\mu \rangle$ and~$\langle T,\eta,\mu\rangle$ one may ask if all we really want is some arbitrary monad structure on the functor~$TS$, rather than the specific structure given by a distributive law? Generally, a monad structure arising from a distributive law is vastly preferable to an arbitrary one, as it has many desirable properties. For example, there are canonical monad morphisms~$S \Rightarrow T \circ S$ and~$T \Rightarrow T \circ S$ inducing functors between both the Eilenberg-Moore and Kleisli categories of the corresponding monads. Furthermore, $T$ lifts to a monad on the Eilenberg-Moore category of~$S$, and~$S$ lifts to a monad on the Kleisli category of~$T$. More succinctly, a distributive law ensures that there is a strong relationship between the parts and the whole.
\end{rem}

\subsection{Additional Material}
This paper is an extended version of our LiCS2019 conference paper~\cite{ZwartMarsden2019}. Besides providing detailed proofs and more discussion of the original results, the following additions have been made:
\begin{itemize}
\item Section~\ref{sec:plotkin-idempunit} extends our Plotkin style counterexamples with a new theorem, bringing more monads within the scope of our techniques.
\item \refsec~\ref{sec:plus-over-times} introduces a third family of theorems, not present in the original conference paper. These theorems preclude the possibility of distributive laws between additional naturally occurring monads.
\item \refsec~\ref{ch:boom} provides detailed analysis of when distributive laws are available for many common monads. These concrete applications further illustrate the use of our main theorems, and situate our results in the broader understanding of distributive laws in the community.
\item \refsec~\ref{sec:conclusion} includes an overview of the axioms used in this paper, and which theorem uses which axioms.
\end{itemize}

\section{Preliminaries}\label{sec:prelims}

\subsection{Monads and Distributive Laws}
We introduce monads, distributive laws, and various examples that will recur in later sections, primarily to fix notation. The material is standard, and may be skipped by the expert reader.
\begin{defi}[Monad]%
  \label{def:monad}
For any category $\mathcal{C}$, a \define{monad} $\langle T,\eta, \mu \rangle$ on $\mathcal{C}$ consists of an endofunctor $T: \mathcal{C} \rightarrow \mathcal{C}$, and natural transformations~$\eta: 1 \Rightarrow T$ and~$\mu: T \circ T \Rightarrow T$ referred to as the~\define{unit} and~\define{multiplication}, satisfying the following axioms:
\begin{align}
  \mu \cdot T\eta & = id \tag{unit 1}\\
  \mu \cdot \eta T & = id \tag{unit 2}\\
  \mu \cdot T\mu & = \mu \cdot \mu T\tag{associativity}
\end{align}
Or, as commuting diagrams:
 \begin{center}
  \begin{tikzcd}
   & T \arrow[rd, "\eta T"]\arrow[ld, "T\eta"'] \arrow[d, equal] & & TTT \arrow[r, "\mu T"]\arrow[d, "T \mu"'] & TT \arrow[d, "\mu"] \\
   TT \arrow[r, "\mu"'] & T & TT \arrow[l, "\mu"] & TT \arrow[r, "\mu"'] & T
  \end{tikzcd}
 \end{center}
\end{defi}
We will restrict ourselves to monads on the category $\catset$ of sets and functions. In addition, if there is a finitary version and a full version of a monad, we mean the finitary one unless otherwise specified. We list a few examples of monads on $\catset$, which we will use throughout this paper.
\begin{exa}%
  \label{ex:exception-monad}
  For any set $E$, the~\define{exception monad} $(-+E)$ is given by:
  \begin{itemize}
  \item $(-+E)$ maps a set $X$ to the disjoint union $X+E$.
  \item $\eta^E_X$ is the left inclusion morphism.
  \item $\mu^E_X$ is the identity on $X$, and collapses the two copies of $E$ down to a single copy. That is, $\mu: (X + E) + E \Rightarrow X + E$.
  \end{itemize}
  When $E$ is a singleton set, this monad is also known as the~\define{maybe monad}, written as~${(-)}_\bot$.
\end{exa}
\begin{exa}%
  \label{ex:list-monad}
  The~\define{list monad}~$L$ is given by:
  \begin{itemize}
  \item $L(X)$ is the set of all finite lists of elements of~$X$.
  \item $\eta^L_X(x)$ is the singleton list $[x]$.
  \item $\mu^L_X$ concatenates a list of lists.
  \end{itemize}
  This monad is also known as the \define{free monoid monad}, in acknowledgement of its connection to the algebraic theory of monoids, see \refex~\ref{ex:monoids} below.
\end{exa}
\begin{exa}%
  \label{ex:multiset-monad}
  The~\define{multiset monad}~$M$ is given by:
  \begin{itemize}
  \item $M(X)$ is the set of all finite multisets\footnote{by which we mean: multisets in which only finitely many elements have a non-zero multiplicity. In other words: multisets with finite support. We assume the multiplicities are in the natural numbers. Multisets are also known as `bags'.} of elements of~$X$.
  \item $\eta^M_X(x)$ is the singleton multiset $\Lbag x \Rbag$.
  \item $\mu^M_X$ takes a union of multisets, adding multiplicities.
  \end{itemize}
  We can generalise the notion of multiset to take multiplicities in the integers rather than the natural numbers. This results in the~\define{Abelian group monad}. This monad is again named after its algebraic presentation, see \refprop~\ref{prop:presentation}.
\end{exa}
\begin{exa}%
  \label{ex:powerset-monad}
  The~\define{finite powerset monad}~$P$ is given by:
  \begin{itemize}
  \item $P(X)$ is the set of all finite subsets of~$X$.
  \item $\eta^P_X(x)$ is the singleton set~$\{x\}$.
  \item $\mu^P_X$ takes a union of sets.
  \end{itemize}
\end{exa}
\begin{exa}[Binary Tree Monad]%
  \label{ex:tree-monad}
  The~\define{binary tree monad}~$\treemonad$ is given by:
  \begin{itemize}
  \item $\treemonad(X)$ is the set of all binary trees with leaves labelled by elements from $X$.
  \item $\eta^{\treemonad}_X(x)$ is the tree consisting of a single leaf labelled with $x$.
  \item $\mu^{\treemonad}_X$ flattens a tree of trees.
  \end{itemize}
\end{exa}
\begin{exa}%
  \label{ex:distribution-monad}
  The~\define{probability distribution monad} $D$ is given by:
  \begin{itemize}
  \item $D(X)$ is the set of all finitely supported probability distributions over~$X$.
  \item $\eta^{D}_X(x)$ is the point distribution at~$x$.
  \item $\mu^{D}(e)(x)$ is the weighted average~$\sum_{d \in \supp(e)} e(d)d(x)$.
  \end{itemize}
\end{exa}
\begin{exa}%
  \label{ex:reader-monad}
  For any set of states~$R$, the~\define{reader monad}~${(-)}^R$ is given by:
  \begin{itemize}
  \item $X^R$ is the set of functions from~$R$ to~$X$.
  \item $\eta^{R}_X(x)$ is constantly~$x$.
  \item $\mu^R_X(f)(r) = f(r)(r)$.
  \end{itemize}
\end{exa}

\noindent
Given a pair of monads, sufficient conditions for the composite functor to form a monad are given by Beck's distributive laws~\cite{Beck1969}:
\begin{defi}[Distributive Law]\label{def:distlaw}
Given monads $S$ and $T$, a \define{distributive law} for monad composition $T\circ S$ is a natural transformation $\lambda: S \circ T \Rightarrow T \circ S$ satisfying the following axioms:
\begin{align}
  \lambda \cdot \eta^S T & = T\eta^S \tag{unit1}\\
  \lambda \cdot S\eta^T  & = \eta^T S \tag{unit2} \\
  \lambda \cdot \mu^S T & = T\mu^S \cdot \lambda S \cdot S\lambda \tag{multiplication1} \\
  \lambda \cdot S\mu^T  & = \mu^T S \cdot T\lambda \cdot \lambda T \tag{multiplication2}
\end{align}
Or, as commuting diagrams:
\begin{center}
\begin{tikzcd}\label{monad_squared_eq}
& T \arrow[dl, "\eta^S T"']\arrow[dr, "T \eta^S"] & & SST \arrow[d, "\mu^S T"'] \arrow[r, "S \lambda"] & STS \arrow[r, "\lambda S"] & TSS \arrow[d, "T \mu^S"] \\
ST \arrow[rr, "\lambda"]& & TS & ST \arrow[rr,"\lambda"] & & TS \\
& S \arrow[dl, "S\eta^T"'] \arrow[dr, "\eta^T S"] & & STT \arrow[d, "S \mu^T"'] \arrow[r, "\lambda T"] & TST \arrow[r, "T\lambda"] & TTS \arrow[d, "\mu^T S"] \\
 ST \arrow[rr, "\lambda"] & & TS & ST \arrow[rr,"\lambda"] & & TS \\
\end{tikzcd}
\end{center}
\end{defi}
\begin{rem}
For a pair of monads~$S, T$ the expression~``$S$ distributes over~$T$'' is often used. This phrasing is somewhat ambiguous and prone to errors. We will therefore explicitly state the type of the natural transformation, for example~``there is a distributive law of type~$S \circ T \Rightarrow T \circ S$''.
\end{rem}
\begin{thm}[Beck~\cite{Beck1969}]
  Let~$\mathcal{C}$ be a category, and~$\langle S, \eta^S, \mu^S \rangle$ and~$\langle T, \eta^T, \mu^T \rangle$ two monads on~$\mathcal{C}$. If~$\lambda: S \circ T \Rightarrow T \circ S$ is a distributive law, then~$T \circ S$ carries a monad structure with unit~$\eta^T\eta^S$ and multiplication~$\mu^T\mu^S \cdot T\lambda S$.
\end{thm}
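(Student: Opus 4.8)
The plan is to take $\langle T\circ S,\ \eta^T\eta^S,\ \mu^T\mu^S\cdot T\lambda S\rangle$ and verify the three monad axioms of \refdef~\ref{def:monad} one at a time; naturality of the proposed unit and multiplication is automatic, since both are pasted together from natural transformations. Write $e := \eta^T\eta^S$ and $m := \mu^T\mu^S\cdot T\lambda S$; unwinding the horizontal composite, $m$ is the natural transformation $TSTS \Rightarrow TS$ equal to $T\mu^S\cdot\mu^T SS\cdot T\lambda S$, and $e$ is $\eta^T S\cdot\eta^S$. The only tools needed are naturality of $\lambda,\eta^S,\eta^T,\mu^S,\mu^T$, the monad axioms for $S$ and for $T$ in isolation, the four axioms of $\lambda$ from \refdef~\ref{def:distlaw}, and the interchange law for horizontal composition (which in particular makes $\mu^T\mu^S$ unambiguous, by letting us reorder whiskerings acting on disjoint slots).

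For the left unit law $m\cdot e(TS) = \mathrm{id}_{TS}$: expand $e(TS) = \eta^T STS\cdot\eta^S TS$, push $\eta^T$ past $T\lambda S$ by naturality of $\eta^T$, cancel the resulting $\mu^T SS\cdot\eta^T TSS$ using $\mu^T\cdot\eta^T T = \mathrm{id}_T$, rewrite $\lambda S\cdot\eta^S TS = T\eta^S S$ using the unit axiom $\lambda\cdot\eta^S T = T\eta^S$ of $\lambda$, and finish with $T\mu^S\cdot T\eta^S S = \mathrm{id}_{TS}$ using $\mu^S\cdot\eta^S S = \mathrm{id}_S$. The right unit law $m\cdot (TS)e = \mathrm{id}_{TS}$ follows from an analogous short chase, using instead the unit axiom $\lambda\cdot S\eta^T = \eta^T S$ of $\lambda$ together with $\mu^T\cdot T\eta^T = \mathrm{id}_T$ and $\mu^S\cdot S\eta^S = \mathrm{id}_S$. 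Neither of these is hard.

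The substantive part, and where I expect the main obstacle, is associativity, $m\cdot m(TS) = m\cdot (TS)m$, an equality of natural transformations $TSTSTS \Rightarrow TS$. I would handle it as a single large pasting (or string-diagram) chase. Both sides, after applying $\lambda$ in the relevant ``$ST$''-slots of $TSTSTS$, should be rewritten into a common form in which every $S$ has been shuttled past every $T$ to its left --- schematically a $TTTSSS$ shape --- whereupon the three $T$'s collapse under $\mu^T$ (using associativity of $\mu^T$) and the three $S$'s under $\mu^S$ (using associativity of $\mu^S$). What makes the two rewriting orders meet is exactly the two multiplication axioms of $\lambda$: the first controls the interaction of $\lambda$ with $\mu^S$ (two $S$'s merging to the left of a $T$) and the second its interaction with $\mu^T$, while naturality of $\lambda$ slides it past a $\mu^S$- or $\mu^T$-whiskering acting on an unrelated part of the string. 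The point most easily gotten wrong is that no extra Yang--Baxter/hexagon coherence on $\lambda$ is needed here --- such a condition is required only when combining three genuinely different monads --- because every reordering the chase demands is precisely one supplied by a multiplication axiom; keeping the diagram well organized (or, equivalently, viewing $S$ and $T$ as monoids in $([\mathcal C,\mathcal C],\circ,\mathrm{Id})$ and $\lambda$ as the datum making the composite a monoid) is what tames the bookkeeping. Once associativity is verified, all three axioms hold and $\langle TS, e, m\rangle$ is a monad.
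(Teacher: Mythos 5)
Your proposal is correct: the unit-law computations are exactly right (each uses one $\lambda$-unit axiom, one monad unit axiom for $T$, and one for $S$), and the associativity strategy you describe — normalising both sides of $m\cdot m(TS)=m\cdot(TS)m$ to a common $TTTSSS\Rightarrow TS$ form via the two multiplication axioms of $\lambda$, naturality, and associativity of $\mu^T$ and $\mu^S$, with no Yang--Baxter condition needed — is precisely Beck's original argument. The paper itself states this theorem as a citation to Beck without proof, so there is nothing to compare against; your direct verification is the standard one and is sound.
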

\begin{exa}[Ring Monad~\cite{Beck1969}]%
  \label{ex:ring-monad}
  The motivating example of a distributive law involves the list monad and the Abelian group monad, and has type $\lambda: L \circ A \Rightarrow  A \circ L$. It captures exactly the distributivity of multiplication over addition:
  \begin{equation}%
  \label{timesoverplus}
  \lambda \left( a \cdot (b + c) \right) = (a \cdot b) + (a \cdot c)
  \end{equation}
  The term `distributive law' is derived from this example, and many other distributive laws exploit similar algebraic properties.
  However, as we will see in~\refsec~\ref{section:beyondPlotkin}, caution is needed: the validity of an equation such as~\eqref{timesoverplus} does not automatically imply the existence of a distributive law.
\end{exa}
\begin{exa}[Multiset Monad]\label{ex:multisetdistlaw}
  The multiset monad distributes over itself in a manner analogous to distributing multiplication over addition.
\end{exa}
\subsection{Algebraic Theories and Composite Theories}
We now outline the connections between algebras, monads, and distributive laws that we require in later sections.
\begin{defi}[Algebraic Theory]
  An~\define{algebraic signature} is a set of operation symbols~$\Sigma$, each with an associated natural number referred to as its~\define{arity}. The set of~\define{$\Sigma$-terms} over a set~$X$ contains~$X$ as~\define{variables} and is inductively closed under forming terms~$\sigma(t_1,\dots,t_n)$ for an~$n$-ary operation symbol~$\sigma$ and terms~$t_1,\dots,t_n$.

  An~\define{algebraic theory}~$\bb{T}$ consists of a signature~$\signatureT$, and a set~$\eqT$ of pairs of $\Sigma$-terms referred to as~\define{equations} or~\define{axioms}. We will often write a pair~$(s,t) \in \eqT$ as~$s \theoryeq{T} t$ or simply $s = t$ when convenient. For a subset $Y \subseteq X$ and a term $t$ we write $Y \vdash_\bb{T} t$ or $Y \vdash t$ to indicate that the variables appearing in $t$ are contained in the~\define{variable context} $Y$. The precise set of variables appearing in $t$ will be denoted $\var(t)$, and $\# \var(t)$ denotes the cardinality of this set.
\end{defi}
The following is well-known~\cite{Linton1966,Lawvere1963,Manes1976}:
\begin{prop}\label{prop:presentation}
  Given a theory~$(\signatureT, \eqT)$, the~\define{free model monad} over that signature maps~$X$ to the set of $\signatureT$-terms over~$X$, quotiented by provable equality in equational logic from the axioms~$\eqT$. The unit maps a variable to its corresponding equivalence class, and the multiplication flattens a term-of-terms to a term in the obvious way. If a monad is isomorphic to a free model monad, it is said to be~\define{presented} by the corresponding theory.
\end{prop}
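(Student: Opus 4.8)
The plan is to verify directly that the data described in the statement assembles into a monad in the sense of \refdef~\ref{def:monad}; the closing sentence (``presented by'') is merely a definition, and the fact that this monad deserves the name \emph{free model monad} --- i.e.\ that its Eilenberg--Moore algebras are exactly the models of $\bb{T}$ and $T_{\bb{T}}(X)$ is the free such model on $X$ --- is the classical correspondence of \cite{Lawvere1963,Linton1966,Manes1976}, which we may take as given.

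First I would set up the functor. Write $\mathrm{Tm}(X)$ for the set of $\signatureT$-terms over a set $X$, so that $T_{\bb{T}}(X) := \mathrm{Tm}(X)/{\theoryeq{T}}$. For a function $f : X \to Y$, define $\mathrm{Tm}(f) : \mathrm{Tm}(X) \to \mathrm{Tm}(Y)$ by structural recursion, renaming each variable $x$ to $f(x)$ and fixing operation symbols. A routine induction on derivations in equational logic shows that $s \theoryeq{T} t$ implies $\mathrm{Tm}(f)(s) \theoryeq{T} \mathrm{Tm}(f)(t)$, the only delicate case being the axiom rule, which survives because an instance of an axiom in $\eqT$ remains such an instance after renaming. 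Hence $\mathrm{Tm}(f)$ descends to $T_{\bb{T}}(f)$ on equivalence classes, and functoriality is immediate by induction on term structure.

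Next I would treat the unit and the multiplication, and the three monad axioms, each of which reduces to an elementary fact about substitution. The unit $\eta_X : x \mapsto [x]$ is natural because renaming a variable yields a variable. For $\mu_X : T_{\bb{T}}(T_{\bb{T}}(X)) \to T_{\bb{T}}(X)$, represent an element by a term $t$ whose variables are classes $[u_1],[u_2],\dots \in T_{\bb{T}}(X)$, choose representatives $u_i \in \mathrm{Tm}(X)$, and return the class of the simultaneous substitution $t[u_i/x_i]$; well-definedness is checked in two layers (independence of the choice of the $u_i$, then of $t$), each using that $\theoryeq{T}$ is a congruence stable under substitution --- in the ``argument'' slot for the first, in the ``context'' slot for the second. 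Naturality of $\mu$ holds since renaming commutes with substitution. For (unit~1), $T\eta_X$ replaces each variable $x$ of a term by the one-variable term ``$[x]$'' and $\mu_X$ then substitutes $x$ back; for (unit~2), $\eta_{T_{\bb{T}}(X)}$ sends $[t]$ to the one-variable term ``$[t]$'' and $\mu_X$ substitutes $t$ for that variable; and (associativity) says the two ways of carrying out two rounds of substitution on a triply nested term agree, which is the standard substitution-fusion identity, proved by induction on the outermost term.

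I expect the main obstacle to be organisational rather than mathematical: isolating once and for all the substitution lemmas --- that $\theoryeq{T}$ is a congruence compatible with substitution in both the context and the argument positions --- on which every later step silently relies, and then keeping the two-level well-definedness argument for $\mu$ honest. With those lemmas in hand, all remaining verifications are direct structural inductions.
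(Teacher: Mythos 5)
Your proposal is correct. The paper offers no proof of this proposition at all --- it simply labels it as well-known and cites Lawvere, Linton, and Manes --- and what you have written is precisely the standard construction those references contain: the term functor, preservation of provable equality under renaming, the two-layer well-definedness of $\mu$ via the congruence and substitution-stability of $\theoryeq{T}$, and the reduction of the monad laws to substitution identities. Your decision to bracket off the characterisation of the Eilenberg--Moore algebras as models (the genuinely ``free model'' part) as the classical correspondence is also consistent with how the paper treats the statement.
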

\begin{exa}[Monoids]%
  \label{ex:monoids}
  The algebraic \define{theory of monoids} has a signature containing a constant and a binary operation, satisfying left and right unitality and associativity.
  The \define{theory of commutative monoids} extends this theory with the commutativity equation.
  The \define{theory of join semilattices} further extends the theory of commutative monoids with an additional idempotence axiom.

  The corresponding free model monads are the list, multiset and finite powerset monads respectively.
\end{exa}
In the paper \emph{Notions of computation determine monads}, Plotkin and Power show that many monads describing computational effects have natural algebraic presentations~\cite{PlotkinPower2002}.
\begin{exa}%
  \label{ex:reader-monad-presentation}
  An algebraic presentation of the reader monad of~\refex~\ref{ex:reader-monad}, with state space~$\{0,1\}$, has a signature containing a single binary operation. Intuitively, $x * y$ is a process that proceeds as~$x$ if the state is~$0$, and~$y$ otherwise. This operation should satisfy:
  \begin{align*}
    x * x = x \quad\text{and}\quad
    (w * x) * (y * z) = w * z
  \end{align*}
  These axioms generalize naturally to larger state spaces.
\end{exa}
\begin{exa}%
  \label{ex:distribution-monad-presentation}
The distribution monad \refex~\ref{ex:distribution-monad} can be presented by a family of binary operations~$+^p$, for~$p \in (0,1)$, satisfying the following axioms~\cite{Jacobs2010,Stone1949}:
    \begin{align*}
      x +^p x & = x\\
      x +^p y &= y +^{1-p} x \\
      x +^p (y +^r z) &= (x +^{\frac{p}{p + (1-p)r}} y) +^{p +(1-p)r} z
    \end{align*}
\end{exa}

In addition to an algebraic equivalent of monads, we require an algebraic version of distributive laws. This key notion is provided in the form of \emph{composite theories}, which were introduced by Pir\'og and Staton~\cite{PirogStaton2017}.
\begin{defi}[Composite Theory]\label{def:composite}
Let $\bb{U}$ be an algebraic theory that contains two theories $\bb{S}$ and $\bb{T}$.
\begin{enumerate}
\item A term in $\bb{U}$ is \define{separated} if it is of the form $t[s_x/x]$, where $X \vdash_\bb{T} t$ is a $\bb{T}$-term and $s_x$ is a family of $\bb{S}$-terms indexed by $x \in X$.
\item Two separated terms $r_1 = t[s_x/x]$ and~$r_2 = t'[s'_{x'}/x']$ in $\bb{U}$ are~\define{equal modulo~$(\bb{T},\bb{S})$} if there are functions
%If there are terms $X \vdash_\bb{T} t, X' \vdash_\bb{T} t'$ and families of $\bb{S}$-terms $\{s_x \;|\; x \in X\}$ and $\{s'_{x'} \;|\; x' \in X'\}$, we say that~$t[s_x/x]$ and~$t'[s'_{x'}/x']$ are~\define{equal modulo~$(\bb{T},\bb{S})$} if there are functions
$\essuniqfunctionX: X \rightarrow Y, \essuniqfunctionY: X' \rightarrow Y$ and terms $\bar{s}_y$, such that:
    \begin{enumerate}
    \item $t[\essuniqfunctionX(x)/x] \theoryeq{T} t'[\essuniqfunctionY(x')/x']$
    \item $\forall x \in X : s_x \theoryeq{S} \bar{s}_{\essuniqfunctionX(x)}$
    \item $\forall x' \in X' : s'_{x'} \theoryeq{S} \bar{s}_{\essuniqfunctionY(x')}$
    \end{enumerate}
\end{enumerate}
The theory $\bb{U}$ is said to be a~\define{composite} of $\bb{T}$ after $\bb{S}$, if every term $u$ in~$\bb{U}$ is equal to a separated term, and moreover this term is~\define{essentially unique} in the sense that if~$v,v'$ are separated and~$v \theoryeq{U} u \theoryeq{U} v'$ then~$v$ and~$v'$ are equal modulo~$(\bb{T},\bb{S})$. Note that this is an oriented notion, a composite of~$\bb{T}$ after~$\bb{S}$ is not equivalent to a composite of~$\bb{S}$ after~$\bb{T}$.
\end{defi}

Note that a decomposition of a separated term $r$ in $\bb{U}$ into its $\bb{T}$ and $\bb{S}$ components, $r = t[s_x/x]$, is not necessarily unique: the term $t(s, s)$ can be written both as $t(x,y)[s/x, s/y]$ and $t(x,x)[s/x]$. We prove that all possible decompositions of a term are necessarily equal modulo~$(\bb{T},\bb{S})$, and therefore the possible choice in decomposition in the definition above does not matter.

\begin{lem}
If two separated terms $t[s_x/x]$ and $t'[s'_{x'}/x']$ are syntactically equal to each other, $t[s_x/x] = t'[s'_{x'}/x']$, then they are equal modulo~$(\bb{T},\bb{S})$. Moreover, all equalities in conditions $(a)$, $(b)$, and $(c)$ above are syntactic equalities in this case.
\end{lem}

\begin{proof}
Suppose that $t[s_x/x] = t'[s'_{x'}/x']$, where `$=$' denotes syntactic equality. We consider the syntactic trees of these terms, which must be equal. We conclude that the terms $t$ and $t'$ have the same syntactic tree, except for possibly their variables. That means that there are variable substitutions $\essuniqfunctionX: X \rightarrow Y, \essuniqfunctionY: X' \rightarrow Y$ such that $t[\essuniqfunctionX(x)/x] = t'[\essuniqfunctionY(x')/x']$, which proves condition (a).

Furthermore, again looking at the syntactic tree $t[s_x/x]$ and $t'[s'_{x'}/x']$, we see that the $s_x$ and $s_{x'}$ that appear in the same place in the tree must be syntactically equal. Hence we can choose the terms $\bar{s}_y$ to be those terms that already appear in the tree. That is, $\bar{s}_{\essuniqfunctionX(x)} = s_x$ and $\bar{s}_{\essuniqfunctionY(x')} = s_{x'}$. This proves conditions (b) and (c).
\end{proof}

In this paper, we shall use an equivalent definition of equality modulo~$(\bb{T},\bb{S})$.
\begin{thm}\label{thm:essentialuniqueness}
  Let $\bb{S}$ and $\bb{T}$ be two algebraic theories, and let $\bb{U}$ be an algebraic theory that contains both $\bb{S}$ and $\bb{T}$. For terms $X \vdash_\bb{T} t$, $X'\vdash_{\bb{T}} t'$ and families of $\bb{S}$-terms $\{s_x \;|\; x \in X\}$ and $\{s'_{x'} \;|\; x' \in X'\}$, the following are equivalent:
  \begin{enumerate}
  \item\label{it:es} The terms~$t[s_x/x]$ and~$t'[s'_{x'}/x']$ are equal modulo~$(\bb{T},\bb{S})$.
  \item\label{it:strong-es} There are functions $\essuniqfunctionA: X \rightarrow Z$, $\essuniqfunctionB: X' \rightarrow Z$ satisfying:
    \begin{enumerate}
    \item\label{ax:essuniq1}$t[\essuniqfunctionA(x)/x] \theoryeq{T} t'[\essuniqfunctionB(x')/x']$.
    \item\label{ax:essuniq2}$\essuniqfunctionA(x_1)\; = \essuniqfunctionA(x_2) \;\,\Leftrightarrow\; s_{x_1}\theoryeq{S} s_{x_2}$.
    \item\label{ax:essuniq3}$\essuniqfunctionB(x'_1) = \essuniqfunctionB(x'_2) \;\Leftrightarrow\; s'_{x'_1}\theoryeq{S} s'_{x'_2}$.
    \item\label{ax:essuniq4}$\;\,\essuniqfunctionA(x)\; = \essuniqfunctionB(x') \;\,\Leftrightarrow\;\; s_{x}\, \theoryeq{S} s'_{x'}$.
    \end{enumerate}
  \end{enumerate}
\end{thm}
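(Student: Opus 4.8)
The plan is to prove the two implications separately; all the content lies in the direction $(\ref{it:es}) \Rightarrow (\ref{it:strong-es})$, where a quotient is used to repair the ``only if'' halves of the biconditionals in~$(\ref{it:strong-es})$, which the functions $\essuniqfunctionX, \essuniqfunctionY$ of \refdef~\ref{def:composite}(2) need not satisfy.

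For $(\ref{it:strong-es}) \Rightarrow (\ref{it:es})$ I would simply reuse the data. Take $Y$ to be the union of the images of $\essuniqfunctionA$ and $\essuniqfunctionB$ inside $Z$, set $\essuniqfunctionX = \essuniqfunctionA$ and $\essuniqfunctionY = \essuniqfunctionB$, and for each $y \in Y$ choose a preimage under $\essuniqfunctionA$ if one exists and otherwise under $\essuniqfunctionB$, letting $\bar{s}_y$ be the corresponding $s_x$ or $s'_{x'}$. Conditions~\ref{ax:essuniq2}, \ref{ax:essuniq3}, \ref{ax:essuniq4} are exactly what guarantees that this assignment is independent, up to $\theoryeq{S}$, of the chosen preimage, and hence that clauses~(b) and~(c) of \refdef~\ref{def:composite}(2) hold; clause~(a) is condition~\ref{ax:essuniq1} verbatim. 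The only subtlety here is checking consistency of the representative choice precisely where the two images overlap, i.e.\ for a $y$ that is simultaneously some $\essuniqfunctionA(x)$ and some $\essuniqfunctionB(x')$, where condition~\ref{ax:essuniq4} is what is needed.

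For the converse, given $\essuniqfunctionX \colon X \to Y$, $\essuniqfunctionY \colon X' \to Y$ and terms $\bar{s}_y$ satisfying \refdef~\ref{def:composite}(2), I would define an equivalence relation on $Y$ by $y_1 \sim y_2 \iff \bar{s}_{y_1} \theoryeq{S} \bar{s}_{y_2}$ (an equivalence relation since $\theoryeq{S}$ is one), let $Z = Y/{\sim}$ with quotient map $q \colon Y \to Z$, and put $\essuniqfunctionA = q \circ \essuniqfunctionX$ and $\essuniqfunctionB = q \circ \essuniqfunctionY$. Then $\essuniqfunctionA(x_1) = \essuniqfunctionA(x_2)$ unfolds to $\bar{s}_{\essuniqfunctionX(x_1)} \theoryeq{S} \bar{s}_{\essuniqfunctionX(x_2)}$, which, via $s_x \theoryeq{S} \bar{s}_{\essuniqfunctionX(x)}$ from clause~(b) of \refdef~\ref{def:composite}(2) and transitivity of $\theoryeq{S}$, is equivalent to $s_{x_1} \theoryeq{S} s_{x_2}$; the analogous manipulations with clause~(c), and with clauses~(b) and~(c) jointly, yield conditions~\ref{ax:essuniq3} and~\ref{ax:essuniq4}. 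Finally, condition~\ref{ax:essuniq1} follows by applying the variable renaming $q$ to both sides of $t[\essuniqfunctionX(x)/x] \theoryeq{T} t'[\essuniqfunctionY(x')/x']$, using that $\theoryeq{T}$-provability is closed under substitution of variables. I do not expect a genuine obstacle; the only care needed is to build $\sim$ from the $\bar{s}_y$ rather than the $s_x$, so that the ``if'' directions of the biconditionals factor correctly through the $\bar{s}$'s, and more conceptually the whole statement amounts to the observation that the data of \refdef~\ref{def:composite}(2) is nothing but a cospan of the canonical quotients $X \to X/{\theoryeq{S}}$ and $X' \to X'/{\theoryeq{S}}$, up to relabelling of the apex.
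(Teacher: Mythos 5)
Your proposal is correct and follows essentially the same route as the paper: the easy direction reuses the data by taking $Y$ to be the union of the images and reading off the $\bar{s}_y$, and the hard direction quotients $Y$ and post-composes $\essuniqfunctionX,\essuniqfunctionY$ with the quotient map. Your explicit construction of the quotient via $y_1 \sim y_2 \iff \bar{s}_{y_1} \theoryeq{S} \bar{s}_{y_2}$ is a clean instantiation of the paper's ``pick a function $g$'' satisfying the merging conditions, and makes the reverse implications slightly more transparent.
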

\begin{proof}
  Showing that condition~\ref{it:strong-es} implies condition~\ref{it:es} is straightforward. Taking~$Y$ to be the union of the ranges of $\essuniqfunctionA$ and $\essuniqfunctionB$, requirements~\ref{ax:essuniq2}-\ref{ax:essuniq4} ensure that we can choose~$\bar{s}_y$ such that:
  \[
  \bar{s}_y = \begin{cases}
                s_{x}, & \mbox{if } \essuniqfunctionA(x) = y \\
                s_{x'}, & \mbox{if } \essuniqfunctionB(x') = y.
              \end{cases}
  \]

  To show that condition~\ref{it:es} implies condition~\ref{it:strong-es}, notice that by transitivity of equality, $\essuniqfunctionX$ and $\essuniqfunctionY$ from \refdef~\ref{def:composite} already have the properties:
      \begin{enumerate}
        \item\label{ax:essuniq1prime}$t[\essuniqfunctionX(x)/x] \theoryeq{T} t'[\essuniqfunctionY(x')/x']$.
        \item\label{ax:essuniq2prime}$\essuniqfunctionX(x_1)\; = \essuniqfunctionX(x_2) \;\Rightarrow\; s_{x_1} \theoryeq{S} s_{x_2}$.
        \item\label{ax:essuniq3prime}$\essuniqfunctionY(x'_1) = \essuniqfunctionY(x'_2) \Rightarrow\; s'_{x'_1} \theoryeq{S} s'_{x'_2}$.
        \item\label{ax:essuniq4-prime}$\;\essuniqfunctionX(x)\;\, = \essuniqfunctionY(x') \,\Rightarrow\;\, s_{x}\, \theoryeq{S} s'_{x'}$.
      \end{enumerate}
  So all we need to show are the reverse implications of the latter three points. To this end, pick a function $g: Y \rightarrow Z$ such that:
  \begin{itemize}
  \item If $s_{x_1} \theoryeq{S} s_{x_2}$, then $g(\essuniqfunctionX(x_1)) \; = g(\essuniqfunctionX(x_2))$
  \item If $s'_{x'_1} \theoryeq{S} s'_{x'_2}$, then $g(\essuniqfunctionY(x'_1)) = g(\essuniqfunctionY(x'_2))$
  \item If $\;s_{x}\, \theoryeq{S} s'_{x'}$, then $\;\,g(\essuniqfunctionX(x))\,\, = g(\essuniqfunctionY(x'))$
  \end{itemize}
  The function $g$ effectively takes a quotient of $Y$, defined by the three conditions above. Transitivity of equality ensures that the compositions $\essuniqfunctionA = g \circ \essuniqfunctionX$ and $\essuniqfunctionB = g \circ \essuniqfunctionY$ preserve properties~\ref{ax:essuniq1prime}-\ref{ax:essuniq4-prime}. By definition, they also satisfy the reverse implications, and so they satisfy condition~\ref{it:strong-es}.
  %The function $g$ effectively takes a quotient of $Y$, where two variables $y_1$ and $y_2$ are in the same equivalence class if one of the following four conditions holds: 1) there are $x_1 \in h^{-1}(y_1)$ and $x_2 \in h^{-1}(y_2)$ such that $s_{x_1} \theoryeq{S} s_{x_2}$, 2) there are $x'_1 \in h'^{-1}(y_1)$ and $x_2 \in h'^{-1}(y_2)$ such that $s'_{x'_1} \theoryeq{S} s'_{x'_2}$, 3) there are $x_1 \in h^{-1}(y_1)$ and $x'_2 \in h'^{-1}(y_2)$ such that $s_{x_1} \theoryeq{S} s'_{x'_2}$, 4) there are $x'_1 \in h'^{-1}(y_1)$ and $x_2 \in h^{-1}(y_2)$ such that $s'_{x'_1} \theoryeq{S} s_{x_2}$. Transitivity of equality ensures that the compositions $\essuniqfunctionA = g \circ \essuniqfunctionX$ and $\essuniqfunctionB = g \circ \essuniqfunctionY$ preserve properties~\ref{ax:essuniq1prime}-\ref{ax:essuniq4-prime}. By definition, they also satisfy the reverse implications, and so they satisfy condition~\ref{it:strong-es}.
\end{proof}

We may assume that a composite theory is never inconsistent:
\begin{prop}\label{prop:consistency}
If $\bb{U}$ is a composite theory of theories $\bb{T}$ after $\bb{S}$, and both $\bb{S}$ and $\bb{T}$ are consistent, then $\bb{U}$ is consistent.
\end{prop}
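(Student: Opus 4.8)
The plan is to argue by contraposition: assuming $\bb{U}$ is inconsistent, I will show that one of $\bb{S}$ or $\bb{T}$ must already be inconsistent. Recall that an algebraic theory is inconsistent precisely when it proves $x =y$ for two distinct variables $x,y$ (equivalently, when its free model monad sends every nonempty set to a singleton). So suppose $x \theoryeq{U} y$ with $x \neq y$ in $\bb{U}$. Both $x$ and $y$ are already separated terms (take $t$ to be the single-variable $\bb{T}$-term $x$, with the $\bb{S}$-term family assigning $x$ itself, and similarly for $y$). Hence by the essential uniqueness half of \refdef~\ref{def:composite}, the two separated terms $x$ and $y$ are equal modulo $(\bb{T},\bb{S})$, so by \refthm~\ref{thm:essentialuniqueness} there are functions $f \colon \{x\} \to Z$ and $f' \colon \{y\} \to Z$ satisfying conditions \ref{ax:essuniq1}--\ref{ax:essuniq4}.

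The key step is to extract a contradiction from these conditions. Writing $z = f(x)$ and $z' = f'(y)$, condition \ref{ax:essuniq1} gives $z \theoryeq{T} z'$ as $\bb{T}$-terms (here $t[f(x)/x]$ is literally the variable $z$, and $t'[f'(y)/y]$ is the variable $z'$). Now there are two cases. If $z = z'$ as elements of $Z$, then the "only if" direction of condition \ref{ax:essuniq4} forces $x \theoryeq{S} y$ as $\bb{S}$-terms, and since $x\neq y$ this witnesses the inconsistency of $\bb{S}$. If instead $z \neq z'$, then $z \theoryeq{T} z'$ is a proof of equality of two distinct variables in $\bb{T}$, witnessing the inconsistency of $\bb{T}$. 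Either way we contradict the hypothesis that both $\bb{S}$ and $\bb{T}$ are consistent, so $\bb{U}$ must be consistent.

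I expect the only subtlety to be bookkeeping around \refthm~\ref{thm:essentialuniqueness}: one must be careful that the biconditional forms of conditions \ref{ax:essuniq2}--\ref{ax:essuniq4} are genuinely available (they are, since the separated representations of $x$ and $y$ are in hand and the theorem applies verbatim), and that the degenerate single-variable terms really do count as separated in the sense of \refdef~\ref{def:composite}. Once those points are in place the argument is immediate; no real calculation is needed, and the case split on whether $z = z'$ is what does all the work.
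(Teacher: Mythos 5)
Your proof is correct and follows essentially the same route as the paper: take a pair of distinct variables identified in $\bb{U}$, view each as a separated term, apply essential uniqueness, and observe that the resulting equality $f(x)\theoryeq{T} f'(y)$ forces $f(x)=f'(y)$ on pain of $\bb{T}$ being inconsistent, whence condition \ref{ax:essuniq4} forces $x\theoryeq{S}y$ and $\bb{S}$ is inconsistent. Your explicit case split on $z=z'$ versus $z\neq z'$ is just a more spelled-out version of the paper's one-line dichotomy.
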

\begin{proof}
  Suppose for contradiction that $\bb{U}$ is an inconsistent theory. Then for each pair of variables $x,y$ we have $x \theoryeq{U} y$. Since both $x$ and $y$ are separated terms, essential uniqueness gives us two substitutions $f: \{x\} \rightarrow Z$ and $g:\{y\} \rightarrow Z$ such that:
  \begin{align*}
  x[f] \theoryeq{T} y[g] & \\
  f(x) = g(y) & \;\;\Leftrightarrow\;\;  x \theoryeq{S} y
  \end{align*}
  For the first equation to be satisfied without violating the consistency of $\bb{T}$, we must have that $f(x) = g(y)$. This implies, however, that $x \theoryeq{S} y$, which contradicts the consistency of $\bb{S}$. We hence conclude that if $\bb{U}$ is a composite theory of $\bb{T}$ after $\bb{S}$, $\bb{U}$ must be consistent.
\end{proof}

The following theorem shows that composite theories are indeed the algebraic equivalent of distributive laws.
\begin{thm}[Pir\'og \& Staton~\cite{PirogStaton2017}]%
  \label{thm:distlaw-vs-compositetheory}
  Let~$S$ and~$T$ be $\catset$-monads presented by theories~$\bb{S}$ and~$\bb{T}$. There is a distributive law of type~$S\circ T \Rightarrow T \circ S$ if and only if there is a composite theory of~$\bb{T}$ after~$\bb{S}$.
\end{thm}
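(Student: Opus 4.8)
The plan is to prove both directions by exploiting the correspondence between a $\catset$-monad $T$ and its algebraic theory $\bb{T}$: the monad $T$ applied to a set $X$ is the set of $\bb{T}$-terms over $X$ modulo provable equality, with $\eta^T$ the inclusion of variables and $\mu^T$ term-flattening. Consequently, a natural transformation $\lambda : S \circ T \Rightarrow T \circ S$ amounts to assigning, naturally in $X$, to each $\bb{S}$-term-of-$\bb{T}$-terms over $X$ a $\bb{T}$-term-of-$\bb{S}$-terms over $X$; and the four distributive-law axioms translate into conditions governing how $\lambda$ interacts with the two units and two multiplications. The key bridge in both directions is that a composite theory $\bb{U}$ of $\bb{T}$ after $\bb{S}$ is exactly a theory in which every term has an essentially unique ``separated'' normal form $t[s_x/x]$ with $t$ a $\bb{T}$-term and the $s_x$ being $\bb{S}$-terms, i.e.\ a privileged bijection $U(X) \cong TS(X)$ at the level of free algebras. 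I will use Theorem~\ref{thm:essentialuniqueness} freely to reason about equality modulo $(\bb{T},\bb{S})$ via the ``strong'' characterization with a single codomain $Z$ and biconditional conditions.

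First I would do the direction \emph{composite theory $\Rightarrow$ distributive law}. Given a composite theory $\bb{U}$, the monad $U$ presented by $\bb{U}$ satisfies $U(X) \cong TS(X)$ naturally, by sending a $\bb{U}$-term to its separated normal form; essential uniqueness is precisely what makes this well-defined and natural. Transporting the monad structure of $U$ across this iso yields a monad structure on $TS$; I would then check that its unit is $\eta^T\eta^S$ and that $\mu$ factors as $\mu^T\mu^S \cdot T\lambda S$ for the natural transformation $\lambda : ST \Rightarrow TS$ defined by taking an $\bb{S}$-term-of-$\bb{T}$-terms, viewing it inside $\bb{U}$, and reading off its separated form. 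By Beck's theorem (stated above) a monad structure on $TS$ of exactly this shape is equivalent to $\lambda$ being a distributive law, so the four axioms follow; where convenient I would instead verify the four axioms of Definition~\ref{def:distlaw} directly by unwinding what $\lambda$ does to singleton terms (giving unit1, unit2) and to flattened terms (giving multiplication1, multiplication2), again leaning on essential uniqueness to equate the two separated normal forms produced by the two sides of each square.

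For the converse, \emph{distributive law $\Rightarrow$ composite theory}, I would build $\bb{U}$ explicitly: take the signature $\signatureS \sqcup \signatureT$, and for axioms take $E^\bb{S} \cup \eqT$ together with, for each $\bb{S}$-operation $\sigma$ and each tuple of variables, an equation identifying $\sigma$ applied to $\bb{T}$-terms with the $\bb{T}$-over-$\bb{S}$ term that $\lambda$ assigns to it (so that $\lambda$ becomes the normalization procedure by fiat). Equivalently, and more cleanly, one takes $\bb{U}$ to be the algebraic theory of the composite monad $TS$ furnished by Beck's theorem, which contains $\bb{S}$ and $\bb{T}$ via the canonical monad morphisms $S \Rightarrow TS$ and $T \Rightarrow TS$. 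One then shows every $\bb{U}$-term rewrites to a separated term — intuitively by repeatedly using the $\lambda$-equations to push all $\bb{S}$-operations inside all $\bb{T}$-operations, with termination/confluence underwritten by the distributive-law axioms (unit axioms handle the base cases, multiplication axioms handle the two ways of reassociating nested operations) — and that the resulting separated form is essentially unique, again via Theorem~\ref{thm:essentialuniqueness}: two separated terms equal in $\bb{U} = TS$ correspond to equal elements of $TS(X)$ for a suitable variable set $X$, and decomposing that equality through the iso $U(X) \cong TS(X)$ yields the required functions $f, f'$ into a common $Z$ with the four biconditions.

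The main obstacle is the uniqueness half of the converse: producing a separated normal form is a relatively mechanical rewriting argument, but proving that \emph{any} two separated representatives of the same $\bb{U}$-equivalence class are equal modulo $(\bb{T},\bb{S})$ requires genuinely using all four distributive-law axioms — in particular the two multiplication axioms, which are exactly what prevent the normalization from depending on the order in which nested $\bb{S}$- and $\bb{T}$-layers are collapsed. Concretely, the delicate point is to extract, from a proof of $t[s_x/x] \theoryeq{U} t'[s'_{x'}/x']$, the common set $Z$ and the maps $f,f'$ of Theorem~\ref{thm:essentialuniqueness}; I expect this is cleanest to do semantically, by evaluating both terms in the free $TS$-algebra on the disjoint union of variables and using the distributive-law equations to show the two separated shapes must have matching $\bb{T}$-skeletons (giving $f,f'$) and $\bb{S}$-components equal under exactly the stated identifications. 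This is essentially Pir\'og and Staton's argument, and I would cite~\cite{PirogStaton2017} for the portions that are pure bookkeeping while spelling out the parts needed for our later constructions.
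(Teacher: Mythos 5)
Your proposal takes essentially the same route as the paper, which cites Pir\'og--Staton for this equivalence and sketches the two directions separately in Propositions~\ref{thm:distlaw-from-compth} and~\ref{thm:theoryfromdistlaw}: one direction via the natural isomorphism $U \cong T \circ S$ induced by separated normal forms together with Beck's three-condition characterization of distributive laws, and the converse via the theory with signature $\Sigma^{\bb{S}} \uplus \Sigma^{\bb{T}}$ and equations $E^{\bb{S}} \cup E^{\bb{T}} \cup E^{\lambda}$. The details you defer (essential uniqueness of the separated form, and the bookkeeping for the monad-structure transport) are exactly the ones the paper also defers to \cite{PirogStaton2017} and \cite{Zwart2020}.
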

We will frequently exploit~\refthm~\ref{thm:distlaw-vs-compositetheory} by showing that no composite theory exists, and therefore no distributive law.

For positive results it will be useful to know the actual action of the distributive law promised by~\refthm~\ref{thm:distlaw-vs-compositetheory}, if we know the composite theory. Conversely, when a distributive law is known to exist, it is useful to have an algebraic presentation for the resulting composite monad. The following theorems provide us with just that. Similar observations have been made by Lack~\cite[Proposition 4.7]{Lack2004}, written up more explicitly by Zanasi in his thesis~\cite[Proposition 2.27]{Zanasi2015}. Their results hold for symmetric monoidal theories.

\begin{prop}\label{thm:distlaw-from-compth}
 Let $\bb{S}$ and $\bb{T}$ be algebraic theories presenting monads $S$ and $T$, and let $\bb{U}$ be a composite theory of $\bb{T}$ after $\bb{S}$. Then the free model monad $U$ is isomorphic to the composition of $T\circ S$ via the distributive law mapping the equivalence class of representative $s[t_x/x]$ to the suitable equivalence class of a separated term in $\bb{U}$ equal to $s[t_x/x]$.
\end{prop}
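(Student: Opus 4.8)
The plan is to exhibit an explicit isomorphism of monads between the free model monad $U$ and the composite monad $T \circ S$ obtained from a distributive law. The first task is to pin down the distributive law itself. Given the composite theory $\bb{U}$, Theorem~\ref{thm:distlaw-vs-compositetheory} guarantees a distributive law $\lambda: S \circ T \Rightarrow T \circ S$ exists; but for the present statement I need its concrete action. I would define $\lambda_X$ on a generator $s[t_x/x] \in S(T(X))$ — that is, an $\bb{S}$-term $s$ with variables filled by $\bb{T}$-terms $t_x$ over $X$ — by first viewing $s[t_x/x]$ as a term of $\bb{U}$ over $X$ (since $\bb{U}$ contains both $\bb{S}$ and $\bb{T}$), then applying the composite-theory property to rewrite it to a separated term $t'[s'_{x'}/x']$ with $X \vdash_\bb{T} t'$ and $s'_{x'}$ an $\bb{S}$-term, and reading this off as an element of $T(S(X))$. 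Well-definedness on $\equiv_\bb{U}$-classes, and the fact that the target is well-defined up to $\equiv_\bb{T}$ and $\equiv_\bb{S}$, is exactly what essential uniqueness (in the form of Theorem~\ref{thm:essentialuniqueness}) supplies; naturality is routine since substitution of variables commutes with everything in sight.

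Next I would construct the candidate isomorphism $\alpha_X : U(X) \to T(S(X))$. By the composite-theory hypothesis, every term $u$ of $\bb{U}$ over $X$ equals a separated term $t[s_x/x]$, essentially uniquely; send the $\equiv_\bb{U}$-class of $u$ to the pair consisting of the $\equiv_\bb{T}$-class of $t$ together with, for each variable $x$ of $t$, the $\equiv_\bb{S}$-class of $s_x$ — i.e. to the element $t[[s_x]/x] \in T(S(X))$. Essential uniqueness makes this independent of the choice of separated representative (using Theorem~\ref{thm:essentialuniqueness}, the two functions $f,f'$ witnessing equality-modulo-$(\bb{T},\bb{S})$ identify precisely the data $T(S(X))$ records). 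The inverse $\beta_X: T(S(X)) \to U(X)$ just reads a formal element $t[[s_x]/x]$ back as an honest $\bb{U}$-term; that $\beta_X \circ \alpha_X = \mathrm{id}$ is immediate, and $\alpha_X \circ \beta_X = \mathrm{id}$ holds because a separated term is its own separated form. Naturality of $\alpha$ in $X$ is again a substitution bookkeeping check.

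Finally I would verify that $\alpha$ is a monad morphism from $\langle U, \eta^U, \mu^U\rangle$ to the composite monad $\langle T\circ S, \eta^T\eta^S, \mu^T\mu^S \cdot T\lambda S\rangle$ of Beck's theorem. For the units: $\eta^U_X(x)$ is the $\bb{U}$-class of the variable $x$, which is already separated (as the $\bb{T}$-term $x$ with the $\bb{S}$-term $x$ plugged in), so $\alpha_X$ sends it to $x$ regarded in $T(S(X))$, which is exactly $\eta^T_{S(X)}(\eta^S_X(x))$. For the multiplication one must show $\alpha_X \circ \mu^U_X = (\mu^T\mu^S \cdot T\lambda S)_X \circ (T(S(\alpha)))_X \circ \alpha_{U(X)}$, or equivalently chase a $\bb{U}$-term-of-$\bb{U}$-terms: flattening in $\bb{U}$ versus separating-then-applying-$\lambda$-then-flattening. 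This amounts to the statement that separating a nested term can be done "outside in" — separate the outer term, push the inner separated forms through via $\lambda$, then merge the two $\bb{T}$-layers and the two $\bb{S}$-layers — and it follows from the associativity bundled into $\bb{U}$ being a single equational theory together with essential uniqueness forcing both sides to the same separated form.

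I expect the main obstacle to be precisely this last compatibility with $\mu$: the bookkeeping of a doubly-nested term $u \in U(U(X))$, tracking which occurrences of variables get separated at which stage, and confirming that the two routes through the diagram produce separated terms that are equal modulo $(\bb{T},\bb{S})$ — hence equal in $U(X)$ by essential uniqueness. Everything else (well-definedness, bijectivity, naturality, the unit law) is a direct consequence of Definition~\ref{def:composite} and Theorem~\ref{thm:essentialuniqueness}, but the $\mu$-law is where the oriented, asymmetric nature of the composite theory genuinely does the work and where one must be careful that no $\equiv_\bb{U}$-collapsing is being silently assumed.
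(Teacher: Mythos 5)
Your proposal is essentially sound, but it routes the argument differently from the paper. The paper's proof (a sketch in the text, with full details deferred to \cite{Zwart2020}) first builds the natural isomorphism $\phi: U \Rightarrow T\circ S$ much as you do, transports the monad structure of $U$ across $\phi$ to obtain $\mu^{TS} = \phi\cdot\mu^U\cdot(\phi^{-1})TS(\phi^{-1})$ on the functor $TS$, and only then recovers the distributive law \emph{a posteriori} via Beck's characterisation: it checks that $\eta^T S$ and $T\eta^S$ are monad maps and that the middle unitary law $\mu^{TS}\cdot T\eta^S\eta^T S = id$ holds, whereupon Beck's theorem certifies that this structure comes from a distributive law and gives it as $\lambda = \mu^{TS}\cdot\eta^T S T\eta^S$. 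You instead define $\lambda$ directly by separating $s[t_x/x]$ and then verify that your $\alpha$ is a monad morphism onto the Beck composite $\langle TS,\ \eta^T\eta^S,\ \mu^T\mu^S\cdot T\lambda S\rangle$. That is workable, but note a logical wrinkle: to invoke ``the composite monad of Beck's theorem'' you must already know that your directly-defined $\lambda$ satisfies the four axioms of Definition~\ref{def:distlaw}, which is not free --- either you verify those four axioms by hand (more separation bookkeeping of exactly the kind you flag for the $\mu$-law), or you rearrange the argument as the paper does, establishing the transported monad structure first and letting Beck's equivalence deliver $\lambda$ afterwards. The paper's route buys the four distributive-law axioms for the price of the two monad-map conditions plus the middle unitary law; your route is more explicit about what $\lambda$ actually does, which is the content the proposition's statement emphasises. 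Both hinge on the same essential-uniqueness bookkeeping for doubly-nested terms that you correctly identify as the crux.
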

The proof of this proposition is straightforward but tedious. We give a sketch of the main ideas below, but for full details we refer to~\cite{Zwart2020}.
\begin{proof}[Proof sketch]
We first establish that the functor $U$ is isomorphic to the functor $T \circ S$ by finding an explicit natural isomorphism $\phi: U \Rightarrow T \circ S$. We conclude that the functor $T \circ S$ has a monad structure, which is given by the monad structure of the free model monad $U$ of theory $\bb{U}$. To prove that this structure comes from a distributive law, we prove the following three statements. Together, these statements are equivalent to having a distributive law~\cite{Beck1969}:
\begin{itemize}
\item $\langle TS, \eta^T\eta^S, \mu^{TS}\rangle$ is a monad, with multiplication $\mu^{TS} = \phi \cdot \mu^U \cdot (\phi^{-1}) TS(\phi^{-1})$.
\item The natural transformations $\eta^T S$ and $T\eta^S$ are monad maps.
\item The middle unitary law holds: $\mu^{TS}\cdot T\eta^S\eta^{T}S = id_{TS}$.
\end{itemize}

\noindent
The distributive law is then given by: $\lambda = \mu^{TS} \cdot \eta^T ST \eta^S$
\end{proof}

Knowing how to construct a distributive law from a composite theory, we will now do the reverse: constructing a composite theory from a known distributive law. To do this, we first define a set of equations derived from the distributive law.

\begin{defi}
Let $S$ and $T$ be the free model monads of algebraic theories $\bb{S}$ and $\bb{T}$. If there is a distributive law $\lambda: S \circ T \Rightarrow T \circ S$, then we define the set of $\lambda$-equations \define{$E^\lambda$} as follows:
Let $s[t_x/x]$ be a representative of an element in $STX$, and $t[s_y/y]$ a representative of an element in $TSX$. Then $s[t_x/x] = t[s_y/y] \in E^\lambda$ iff $\lambda$ maps the equivalence class of $s[t_x/x]$ to the equivalence class of $t[s_y/y]$.
\end{defi}

\begin{prop}\label{thm:theoryfromdistlaw}
Let $S$ and $T$ be the free model monads of algebraic theories $\bb{S}$ and $\bb{T}$. If there is a distributive law $\lambda: S \circ T \Rightarrow T \circ S$, then the following theory is a composite of $\bb{T}$ after $\bb{S}$, and the monad $\langle TS, \eta^T\eta^S, \mu^T\mu^S \cdot T\lambda S\rangle$ is the free model monad of this theory:
\begin{align*}
\Sigma^{\bb{TS}^\lambda} & = \Sigma^\bb{S} \uplus \Sigma^\bb{T} \\
E^{\bb{TS}^\lambda} & = E^S \cup E^T \cup E^\lambda
\end{align*}
We call the composite theory hence constructed $\bb{TS}^\lambda$.
\end{prop}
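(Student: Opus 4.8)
The plan is to build an explicit natural isomorphism $\psi$ between the free model monad $F$ of $\bb{TS}^\lambda$ and the monad $\langle TS, \eta^T\eta^S, \mu^T\mu^S \cdot T\lambda S\rangle$, reading off the composite-theory data along the way. The starting point is to interpret the combined signature $\Sigma^\bb{S} \uplus \Sigma^\bb{T}$ on each set $TSX$: the $\bb{T}$-operations act through the free $T$-algebra structure $\mu^T_{SX}$ on $T(SX)$, and the $\bb{S}$-operations through the $\lambda$-twisted structure $T\mu^S_X \cdot \lambda_{SX}\colon S(TSX)\to TSX$, which is an Eilenberg--Moore $S$-algebra as a consequence of the axioms of $\lambda$. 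This yields, for each $X$, an interpretation homomorphism $\llbracket-\rrbracket_X$ from the raw term algebra into $TSX$ sending a variable $x$ to $\eta^T\eta^S(x)$; it is automatically stable under substitution and natural in $X$.

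The first claim is that $\llbracket-\rrbracket$ respects $E^S \cup E^T \cup E^\lambda$, so that it descends to a homomorphism $\psi_X\colon FX \to TSX$. For $E^S$ and $E^T$ this is immediate, since $\llbracket-\rrbracket$ restricted to pure $\bb{S}$- or $\bb{T}$-terms factors through $SX$ (via the $\bb{S}$-algebra map $\eta^T_{SX}$, using a unit axiom of $\lambda$) respectively through $TX$. For $E^\lambda$, substitution-stability reduces us to the generic instance, where a short computation with naturality of $\lambda$ and the monad unit laws shows that the interpretation of an ``$\bb{S}$-over-$\bb{T}$'' term $s[t_x/x]$ is exactly $\lambda_X$ applied to the element it names in $STX$, while the interpretation of a separated term $t[s_y/y]$ is just the element it names in $TSX$; by definition of $E^\lambda$ these agree. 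Separately I would prove, by induction on term structure, that every $\bb{TS}^\lambda$-term is provably equal to a separated one: variables are already separated, a $\bb{T}$-operation applied to separated terms stays separated after merging variable contexts, and for an $\bb{S}$-operation applied to separated terms $t_i[s^i_x/x]$ one notes that $\sigma(t_1,\dots,t_n)$ names an element of $ST(\bigcup_i X_i)$, so an $E^\lambda$-equation rewrites it to a separated term, after which the remaining $\bb{S}$-substitutions can be absorbed beneath it. This gives surjectivity of $\psi_X$.

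For injectivity it suffices, by the previous paragraph, to show that if $\psi_X$ identifies two separated terms $t[s_y/y]$ and $t'[s'_{y'}/y']$ then they are provably equal in $\bb{TS}^\lambda$. Since $\psi_X$ sends each to the element it names in $T(SX)$, the hypothesis is that those two elements of $T(SX)$ coincide; and the standard description of equality in a free algebra says this holds exactly when there are functions $f\colon Y\to Z$, $g\colon Y'\to Z$ and $\bb{S}$-terms $\bar s_z$ with $t[f(y)/y] \theoryeq{T} t'[g(y')/y']$, $s_y \theoryeq{S} \bar s_{f(y)}$ and $s'_{y'} \theoryeq{S} \bar s_{g(y')}$ --- which is precisely the condition that the two separated terms are equal modulo $(\bb{T},\bb{S})$ in the sense of \refdef~\ref{def:composite}. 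From this data a three-step rewrite --- using $E^S$ on the $\bb{S}$-parts and $E^T$ (closed under substitution) on the $\bb{T}$-part --- derives $t[s_y/y] =_{\bb{TS}^\lambda} t'[s'_{y'}/y']$. Hence $\psi_X$ is a bijection, so a natural isomorphism of functors $F \cong TS$.

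Everything then falls out. As $\psi$ sends generators to $\eta^T\eta^S$, the units of the two monads match, and transporting $F$'s multiplication along $\psi$ and evaluating on separated normal forms identifies it with $\mu^T\mu^S \cdot T\lambda S$ --- alternatively, once $\bb{TS}^\lambda$ is known to be a composite theory one may invoke \refprop~\ref{thm:distlaw-from-compth}, observing that the distributive law it reconstructs from $\bb{TS}^\lambda$ is $\lambda$ itself by the definition of $E^\lambda$. And $\bb{TS}^\lambda$ is a composite of $\bb{T}$ after $\bb{S}$: it contains $\bb{S}$ and $\bb{T}$ by construction (with conservativity following from $\psi$, the consistency we may assume by \refprop~\ref{prop:consistency}, and the resulting injectivity of the monad unit components), every term is provably equal to a separated term by the induction above, and essential uniqueness is exactly the injectivity of $\psi$ combined with the description of equality in $T(SX)$ just used. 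I expect the main obstacle to be precisely this last bundle of facts --- identifying term interpretation in the $\lambda$-twisted algebra with the action of $\lambda$ on ``$\bb{S}$-over-$\bb{T}$'' representatives, and getting the description of equality in $T(SX)$ to line up with \refdef~\ref{def:composite} on the nose --- since this is where the distributive law axioms and the careful bookkeeping over variable contexts and the two monad layers genuinely do the work.
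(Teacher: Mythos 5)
Your proposal is correct in outline, but it takes a genuinely different route from the paper. The paper's (sketched) proof is structural: it first invokes Pir\'og and Staton's theorem to obtain \emph{some} composite theory whose free model monad is the one induced by $\lambda$, and then identifies that theory with $\bb{TS}^\lambda$ by exhibiting an isomorphism between the Eilenberg--Moore category of $TS$ and the category of algebras for $\bb{TS}^\lambda$, deferring the details to the cited thesis. You instead work entirely at the level of terms and free algebras: you interpret the combined signature on $TSX$ via $\mu^T_{SX}$ and the $\lambda$-twisted $S$-algebra structure $T\mu^S_X \cdot \lambda_{SX}$, check that this interpretation validates $E^S \cup E^T \cup E^\lambda$ (your computation showing that the interpretation of $s[t_x/x]$ is $\lambda_X$ of the element it names is exactly the right lemma), prove separation by induction on term structure, and obtain essential uniqueness from the explicit description of equality in the free algebra $T(SX)$, which you correctly match against \refdef~\ref{def:composite}. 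This buys a self-contained, syntactic proof that does not use Pir\'og--Staton as a black box and that directly verifies the composite-theory axioms; the paper's route buys brevity and reuses established machinery at the cost of an opaque algebra-category comparison. Two small points to tighten: the identification of the transported multiplication with $\mu^T\mu^S \cdot T\lambda S$ is asserted rather than computed (your fallback via \refprop~\ref{thm:distlaw-from-compth} does legitimately close this, since the distributive law it reconstructs from $\bb{TS}^\lambda$ is $\lambda$ by the definition of $E^\lambda$), and the conservativity argument needs the standing assumption that $\bb{S}$ and $\bb{T}$ are consistent so that $\eta^T_{SX}$ and $T\eta^S_X$ are injective, which you acknowledge only in passing.
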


Again the proof is straightforward but tedious. We give a proof sketch below, but for full details we refer to~\cite{Zwart2020}.
\begin{proof}[Proof sketch]
 From Pir\'og and Staton's theorem~\cite[Theorem 5]{PirogStaton2017}, we already know that $\lambda$ induces a monad that is the free model monad of a composite of $\bb{T}$ after $\bb{S}$. What is left to show is that $\bb{TS}^\lambda$ is indeed this composite theory. We prove this by establishing an isomorphism between the Eilenberg-Moore category of the monad TS and the category of algebras for $\bb{TS}^\lambda$.
\end{proof}

Propositions~\ref{thm:distlaw-from-compth} and~\ref{thm:theoryfromdistlaw} together give us a concrete presentation for composite theories:
\begin{cor}\label{cor:presentationforcomposite}
 Let $\bb{S}$ and $\bb{T}$ be algebraic theories presenting monads $S$ and $T$, and let $\bb{TS}$ be a composite theory of $\bb{T}$ after $\bb{S}$. Then the following gives a presentation of $\bb{TS}$:
\begin{align*}
\Sigma^{\bb{TS}} & = \Sigma^\bb{S} \uplus \Sigma^\bb{T} \\
E^{\bb{TS}^\lambda} & = E^\bb{S} \cup E^\bb{T} \cup E^\bb{\lambda}
\end{align*}
where $E^\bb{\lambda}$ consists of all provable equations in $\bb{TS}$ of form $s[t_x/x] = t[s_y/y]$.
\end{cor}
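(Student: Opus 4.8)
The plan is to obtain the statement by chaining together Propositions~\ref{thm:distlaw-from-compth} and~\ref{thm:theoryfromdistlaw}: the first turns the given composite theory into a distributive law, the second turns that distributive law back into a composite theory with exactly the advertised presentation, and the two together pin down $E^\lambda$. Concretely, I would start from a composite theory $\bb{TS}$ of $\bb{T}$ after $\bb{S}$ and apply Proposition~\ref{thm:distlaw-from-compth} to produce a distributive law $\lambda : S \circ T \Rightarrow T \circ S$ whose induced monad $\langle TS, \eta^T\eta^S, \mu^T\mu^S \cdot T\lambda S\rangle$ is isomorphic, as a monad, to the free model monad of $\bb{TS}$, and which acts by sending the class of a representative $s[t_x/x]$ to the class of any separated term equal to it in $\bb{TS}$. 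The first thing to extract from this is a description of $E^\lambda$: unwinding its definition, $s[t_x/x] = t[s_y/y]$ lies in $E^\lambda$ iff $\lambda$ identifies the two classes, and by the action just quoted this holds iff $s[t_x/x] \theoryeq{TS} t[s_y/y]$. Hence $E^\lambda$ is precisely the set of provable equations of $\bb{TS}$ of the stated shape, matching the description of $E^\bb{\lambda}$ in the corollary.

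Next I would feed this same $\lambda$ into Proposition~\ref{thm:theoryfromdistlaw}. That proposition hands us the theory $\bb{TS}^\lambda$ with signature $\Sigma^\bb{S} \uplus \Sigma^\bb{T}$ and axioms $E^\bb{S} \cup E^\bb{T} \cup E^\lambda$, guarantees that it is again a composite of $\bb{T}$ after $\bb{S}$, and identifies its free model monad with $\langle TS, \eta^T\eta^S, \mu^T\mu^S \cdot T\lambda S\rangle$. Composing this identification with the monad isomorphism supplied by Proposition~\ref{thm:distlaw-from-compth} yields a monad isomorphism between the free model monad of $\bb{TS}^\lambda$ and the free model monad of $\bb{TS}$. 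Since a theory and its free model monad determine each other up to the appropriate equivalence (Proposition~\ref{prop:presentation}), this exhibits $\bb{TS}^\lambda$ — the theory written out in the corollary's display — as a presentation of the monad underlying $\bb{TS}$, which is exactly the claim. Combined with the first paragraph's identification of $E^\lambda$, this completes the argument.

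The only delicate point, and the one I would treat as the main obstacle, is the bookkeeping of the isomorphisms involved. One must check that the isomorphism obtained by composing the ones coming from Propositions~\ref{thm:distlaw-from-compth} and~\ref{thm:theoryfromdistlaw} is compatible with the inclusions of $\bb{S}$ and $\bb{T}$ on both sides, so that $\bb{TS}^\lambda$ and $\bb{TS}$ agree not merely as monads but as composite theories in the sense of \refdef~\ref{def:composite}, and that the $\lambda$ extracted from $\bb{TS}$ really is the $\lambda$ whose $E^\lambda$ occurs in $\bb{TS}^\lambda$. Both facts are immediate from the explicit constructions: each isomorphism is built from the natural isomorphism $\phi : U \Rightarrow T \circ S$ of Proposition~\ref{thm:distlaw-from-compth} and its inverse, which by construction restrict to identities on the $\bb{S}$- and $\bb{T}$-components, so the compatibility is automatic. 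Given how much is already packaged into the two propositions, the corollary amounts to little more than the observation that their inputs and outputs dovetail.
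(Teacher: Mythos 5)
Your proposal is correct and matches the paper's own route exactly: the paper derives Corollary~\ref{cor:presentationforcomposite} precisely by chaining Propositions~\ref{thm:distlaw-from-compth} and~\ref{thm:theoryfromdistlaw}, treating the corollary as an immediate consequence of the two. Your additional unwinding of $E^\lambda$ as the set of provable separated equations, and the remark about the isomorphism bookkeeping, just make explicit what the paper leaves implicit.
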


\section{General Plotkin Theorems}\label{section:Plotkin} % 4 pages

In this section we develop algebraic generalizations of the counterexample attributed to Gordon Plotkin~\cite{VaraccaWinskel2006}, which showed that there is no distributive law of type $D \circ P \Rightarrow P \circ D$,
where $D$ is the distribution monad of \refex~\ref{ex:distribution-monad} and~$P$ is the finite powerset monad of \refex~\ref{ex:powerset-monad}. We present a slight rephrasing of this counterexample, augmented with commentary indicating the main proof ideas that will be used in the later generalizations. Throughout this section we adopt the notational conventions of~\cite{VaraccaWinskel2006,PirogStaton2017} whenever possible to ease comparison with those papers.
\begin{counter}[Probability does not distribute over non-determinism]\label{counter:Plotkin-original}
  Assume, for contradiction, that there is a distributive law of type~$\lambda : D \circ P \Rightarrow P \circ D$. Fix the set~$X = \{ a, b, c, d \}$, and consider the element~$\plotkinspecial \in DP(X)$ defined by:
  \[
    \plotkinspecial = \{ a, b \} +^{\frac{1}{2}} \{c, d \}.
  \]
  We define three functions $f_1, f_2, f_3 : X \rightarrow X$:
  \begin{alignat*}{3}
    & f_1(a) = a &&\quad f_2(a) = a &&\quad f_3(a) = a\\
    & f_1(b) = b &&\quad f_2(b) = b &&\quad f_3(b) = a\\
    & f_1(c) = a &&\quad f_2(c) = b &&\quad f_3(c) = c\\
    & f_1(d) = b &&\quad f_2(d) = a &&\quad f_3(d) = c
  \end{alignat*}
  The plan of the proof is to analyze how~$\plotkinspecial$ travels around the naturality square for~$\lambda$, for each of the three functions. The element~$\plotkinspecial$ and the three functions have been carefully chosen so that the distributive law unit axioms can be applied during the proof.
  \begin{equation}%
    \label{eq:plotkin-key}
    \begin{gathered}
      \begin{tikzpicture}[scale=0.5, node distance=2cm, ->]
        \node (tl) {$DP(X)$};
        \node[below of=tl, node distance=1.5cm] (bl) {$DP(X)$};
        \node[right of=tl] (tr) {$PD(X)$};
        \node[right of=bl] (br) {$PD(X)$};
        \draw (tl) to node[above]{$\lambda_X$} (tr);
        \draw (bl) to node[below]{$\lambda_X$} (br);
        \draw (tl) to node[left]{$DP(f_i)$} (bl);
        \draw (tr) to node[right]{$PD(f_i)$} (br);
      \end{tikzpicture}
    \end{gathered}
  \end{equation}
  We proceed as follows:
  \begin{itemize}
  \item Trace~$\plotkinspecial$ around the naturality square~\eqref{eq:plotkin-key} for both~$f_1$ and~$f_2$. We note that as~$\{-,-\}$ is commutative, and~$+^{\frac{1}{2}}$ is idempotent:
    \[
      DP(f_1)(\plotkinspecial) = \eta^D_{PX}\{ a, b \} = DP(f_2)(\plotkinspecial).
    \]
    Commutativity, and particularly idempotence, will be important ideas for our subsequent generalizations. For both~$f_1$ and~$f_2$ we can apply the first distributive law unit axiom to conclude that:
    \[
      \lambda_X \circ DP(f_1)(\plotkinspecial) = \{ \eta^D_X(a), \eta^D_X(b) \} = \lambda_X \circ DP(f_2)(\plotkinspecial).
    \]
    Now a careful consideration of the actions~$PD(f_1)$ and~$PD(f_2)$ allows us to deduce that~$\lambda_X(\plotkinspecial)$ must be a subset of:
    \[
      \{ \eta^D_X(a), \eta^D_X(b), \eta^D_X(c), \eta^D_X(d) \}.
    \]
    This part is less straightforward to generalize. In principle it involves inverse images of equivalence classes of terms in one algebraic theory, with variables labelled by equivalence classes of terms in a second algebraic theory. This motivates our move to an explicitly algebraic formulation. We can see this whole step as establishing an upper bound on the set of variables appearing in~$\lambda_X(\plotkinspecial)$.
  \item We then trace~$\plotkinspecial$ around the naturality square~\eqref{eq:plotkin-key} for~$f_3$. In this case, we exploit the idempotence of the operation~$\{-, -\}$ to conclude:
    \begin{equation}%
      \label{eq:plotkin-intermediate}
      DP(f_3)(\plotkinspecial) = \{a\} +^{\frac{1}{2}} \{ c \}.
    \end{equation}
    This indicates idempotence is actually an important aspect of both monads for this argument to work. Equation~\eqref{eq:plotkin-intermediate} allows us to apply the second unit axiom for~$\lambda$ to conclude:
    \[
      \lambda_X \circ DP(f_3)(\plotkinspecial) = \{ a +^{\frac{1}{2}}  c \}.
    \]
    By considering the action of~$PD(f_3)$ as before, we conclude that~$\lambda_X(\plotkinspecial)$ must contain an element mapped onto~$a +^{\frac{1}{2}} c$ by~$PD(f_3)$, placing a lower bound on the set of variables that appear in~$\lambda_X(\plotkinspecial)$.
  \item The lower and upper bounds established in the previous two steps contradict each other, and so no distributive law of type~$D \circ P \Rightarrow P \circ D$ can exist.
  \end{itemize}
\end{counter}

\noindent
In summary, the argument requires two components:
\begin{enumerate}[label={[\arabic*]}]
\item Some operations satisfying certain algebraic equational properties such as idempotence and commutativity.
\item Some slightly more mysterious properties of our monads, making the ``inverse image'' parts of the argument work correctly.
\end{enumerate}
\begin{rem}
  The original counterexample is actually shown for what is known as the free real cone or finite valuation monad, as this requires slightly weaker assumptions. We state it here for the distribution monad simply to avoid the distraction of introducing yet another monad. This is essentially a cosmetic decision, and our later results are equally applicable to the original counterexample for the free real cone monad.
\end{rem}

Our aim in this section is to extract general methods for showing no-go results for constructing distributive laws, derived from the essential steps in this counterexample.
In order to do this, we isolate sufficient conditions on algebraic theories inducing two monads, such that there can be no distributive law between them. Earlier generalizations of this counterexample have appeared in~\cite{KlinSalamanca2018, DahlqvistNeves2018}. All the existing approaches involve direct calculations with the distributive law axioms, leading to somewhat opaque conditions. They also remain limited to the case where one of the two monads is the powerset monad, restricting their scope of application.

We introduce terminology for some special sets of terms in an algebraic theory. The theorems are stated in terms of these special sets, which in some cases can restrict the scope for which certain ``global'' conditions need to apply, broadening the range of applicability.
\begin{defi}[Universal Terms]
  For an algebraic theory, we say that a set of terms~$T$ is:
  \begin{itemize}
  \item \define{Universal} if every term is provably equal to a term in $T$.
  \item \define{Stable} if $T$ is closed under substitution of variables for variables.
  \end{itemize}
\end{defi}
\begin{exa}
  Some examples of universal and stable sets:
  \begin{enumerate}[label={[\arabic*]}]
  \item For any theory, the set of all terms is a stable universal set.
  \item For the theory of real vector spaces, every term is equal to a term in which scaling by the zero element does not appear. Terms that do not contain the scale by zero operation are clearly also stable under variable renaming. Therefore the terms not containing the scale by zero operation are a stable universal set.
  \item In the theory of groups, every term is equal to a term in which no subterm and its inverse are ``adjacent''. This set is therefore universal. It is not stable, as variable renaming may introduce a subterm adjacent to its inverse.
  \end{enumerate}
\end{exa}
\begin{rem}
  On first reading, it is probably easiest to take the universal stable sets required in subsequent theorems to be the set of all terms in a theory. This is by far the most common case.
\end{rem}

Throughout this section, the variable labels for any algebraic theory will range over the natural numbers.
We will also write~$n$ for the set~$\{1,\dots,n\}$, so for example, $2 \vdash t$
%\[
%  2 \vdash t
%\]
means~$t$ is a term containing at most two variables.

We proceed in four steps. Theorem~\ref{thm:plotkin-1} is an algebraic generalization of Plotkin's counterexample, capturing the algebraic properties required of both theories in order for a proof of this type to work. In \refthm~\ref{thm:plotkin-2} we generalize further, removing the restriction to binary terms that was sufficient for the original application. This generalization complicates the proof slightly, and so to clarify the methods involved we present two separate theorems. Finally, \refthms~\ref{thm:plotkin-3} and~\ref{thm:NoGo-Videm-Punit} address the need for commutativity-like axioms and idempotency axioms respectively, and bring further combinations of monads into scope.

\begin{thm}%
  \label{thm:plotkin-1}
  Let~$\mathbb{P}$ and~$\mathbb{V}$ be two algebraic theories, $T_{\mathbb{P}}$ a stable universal set of~$\mathbb{P}$-terms, and~$T_{\mathbb{V}}$ a stable universal set of $\mathbb{V}$-terms. If there are terms:
  \[
    2 \vdash_{\mathbb{P}} p \qquad\text{ and }\qquad 2 \vdash_{\mathbb{V}} v
  \]
  such that:
  \begin{enumerate}[label=(P\arabic*)]
  \item\label{ax:pcomm} $p$ is commutative:
    \[
      2 \vdash p(1,2) \theoryeq{P} p(2,1)
    \]
  \item\label{ax:pidem} $p$ is idempotent:
    \[
      1 \vdash p(1,1) \theoryeq{P} 1
    \]
  \item\label{ax:pprolif} For all~$p' \in T_{\mathbb{P}}$:
    \[
      \Gamma \vdash p(1,2) \theoryeq{P} p' \quad\Rightarrow\quad 2 \vdash p'
    \]
  \end{enumerate}
  \begin{enumerate}[label=(V\arabic*)]
  \item\label{ax:videm} $v$ is idempotent:
    \[
      1 \vdash v(1,1) \theoryeq{V} 1
    \]
  \item\label{ax:vvar} For all~$v' \in T_{\mathbb{V}}$, and any variable~$x$:
    \[
      \Gamma \vdash x \theoryeq{V} v' \quad\Rightarrow\quad \{x\} \vdash v'
    \]
  \item\label{ax:vbinary} For all~$v' \in T_{\mathbb{V}}$:
    \[
      \Gamma \vdash v(1,2) \theoryeq{V} v' \quad\Rightarrow\quad \neg (\{ 1 \} \vdash v'\vee \{ 2 \} \vdash v')
    \]
  \end{enumerate}
  Then there is no composite theory of~$\mathbb{P}$ after~$\mathbb{V}$.
\end{thm}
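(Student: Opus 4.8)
The plan is to assume, for contradiction, that $\mathbb{U}$ is a composite theory of $\mathbb{P}$ after $\mathbb{V}$, and to replay Plotkin's counterexample (\refcounter~\ref{counter:Plotkin-original}) purely equationally inside $\mathbb{U}$, with \refthm~\ref{thm:essentialuniqueness} doing the work of the naturality squares. I would take as the central object the $\mathbb{U}$-term $\plotkinspecial := v(p(1,2),p(3,4))$ over four variables $\{1,2,3,4\}$; since $\mathbb{U}$ contains both $\mathbb{P}$ and $\mathbb{V}$, the equations \ref{ax:pcomm}, \ref{ax:pidem}, \ref{ax:videm} are available in $\mathbb{U}$. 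I would fix a separated representative $q[\sigma_j/j]$ of $\plotkinspecial$, choosing (by universality and stability) $q \in T_{\mathbb{P}}$ with $J := \var(q)$ and each $\sigma_j \in T_{\mathbb{V}}$; applying any renaming $f$ of $\{1,2,3,4\}$ to $\plotkinspecial \theoryeq{U} q[\sigma_j/j]$ then gives $\plotkinspecial[f]\theoryeq{U}q[\sigma_j[f]/j]$, still a separated term --- this substitution is the algebraic shadow of the naturality square. I would also record at the outset that \ref{ax:pprolif} forces $\mathbb{P}$ to be consistent and \ref{ax:vbinary} forces $\mathbb{V}$ to be consistent (a stable universal set would otherwise contain a term violating the axiom), so by \refprop~\ref{prop:consistency} $\mathbb{U}$ is consistent too; hence no two distinct variables are provably equal in any of the three theories. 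Then I would analyse the three renamings $f_1, f_2, f_3$ of the counterexample.

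\emph{Upper bound, via $f_1$ and $f_2$.} Using \ref{ax:pcomm} and \ref{ax:videm} one gets $\plotkinspecial[f_1]\theoryeq{U}v(p(1,2),p(1,2))\theoryeq{U}p(1,2)$, and similarly $\plotkinspecial[f_2]\theoryeq{U}p(1,2)$. Now $p(1,2)$ and $q[\sigma_j[f_1]/j]$ are both separated and $\mathbb{U}$-equal, so \refthm~\ref{thm:essentialuniqueness} supplies $h:J\to Z$, $h':\{1,2\}\to Z$ with $q[h(j)/j]\theoryeq{P}p(h'(1),h'(2))$, with $h'(1)\ne h'(2)$ (consistency of $\mathbb{V}$), and with $h(j)=h'(k)\Leftrightarrow\sigma_j[f_1]\theoryeq{V}k$. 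Renaming $h'(1),h'(2)$ to $1,2$ and applying \ref{ax:pprolif} to the ($T_{\mathbb{P}}$-)term $q[h(j)/j]$ forces its variables into $\{h'(1),h'(2)\}$, so $h(j)\in\{h'(1),h'(2)\}$ for every $j\in J$ and hence each $\sigma_j[f_1]$ is $\mathbb{V}$-equal to a single variable; by \ref{ax:vvar} this confines $\var(\sigma_j)$ to $f_1^{-1}(\{1\})$ or $f_1^{-1}(\{2\})$. Repeating with $f_2$ and intersecting, the four possible supports are all singletons --- this is exactly why $f_1,f_2$ are chosen as they are --- and the empty case is killed by consistency of $\mathbb{V}$, so $\var(\sigma_j)=\{k_j\}$ for some $k_j\in\{1,2,3,4\}$; a one-line renaming argument then upgrades this to $\sigma_j\theoryeq{V}k_j$. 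Conclusion: $\plotkinspecial\theoryeq{U}q[k_j/j]$, i.e. every inner $\mathbb{V}$-part of the separated form of $\plotkinspecial$ is a plain variable.

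\emph{Lower bound, via $f_3$.} By \ref{ax:pidem}, $\plotkinspecial[f_3]\theoryeq{U}v(p(1,1),p(3,3))\theoryeq{U}v(1,3)$, a separated term with trivial $\mathbb{P}$-part; on the other hand $\plotkinspecial[f_3]\theoryeq{U}q[f_3(k_j)/j]$, separated with $\mathbb{V}$-parts the plain variables $f_3(k_j)\in\{1,3\}$. Comparing these via \refthm~\ref{thm:essentialuniqueness} gives $g:J\to W$, $g':\{x\}\to W$ with $q[g(j)/j]\theoryeq{P}g'(x)$ and $g(j)=g'(x)\Leftrightarrow f_3(k_j)\theoryeq{V}v(1,3)$. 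But \ref{ax:vvar} and \ref{ax:vbinary} together forbid any single variable from being $\mathbb{V}$-equal to $v(1,3)$ (after renaming $3\mapsto2$, a $T_{\mathbb{V}}$-term both $\mathbb{V}$-equal to $v(1,2)$ and, by \ref{ax:vvar}, to a single variable contradicts \ref{ax:vbinary}), so $g(j)\ne g'(x)$ for all $j\in J$. Then $g'(x)$ does not occur in $q[g(j)/j]$, so substituting a fresh variable for it in $q[g(j)/j]\theoryeq{P}g'(x)$ equates two distinct variables in $\mathbb{P}$ --- contradicting its consistency. Hence no composite theory of $\mathbb{P}$ after $\mathbb{V}$ exists.

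The hard part is the upper bound. Plotkin's informal claim that ``$\lambda_X(\plotkinspecial)$ is a subset of the point distributions'' becomes the assertion that the $\sigma_j$ are plain variables, and extracting this from essential uniqueness requires carefully propagating variable-support bounds through the renamings $f_1,f_2$ while staying inside the universal stable sets $T_{\mathbb{P}},T_{\mathbb{V}}$; getting the interaction of \ref{ax:pprolif}, \ref{ax:vvar} and stability exactly right is where the real bookkeeping lies. Once the $\sigma_j$ are pinned down, the lower-bound step and the final contradiction are essentially the original argument transcribed into equational logic.
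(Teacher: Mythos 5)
Your proposal is correct and follows essentially the same route as the paper's proof: the same witness term $v(p(1,2),p(3,4))$, the same three renamings, and the same use of essential uniqueness together with (P3), (V2) and (V3) to first bound each inner $\mathbb{V}$-term by a single variable and then extract a contradiction from the third substitution. The only differences are cosmetic --- you sharpen ``at most one variable'' to ``provably equal to a variable'' and locate the final contradiction in the consistency of $\mathbb{P}$ (an orphaned variable in a $\mathbb{P}$-equation), which is exactly where the paper's own last step also lands; the one small caveat is that your parenthetical claim that (P3) alone forces $\mathbb{P}$ to be consistent fails in the degenerate case of an inconsistent theory with a constant (take $T_{\mathbb{P}}$ to be the closed terms), but the paper's proof silently assumes consistency of $\mathbb{P}$ at the same point, so this is a shared and harmless blemish rather than a gap in your argument.
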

\begin{rem}
Properties~\ref{ax:pprolif},~\ref{ax:vvar}, and~\ref{ax:vbinary} are constraints on the variables appearing in certain terms, which are needed for the ``inverse image'' part of \refcounter~\ref{counter:Plotkin-original}. Property~\ref{ax:pprolif} states that any term equal to the special binary term $p$ can have at most two free variables. Property~\ref{ax:vvar} states that any term equal to a variable can only contain that variable, and property~\ref{ax:vbinary} states that any term equal to the special binary term $v$ must have at least two free variables. Notice that the upper/lower bound principle from the original argument is reflected in these conditions.
\end{rem}
\begin{proof}
  Assume by way of a contradiction that a composite theory~$\mathbb{U}$ of~$\mathbb{P}$ after~$\mathbb{V}$ exists.
  Consider the term $v(p(1,2),p(3,4))$. Then as~$\mathbb{U}$ is composite, there exist~$X \vdash p'$ and~$\Gamma \vdash v'_x$ for each $x \in X$ such that:
  \begin{equation}%
    \label{eq:plotkin-key-step}
    \Gamma \vdash v(p(1,2),p(3,4)) \theoryeq{U} p'[v'_x/x].
  \end{equation}
  Without loss of generality, we may assume that $p' \in T_{\mathbb{P}}$ and $v'_x \in T_{\mathbb{S}}$ by universality.
  Define partial function~$f_1$ as follows:
  \begin{align*}
    f_1(1) &= f_1(3) = 1\\
    f_1(2) &= f_1(4) = 2
  \end{align*}
  Then, using this substitution of variables and assumption~\ref{ax:videm}:
  \begin{align*}
    & \Gamma \vdash v(p(1,2),p(3,4)) \theoryeq{U} p'[v'_x/x] \\
    \Rightarrow \;& \reason{\text{substitution}} \\
     & \Gamma \vdash v(p(1,2),p(3,4))[f_1] \theoryeq{U} p'[v'_x[f_1]/x] \\
    \Rightarrow \; & \reason{\text{applying the substitution on the left hand side}} \\
     & \Gamma \vdash v(p(1,2),p(1,2)) \theoryeq{U} p'[v'_x[f_1]/x] \\
    \Rightarrow \; & \reason{\text{assumption~\ref{ax:videm}: } v \text{ is idempotent}} \\
     & \Gamma \vdash p(1,2) \theoryeq{U} p'[v'_x[f_1]/x].
  \end{align*}
  We now have two separated terms that are equal to each other, so we can apply essential uniqueness. This gives us that there are functions $g_1: 2 \rightarrow Z$ and $g_2: X \rightarrow Z$ such that:
  \begin{align}
    p[g_1] \theoryeq{P} p'[g_2] & \tag{Thm~\ref{thm:essentialuniqueness},~\ref{ax:essuniq1}}\\
    g_1(1) \neq g_1(2) & \tag{Thm~\ref{thm:essentialuniqueness},~\ref{ax:essuniq2}} \\
    g_1(1) = g_2(x) & \Leftrightarrow 1 \theoryeq{V} v'_x[f_1] \tag{Thm~\ref{thm:essentialuniqueness},~\ref{ax:essuniq4}} \\
    g_1(2) = g_2(x) & \Leftrightarrow 2 \theoryeq{V} v'_x[f_1] \tag{Thm~\ref{thm:essentialuniqueness},~\ref{ax:essuniq4}}.
  \end{align}
  As~$T_\mathbb{P}$ is stable, any variable renaming of~$p'$ is also in~$T_{\mathbb{P}}$. And so, by assumption~\ref{ax:pprolif} we must have for all $x \in \var(p')$: $g_2(x) = g_1(1)$ or $g_2(x) = g_1(2)$, which means for each~$v'_x$:
  \[
    \Gamma \vdash 1 \theoryeq{V} v'_x[f_1] \quad\vee\quad \Gamma \vdash 2 \theoryeq{V} v'_x[f_1].
  \]
  Then, using assumption~\ref{ax:vvar} and the preimage of $f_1$, we conclude that for all~$v'_x$:
  \begin{equation}%
    \label{eq:condition1}
    \{1,3\} \vdash v'_x \quad\vee\quad \{2,4\} \vdash v'_x.
  \end{equation}

  We use the same strategy again, but with a different substitution. Define a second partial function~$f_2$ as follows:
  \begin{align*}
    f_2(1) &= f_2(4) = 1\\
    f_2(2) &= f_2(3) = 2
  \end{align*}
  Using this substitution and assumptions~\ref{ax:pcomm} and~\ref{ax:videm}:
  \begin{align*}
    & \Gamma \vdash v(p(1,2),p(3,4)) \theoryeq{U} p'[v'_x/x] \\
    \Rightarrow \; & \reason{\text{substitution}} \\
     & \Gamma \vdash v(p(1,2),p(3,4))[f_2] \theoryeq{U} p'[v'_x[f_2]/x] \\
    \Rightarrow \; & \reason{\text{applying the substitution on the left hand side}} \\
     & \Gamma \vdash v(p(1,2),p(2,1)) \theoryeq{U} p'[v'_x[f_2]/x] \\
    \Rightarrow \; & \reason{\text{assumption~\ref{ax:pcomm}: } p \text{ is commutative}} \\
    & \Gamma \vdash v(p(1,2),p(1,2)) \theoryeq{U} p'[v'_x[f_2]/x] \\
    \Rightarrow \; & \reason{\text{assumption~\ref{ax:videm}: } v \text{ is idempotent}} \\
    & \Gamma \vdash p(1,2) \theoryeq{U} p'[v'_x[f_2]/x].
  \end{align*}
  Again we have two separated terms that are equal, so we apply essential uniqueness. This gives us that there are functions $g_3: 2 \rightarrow Z$ and $g_4: X \rightarrow Z$ such that:
  \begin{align}
    p[g_3] \theoryeq{P} p'[g_4] & \tag{Thm~\ref{thm:essentialuniqueness},~\ref{ax:essuniq1}}\\
    g_3(1) \neq g_3(2) & \tag{Thm~\ref{thm:essentialuniqueness},~\ref{ax:essuniq2}} \\
    g_3(1) = g_4(x) & \Leftrightarrow 1 \theoryeq{V} v'_x[f_2] \tag{Thm~\ref{thm:essentialuniqueness},~\ref{ax:essuniq4}} \\
    g_3(2) = g_4(x) & \Leftrightarrow 2 \theoryeq{V} v'_x[f_2] \tag{Thm~\ref{thm:essentialuniqueness},~\ref{ax:essuniq4}}.
  \end{align}
  As~$T_\mathbb{P}$ is stable, any variable renaming of~$p'$ is also in~$T_{\mathbb{P}}$. Therefore, by assumption~\ref{ax:pprolif} we must have for all $x \in \var(p')$: $g_4(x) = g_3(1)$ or $g_4(x) = g_3(2)$, which means for each~$v'_x$:
  \[
    \Gamma \vdash 1 \theoryeq{V} v'_x[f_2] \quad\vee\quad \Gamma \vdash 2 \theoryeq{V} v'_x[f_2].
  \]
  So, using assumption~\ref{ax:vvar} and the preimage of $f_2$, we know that for all~$v'_x$:
  \begin{equation}%
    \label{eq:condition2}
    \{1,4\} \vdash v'_x \qquad\vee\qquad \{2,3\} \vdash v'_x.
  \end{equation}

  We combine the two conclusions~\eqref{eq:condition1} and~\eqref{eq:condition2}, yielding that for all~$v'_x$:
  \begin{align*}
  & (\{1,3\} \vdash v'_x \;\vee\; \{2,4\} \vdash v'_x) \;\wedge\; (\{1,4\} \vdash v'_x \;\vee\; \{2,3\} \vdash v'_x) \\
  \Rightarrow \;& \reason{\text{distributing } \wedge \text{ over } \vee} \\
  & (\{1,3\} \vdash v'_x \;\wedge\; \{1,4\} \vdash v'_x) \;\vee\; (\{1,3\} \vdash v'_x \;\wedge\; \{2,3\} \vdash v'_x) \;\vee \\
  & (\{2,4\} \vdash v'_x \;\wedge\; \{1,4\} \vdash v'_x) \;\vee\; (\{2,4\} \vdash v'_x \;\wedge\; \{2,3\} \vdash v'_x) \\
   \Rightarrow \; & \reason{\text{this is only possible if}} \\
   & \{1\} \vdash v'_x \;\vee\; \{3\} \vdash v'_x \;\vee\; \{4\} \vdash v'_x \;\vee\; \{2\} \vdash v'_x.
  \end{align*}
  In other words:
  \begin{equation}%
    \label{eq:condition3}
    \bigvee_{n \in 4} \{n\} \vdash v'_x,
  \end{equation}
  that is, each $v'_x$ can have at most one variable.

  To get a contradiction, we will now find a $v'_x$ that must have at least two variables. This is where we need assumption~\ref{ax:vbinary}. We make one more substitution. Define a third partial function~$f_3$ as follows:
  \begin{align*}
    f_3(1) &= f_3(2) = 1\\
    f_3(3) &= f_3(4) = 2
  \end{align*}
  Using this final substitution and~\ref{ax:pidem}:
  \begin{align*}
    & \Gamma \vdash v(p(1,2),p(3,4)) \theoryeq{U} p'[v'_x/x] \\
    \Rightarrow \; & \reason{\text{substitution}} \\
    & \Gamma \vdash v(p(1,2),p(3,4))[f_3] \theoryeq{U} p'[v'_x[f_3]/x] \\
    \Rightarrow \; & \reason{\text{applying the substitution on the left hand side}} \\
    & \Gamma \vdash v(p(1,1),p(2,2)) \theoryeq{U} p'[v'_x[f_3]/x] \\
    \Rightarrow \; & \reason{\text{assumption~\ref{ax:pidem}: } p \text{ is idempotent}} \\
    & \Gamma \vdash v(1,2) \theoryeq{U} p'[v'_x[f_3]/x] \\
    \Rightarrow \; & \reason{\text{making it more obvious that } v \text{ is a separated term}} \\
    & \Gamma \vdash 5[v(1,2)/5] \theoryeq{U} p'[v'_x[f_3]/x].
  \end{align*}
  Again, we arrive at an equality between two separated terms, allowing us to apply essential uniqueness. This gives us that there are functions $g_5: \{5\} \rightarrow Z$ and $g_6: X \rightarrow Z$ such that:
  \begin{align}
    5[g_5] \theoryeq{P} p'[g_6] & \tag{Thm~\ref{thm:essentialuniqueness},~\ref{ax:essuniq1}}\\
    g_5(5) = g_6(x) & \Leftrightarrow v(1,2) \theoryeq{V} v'_x[f_3] \tag{Thm~\ref{thm:essentialuniqueness},~\ref{ax:essuniq4}}.
  \end{align}
  As $\mathbb{P}$ is consistent, the variable $5[g_5]$ must appear in $p'[g_6]$. If it did not, we could define a substitution $h$ mapping $5[g_5]$ to any other variable $y$, and then conclude $y \theoryeq{P} 5[g_5][h] \theoryeq{P} p'[g_6][h] \theoryeq{P} p'[g_6] \theoryeq{P} 5[g_5]$, which proves all variables are equal to each other in $\bb{P}$, which means $\bb{P}$ is inconsistent. So by consistency of $\bb{P}$, the variable $5[g_5]$ must appear in $p'[g_6]$. Hence, there is an $x_0$ such that $g_5(5) = g_6(x_0)$. And so:
  \[
    \Gamma \vdash v(1,2) \theoryeq{V} v'_{x_0}[f_3].
  \]
  As~$T_{\mathbb{S}}$ is stable, this $v'_{x_0}[f_3]$ is an element of~$T_{\mathbb{V}}$, and so by~\ref{ax:vbinary}, $v'_{x_0}[f_3]$ must have at least two variables. Which means $v'_{x_0}$ must have at least two variables. This contradicts \refeqn~\eqref{eq:condition3}, which claims that each $v'_x$ can have at most one variable. Therefore the assumed composite theory cannot exist.
\end{proof}

The following corollary reflects our real interest in monads:
\begin{cor}%
  \label{cor:plotkin-1}
  If monads~$P$ and~$V$ have presentations~$\mathbb{P}$ and~$\mathbb{V}$ such that the conditions of \refthm~\ref{thm:plotkin-1} can be satisfied, then there is no distributive law of type~$V \circ P \Rightarrow P \circ V$.
\end{cor}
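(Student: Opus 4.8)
The plan is to derive this immediately from \refthm~\ref{thm:plotkin-1} together with the Pir\'og--Staton correspondence, \refthm~\ref{thm:distlaw-vs-compositetheory}. The only real content is matching up the orientations correctly: a distributive law of type $S \circ T \Rightarrow T \circ S$ corresponds to a composite theory of $\bb{T}$ \emph{after} $\bb{S}$, whereas \refthm~\ref{thm:plotkin-1} rules out composite theories of $\mathbb{P}$ after $\mathbb{V}$. So I would instantiate $S := V$ and $T := P$, and $\bb{S} := \mathbb{V}$, $\bb{T} := \mathbb{P}$, which are presentations of $V$ and $P$ respectively by hypothesis.

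Concretely: First, suppose for contradiction that a distributive law $\lambda : V \circ P \Rightarrow P \circ V$ exists. By \refthm~\ref{thm:distlaw-vs-compositetheory}, applied with the monads $V$ (presented by $\mathbb{V}$) and $P$ (presented by $\mathbb{P}$), the existence of such a $\lambda$ is equivalent to the existence of a composite theory of $\mathbb{P}$ after $\mathbb{V}$. But the hypothesis states precisely that the conditions of \refthm~\ref{thm:plotkin-1} are satisfied by $\mathbb{P}$ and $\mathbb{V}$ (with appropriate stable universal sets $T_{\mathbb{P}}$, $T_{\mathbb{V}}$ and binary terms $p$, $v$), and that theorem concludes there is no composite theory of $\mathbb{P}$ after $\mathbb{V}$. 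This contradiction shows no distributive law of type $V \circ P \Rightarrow P \circ V$ can exist.

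There is essentially no obstacle here; the proof is a one-line deduction once the variance bookkeeping is done. The only point requiring the slightest care is to state that the monads really are presented (in the sense of \refprop~\ref{prop:presentation}) by the given theories, so that \refthm~\ref{thm:distlaw-vs-compositetheory} applies verbatim, and to note that \refthm~\ref{thm:plotkin-1} is stated purely at the level of theories, so no additional hypotheses on $P$ or $V$ beyond having such presentations are needed.
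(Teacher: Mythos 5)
Your proposal is correct and is exactly the intended argument: the paper states this corollary without proof precisely because it follows immediately from \refthm~\ref{thm:plotkin-1} via \refthm~\ref{thm:distlaw-vs-compositetheory}, with the orientation matching ($S := V$, $T := P$, so a law $V \circ P \Rightarrow P \circ V$ corresponds to a composite of $\mathbb{P}$ after $\mathbb{V}$) done just as you describe. No gaps.
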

The subsequent theorems have similar corollaries, which we will not state explicitly.
\begin{exa}[Powerset and Distribution Monad]
  Consider the terms $1 \vee 2$ and $1 +^{\frac{1}{2}} 2$ in the theories representing the powerset and distribution monads of~\refexs~\ref{ex:monoids} and~\ref{ex:distribution-monad-presentation}.
  %Consider the term $1 \vee 2$ in the theory of join semilattices, corresponding to the powerset monad, and the term $1 +^{\frac{1}{2}} 2$ in the theory presenting the distribution monad of~\refex~\ref{ex:distribution-monad-presentation}.
   Since both of these terms are binary, commutative, and idempotent, and the remaining axioms are satisfied, \refthm~\ref{thm:plotkin-1} captures the known results that there are no distributive laws of type~$D \circ P \Rightarrow P \circ D$~\cite{VaraccaWinskel2006}, $P \circ P \Rightarrow P \circ P$~\cite{KlinSalamanca2018}, or $P \circ D \Rightarrow D \circ P$~\cite[stated without proof]{Varacca2003}.
  In addition, \refthm~\ref{thm:plotkin-1} yields the new result that there is no distributive law of type~$D \circ D \Rightarrow D \circ D$, completing the picture for these monads.
\end{exa}
\begin{exa}[Powerset and Distribution Monad Again]
  We can also consider the distribution monad to be presented by binary operations~$+^p$ with~$p$ in the~\emph{closed} interval~$[0,1]$, and in fact this is the more common formulation. In this case, \refthm~\ref{thm:plotkin-1} can still be directly applied, without having to move to the more parsimonious presentation. We simply note that the terms not involving the operations~$+^1$ and~$+^0$ form a stable universal set satisfying the required axioms. The results discussed in the previous example can then be recovered using the conventional presentation of the distribution monad.
\end{exa}
\begin{nonexample}[Reader Monad]
It is well known that the reader monad distributes over itself. Looking at the presentation of the reader monad given in \refex~\ref{ex:reader-monad-presentation}, we see that although it has idempotent terms, there is no commutative term and hence \refthm~\ref{thm:plotkin-1} does not apply.
\end{nonexample}
A natural question to ask with regard to \refthm~\ref{thm:plotkin-1} is whether the choice of~\emph{binary} terms for both~$p$ and~$v$ is necessary. We thank Prakash Panangaden for posing this question during an informal presentation of an earlier version of this work~\cite{Panangaden2018}. The answer is that we can generalize to terms with any arities strictly greater than one.  Before we prove this more general statement, we introduce a lemma that is central to establishing the upper bound part of the argument.
%\begin{restatable}{lemma}{filterlemma}
\begin{lem}%
  \label{lem:filter}
  Let~$n,m$ be strictly positive natural numbers, and~$\sigma$ a fixed-point free permutation of~$\{1,\dots,m\}$. For distinct variables~$a^j_i$, $1 \leq i \leq m$, $1 \leq j \leq n$, the sets:
  \begin{align*}
    &\{ a_{i_1}^1,a_{i_1}^2, a_{i_1}^3, \ldots, a_{i_1}^n \}\\
    &\{ a_{i_2}^1,a_{\sigma (i_2)}^2, a_{i_2}^3, \ldots, a_{i_2}^n \}\\
    &\qquad \quad \;\; \vdots\\
    &\{ a_{i_n}^1,a_{i_n}^2, a_{i_n}^3, \ldots, a_{\sigma (i_n)}^n \}
  \end{align*}
  have at most one common element. Here, each $i_k$ is an element of $\{i \mid 1 \leq i \leq m\}$, not necessarily unique.
\end{lem}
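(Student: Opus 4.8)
The plan is to argue by contradiction: suppose two distinct variables $u$ and $w$ both lie in all $n$ of the listed sets. Each set is, roughly, a ``row'' $\{a^1_{i_k}, a^2_{i_k}, \dots, a^n_{i_k}\}$ in which exactly one entry — the one in column $k$ — has had its lower index hit by $\sigma$. Since the $a^j_i$ are all distinct, membership of a fixed variable $a^{j_0}_{\ell_0}$ in row $k$ forces $j_0$ to be the unique column and $\ell_0$ to be determined by $i_k$ and $\sigma$: if $j_0 = k$ then $\ell_0 = \sigma(i_k)$, and if $j_0 \neq k$ then $\ell_0 = i_k$. The first step is therefore to record this ``column-and-index determination'' precisely: for each row $k$, knowing which variable of that row equals $u$ pins down a column index $c_k \in \{1,\dots,n\}$ and forces a relation between $i_k$ and $\sigma$; similarly for $w$ we get a column index $c'_k$.

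Next I would exploit that $u \neq w$ lie in the \emph{same} row $k$: two distinct variables of the row $\{a^1_{i_k},\dots,a^n_{i_k}\}$ must sit in two different columns, so $c_k \neq c'_k$ for every $k$. Now comes the key combinatorial squeeze. The special column of row $k$ is column $k$ itself. For a given row $k$, at most one of $u,w$ can occupy column $k$; say $u$ occupies column $k$ in row $k$ — then $u = a^k_{\sigma(i_k)}$, whereas $w$ occupies some column $c'_k \neq k$ in row $k$, so $w = a^{c'_k}_{i_k}$. But $u$ must also lie in row $c'_k$ (it lies in every row), and in row $c'_k$ the special column is $c'_k$; since $u$'s variable is $a^k_{\sigma(i_k)}$ with superscript $k \neq c'_k$, the non-special case applies, giving $\sigma(i_k) = i_{c'_k}$. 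Running the symmetric argument with the roles of rows tracked carefully, and using that $\sigma$ is fixed-point free (so $\sigma(i_k) \neq i_k$, which is what distinguishes the ``special'' slot from the others and prevents the indices from collapsing consistently), one derives a contradiction — the forced chain of equalities between the $i_k$'s and their $\sigma$-images cannot be simultaneously satisfied.

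The main obstacle is bookkeeping: there are $n$ rows, and for each we must track which of $u,w$ falls in the special column and which relation on the lower indices this imposes, then show no global assignment is consistent. I expect the cleanest formulation is to define, for each row $k$, the pair $(c_k, c'_k)$ of columns occupied by $u$ and $w$, note $c_k \neq c'_k$ and that at most one of them equals $k$, split into the cases ``$c_k = k$'', ``$c'_k = k$'', and ``neither equals $k$'', and in each case read off the constraint on $i_k$ versus $\sigma(i_k)$ from the positions of $u$ and $w$ in the \emph{other} rows. Fixed-point-freeness of $\sigma$ is used exactly to prevent these constraints from degenerating into a satisfiable system. A low-tech alternative, if the general $n$ argument proves unwieldy, is to observe that it suffices to intersect just the first row with any one other row $k$ in which a different column is special — i.e.\ reduce to $n=2$ after choosing two informative rows — but I would first try the direct argument above.
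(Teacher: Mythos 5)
There is a genuine gap, and it begins with a misreading of the statement. You treat every row as having exactly one $\sigma$-twisted entry, ``the one in column $k$'', including the first row. But the first set in Lemma~\ref{lem:filter} is $\{a^1_{i_1},\dots,a^n_{i_1}\}$ with \emph{no} occurrence of $\sigma$; only rows $2,\dots,n$ are twisted. This is not cosmetic: if row $1$ were twisted in column $1$ as your setup assumes, the lemma would be false. Take $n=m=2$, $\sigma$ the transposition, $i_1=1$, $i_2=2$: the two twisted rows are both $\{a^1_2,a^2_1\}$ and share two elements. Consequently the contradiction you promise at the end (``the forced chain of equalities between the $i_k$'s and their $\sigma$-images cannot be simultaneously satisfied'') does not exist under your reading --- that constraint system \emph{is} satisfiable --- and this is exactly the step you assert without carrying out. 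Your fallback is also insufficient as stated: even with the correct reading, intersecting the plain first row with one \emph{arbitrary} other row $k$ can leave $n-1$ common elements (take $i_k=i_1$: the two rows then agree in every column except column $k$), so for $n\geq 3$ two rows chosen blindly do not suffice.

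The missing idea is precisely the untwisted first row. It forces any two common elements to have the form $u=a^{j_0}_{i_1}$ and $w=a^{j_1}_{i_1}$ with $j_0\neq j_1$; choosing whichever of $j_0,j_1$ is $\geq 2$, say $j_0$, and reading membership in row $j_0$ gives $i_1=\sigma(i_{j_0})$ from $u$ (which sits in the twisted column) and $i_1=i_{j_0}$ from $w$ (which sits in an untwisted column), contradicting fixed-point-freeness. So your overall strategy --- suppose two common elements and extract index constraints row by row --- is repairable, but only by using the special role of row $1$ and by selecting the second row as a function of where $u$ and $w$ sit, not arbitrarily. The paper takes a different route: induction on $n$, comparing the first two rows and either concluding directly when $i_1=\sigma(i_2)$ (those rows then meet in at most the single element $a^2_{i_1}$) or deleting row $2$ and column $2$ and invoking the induction hypothesis.
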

%\end{restatable}
\begin{proof}
We proceed by induction on~$n$. The base case $n = 1$ is trivially true. For $n = n' + 1$, we consider the first two rows of our table of sets. There are two cases.
  \begin{enumerate}
    \item If~$i_1 = \sigma(i_2)$, then the first two rows can only agree at their second element, because each $a_i^j$ is distinct, and from the assumption that~$\sigma$ is fixed-point free we know that $i_1$ is different from $i_2$ if $\sigma(i_2) = i_1$. The claim follows directly from this observation.
    \item If~$i_1 \neq \sigma(i_2)$ then the first two rows disagree in the second column. Therefore the elements common to all the sets cannot appear in the second column.
      We then remove both row and column 2, and invoke the induction hypothesis for~$n = n'$.
      \qedhere
  \end{enumerate}
\end{proof}

\noindent
We then get a more general variant of~\refthm~\ref{thm:plotkin-1}.
\begin{thm}%
  \label{thm:plotkin-2}
  Let~$\mathbb{P}$ and~$\mathbb{V}$ be two algebraic theories, $T_{\mathbb{P}}$ a stable universal set of~$\mathbb{P}$-terms, and~$T_{\mathbb{V}}$ a stable universal set of $\mathbb{V}$-terms. If there are terms:
  \[
    m \vdash_{\mathbb{P}} p \qquad\text{ and }\qquad n \vdash_{\mathbb{V}} v
  \]
  such that:
  \begin{enumerate}[label=(P\arabic*)]
    \setcounter{enumi}{3}
  \item\label{ax:pcomm-2} $p$ is stable under a fixed-point free permutation~$\sigma$:
    \[
      m \vdash p \theoryeq{P} p[\sigma]
    \]
  \item\label{ax:pidem-2} $p$ is idempotent:
    \[
      1 \vdash p[1/i] \theoryeq{P} 1
    \]
  \item\label{ax:pprolif-2} For all~$p' \in T_{\mathbb{P}}$:
    \[
      \Gamma \vdash p \theoryeq{P} p' \quad\Rightarrow\quad m \vdash p'
    \]
  \end{enumerate}
  \begin{enumerate}[label=(V\arabic*)]
    \setcounter{enumi}{3}
  \item\label{ax:videm-2} $v$ is idempotent:
    \[
      1 \vdash v[1/i] \theoryeq{V} 1
    \]
  \item\label{ax:vvar-2} For all~$v' \in T_{\mathbb{V}}$, and any variable~$i$:
    \[
      \Gamma \vdash i \theoryeq{V} v' \quad\Rightarrow\quad \{ i \} \vdash v'
    \]
  \item\label{ax:vbinary-2} For all~$v' \in T_{\mathbb{V}}$:
    \[
      \Gamma \vdash v \theoryeq{V} v' \quad\Rightarrow\quad \neg \left(\bigvee_{i \in \Gamma} \{ i \} \vdash v' \right)
    \]
  \end{enumerate}
  Then there is no composite theory of~$\mathbb{P}$ after~$\mathbb{V}$.
\end{thm}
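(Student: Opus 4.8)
The plan is to follow the proof of \refthm~\ref{thm:plotkin-1} almost verbatim, scaling up the single witness term and replacing the ad hoc ``distribute $\wedge$ over $\vee$'' bookkeeping by an appeal to \reflem~\ref{lem:filter}. Assume a composite theory $\mathbb{U}$ of $\mathbb{P}$ after $\mathbb{V}$ exists, and consider the term
\[
  v\bigl(p(a^1_1,\dots,a^1_m),\; p(a^2_1,\dots,a^2_m),\;\dots,\; p(a^n_1,\dots,a^n_m)\bigr)
\]
on $nm$ fresh variables $a^j_i$. Since $\mathbb{U}$ is composite and $T_\mathbb{P}, T_\mathbb{V}$ are universal, this term is $\mathbb{U}$-equal to a separated term $p'[v'_x/x]$ with $p' \in T_\mathbb{P}$ and each $v'_x \in T_\mathbb{V}$. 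As before, the goal is an upper bound of one on $\#\var(v'_x)$ for $x \in \var(p')$, followed by a lower bound of two on some $v'_{x_0}$.

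For the upper bound I would use $n$ variable renamings $f_1,\dots,f_n$: $f_1$ sends $a^j_i \mapsto c_i$ (fresh $c_i$), and for $k \geq 2$, $f_k$ sends $a^j_i \mapsto c_i$ for $j \neq k$ and $a^k_i \mapsto c_{\sigma(i)}$. Under $f_k$ every argument of $v$ becomes provably $\mathbb{P}$-equal to $p(c_1,\dots,c_m)$ — literally for $j \neq k$, and via \ref{ax:pcomm-2} for $j = k$ — so \ref{ax:videm-2} collapses the left-hand side to $p(c_1,\dots,c_m)$. Essential uniqueness (\refthm~\ref{thm:essentialuniqueness}) applied to $p(c_1,\dots,c_m) \theoryeq{U} p'[v'_x[f_k]/x]$ then yields $g_k \colon \{1,\dots,m\} \to Z$ and $g'_k \colon X \to Z$ with $p[g_k] \theoryeq{P} p'[g'_k]$, with $g_k$ injective (consistency of $\mathbb{V}$), and with $g_k(i) = g'_k(x) \Leftrightarrow c_i \theoryeq{V} v'_x[f_k]$. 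Stability of $T_\mathbb{P}$ together with \ref{ax:pprolif-2}, and consistency of $\mathbb{P}$ (which forces $p$, hence every injective renaming of it, to genuinely involve all $m$ of its variables), pins $g'_k(\var(p'))$ down to $g_k(\{1,\dots,m\})$; so for each $x \in \var(p')$ some $c_i$ satisfies $c_i \theoryeq{V} v'_x[f_k]$, and \ref{ax:vvar-2} gives $\var(v'_x) \subseteq f_k^{-1}(c_i)$. The point is that $f_1^{-1}(c_i) = \{a^1_i,\dots,a^n_i\}$ and, for $k \geq 2$, $f_k^{-1}(c_i) = \{a^1_i,\dots,a^{k-1}_i,\,a^k_{\sigma^{-1}(i)},\,a^{k+1}_i,\dots,a^n_i\}$ are exactly the $n$ rows of \reflem~\ref{lem:filter} for the fixed-point-free permutation $\sigma^{-1}$, so as $k$ ranges over $1,\dots,n$ their intersection has at most one element. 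Hence $\#\var(v'_x) \leq 1$ for every $x \in \var(p')$.

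For the lower bound I would apply one further renaming $f_0 \colon a^j_i \mapsto b_j$ (fresh $b_j$): by \ref{ax:pidem-2} each copy of $p$ collapses to the variable $b_j$, so the left-hand side becomes $v(b_1,\dots,b_n)$, separated with trivial $\mathbb{V}$-component, i.e.\ $z[v(b_1,\dots,b_n)/z]$. Essential uniqueness applied to $z[v(b_1,\dots,b_n)/z] \theoryeq{U} p'[v'_x[f_0]/x]$, together with consistency of $\mathbb{P}$ (which forces the single variable $z$ to occur in $p'$ after substitution), yields $x_0 \in \var(p')$ with $v'_{x_0}[f_0] \theoryeq{V} v(b_1,\dots,b_n)$. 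Since $v'_{x_0}[f_0] \in T_\mathbb{V}$ by stability, \ref{ax:vbinary-2} forces it — and hence $v'_{x_0}$ — to use at least two variables, contradicting the upper bound. So no composite theory of $\mathbb{P}$ after $\mathbb{V}$ exists.

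The conceptual content lies entirely in the choice of the $n+1$ renamings and in \reflem~\ref{lem:filter}; the step most prone to going wrong — and the one I would be most careful with — is the variable-counting bookkeeping: checking injectivity of each $g_k$, verifying from the idempotence and $\sigma$-stability axioms together with consistency that $p$ and $v$ genuinely depend on all their declared variables, and confirming that a renaming of $p$ and a provably $\mathbb{P}$-equal term must share the same variable set, since these are precisely the facts that let \ref{ax:pprolif-2} force $g'_k(\var(p')) = g_k(\{1,\dots,m\})$ and thereby feed \reflem~\ref{lem:filter}.
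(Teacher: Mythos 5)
Your proof is correct and takes essentially the same route as the paper's: the same witness term, the same $n$ variable renamings (one identity-like, the rest twisting a single copy of $p$ by $\sigma$) feeding \reflem~\ref{lem:filter} to cap each $v'_x$ at one variable, and the same final collapse via \ref{ax:pidem-2} to produce a $v'_{x_0}$ that \ref{ax:vbinary-2} forces to have at least two variables. The only quibble is your parenthetical claim that consistency forces $p$ to genuinely involve all $m$ of its declared variables --- that is neither true in general (a context $m \vdash p$ only bounds $\var(p)$) nor needed, since stability of $T_{\mathbb{P}}$, injectivity of $g_k$, and \ref{ax:pprolif-2} already pin $g'_k(\var(p'))$ inside $g_k(\{1,\dots,m\})$.
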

\begin{rem}
  The required properties are generalizations of the binary conditions in \refthm~\ref{thm:plotkin-1}. Most are straightforward, but axiom~\ref{ax:pcomm-2}, the analogue of binary commutativity, is perhaps slightly surprising. Here we only require stability under a single fixed-point free permutation.
\end{rem}
\begin{proof}
  Assume by way of a contradiction that a composite theory~$\mathbb{U}$ of~$\mathbb{P}$ after~$\mathbb{V}$ exists. Let
  \[
    a_i^j, 1 \leq i \leq m, 1 \leq j \leq n
  \]
  denote distinct variables. We consider the term
  \[
  v(p(a_1^1, \ldots, a_m^1), \ldots, p(a_1^n, \ldots, a_m^n)).
  \]
  Then as~$\mathbb{U}$ is composite, the separation axiom of composite theories tells us that there exist~$X \vdash p'$ and~$\Gamma \vdash v'_x$ for each $x \in X$ such that:
  \[
    \Gamma \vdash v(p(a_1^1, \ldots, a_m^1), \ldots, p(a_1^n, \ldots, a_m^n)) \theoryeq{U} p'[v'_x/x].
  \]
  Without loss of generality, we may assume~$p' \in T_{\mathbb{P}}$ and~$v'_x \in T_{\mathbb{V}}$ by universality.

  We use the same approach as in the proof of \refthm~\ref{thm:plotkin-1}, using substitutions to bound the variables that can appear in each of the $v'_x$.
  Define substitution~$f_1$ as follows:
  \[
    f_1(a_i^j) = a_i^1.
  \]
  We then have:
  \[
    \Gamma \vdash v(p(a_1^1, \ldots, a_m^1), \ldots, p(a_1^1, \ldots, a_m^1)) \theoryeq{U} p'[v'_x[f_1]/x].
  \]
  By assumption~\ref{ax:videm-2}, idempotence of $v$:
  \[
    \Gamma\vdash p(a_1^1, \ldots, a_m^1) \theoryeq{U} p'[v'_x[f_1]/x].
  \]
  As~$T_\mathbb{P}$ is stable, any variable renaming of~$p'$ is also in~$T_{\mathbb{P}}$. Therefore, essential uniqueness tells us that there are substitutions $g_1$ and $g_2$ such that:
  \begin{align}
    p[g_1] & \theoryeq{P} p'[g_2]  \tag{Thm~\ref{thm:essentialuniqueness},~\ref{ax:essuniq1}}\\
    g_1(a_i^1) & \neq g_1(a_j^1) (i \neq j) \tag{Thm~\ref{thm:essentialuniqueness},~\ref{ax:essuniq2}} \\
    g_1(a_i^1) & = g_2(x)  \Leftrightarrow a_i^1 \theoryeq{V} v'_x[f_1] \tag{Thm~\ref{thm:essentialuniqueness},~\ref{ax:essuniq4}},
  \end{align}
  and~\ref{ax:pprolif-2} gives us that:
  \[
    \forall x \; \exists i:\; \Gamma \vdash a_i^1 \theoryeq{V} v'_x[f_1].
  \]
  Then by assumption~\ref{ax:vvar-2}:
  \[
  \forall x \; \exists i:\; \{a_i^1\} \vdash v'_x[f_1].
  \]
  So our first approximation of the variables appearing in $v'_x$ is:
  \begin{equation}\label{eq:filter-1}
    \forall x \; \exists i: \; \{ a_i^1, \ldots, a_i^n \} \vdash v'_x.
  \end{equation}
  Now we define a family of substitutions for~$2 \leq k \leq n$ as follows:
  \[
    f_k(a_i^j) =
    \begin{cases}
      a_{\sigma(i)}^k \text{ if } j = k\\
      a_i^k \text{ otherwise }.
    \end{cases}
  \]
  If we follow a similar argument as before, using essential uniqueness,~\ref{ax:pprolif-2},~\ref{ax:vvar-2}, and also exploiting assumption~\ref{ax:pcomm-2}, we conclude that:
  \[
  \forall x,k \; \exists i_k: \{a^k_{i_k}\} \vdash v'_x[f_k].
  \]
  And so:
  \begin{equation}%
    \label{eq:filter-2}
    \forall x,k \; \exists i_k: \; \{ a_{\sigma^{-1} (i_k)}^j \mid j = k \} \cup \{ a_{i_k}^j \mid j \neq k \} \vdash v'_x.
  \end{equation}
  Then we note that by \reflem~\ref{lem:filter}, conditions~\eqref{eq:filter-1} and~\eqref{eq:filter-2}:% that:
  \begin{equation}%
    \label{eq:filtered}
    \forall x \; \exists i,j: \; \{ a_i^j \} \vdash v'_x.
  \end{equation}
  This provides an upper bound on the number of variables appearing in the $v'_x$, just like the first two substitutions did in the proof of \refthm~\ref{thm:plotkin-1}. To finish the argument, we define another substitution:
  \[
    f_{n+1}(a_i^j) = a_1^j
  \]
  Applying this substitution:
  \[
    \Gamma \vdash v(p(a_1^1, \ldots, a_1^1), \ldots, p(a_1^n, \ldots, a_1^n)) \theoryeq{U} p'[v'_x[f_{n+1}] / x ].
  \]
  Using assumption~\ref{ax:pidem-2}:
  \[
    \Gamma \vdash v(a_1^1, \ldots, a_1^n) \theoryeq{U} p'[v'_x[f_{n+1}] / x ].
  \]
  By essential uniqueness and consistency:
  \[
    \exists x_0: \; \Gamma \vdash v(a_1^1, \ldots, a_1^n) \theoryeq{V} v'_{x_0}[f_{n+1}].
  \]
  As~$T_{\mathbb{V}}$ is stable, this $v'_x[f_{n+1}]$ is an element of~$T_{\mathbb{V}}$. And so, by assumption~\ref{ax:vbinary-2}, $v'_{x_0}$ must contain at least two variables, but this contradicts conclusion~\eqref{eq:filtered}, and so the assumed composite theory cannot exist.
\end{proof}

It is clear that the simpler \refthm~\ref{thm:plotkin-1} is a special case of \refthm~\ref{thm:plotkin-2}. Besides providing greater generality, the main point of \refthm~\ref{thm:plotkin-2} is that it clearly demonstrates that there is nothing special about binary terms. This further clarifies our understanding of what abstract properties make the original counterexample of Plotkin work. By moving to such a high level of abstraction it is also easier to see that our other methods, described in \refsecs~\ref{section:beyondPlotkin} and~\ref{sec:plus-over-times}, are not simply a further generalization of Plotkin's counterexample, as they make fundamentally different assumptions of the underlying algebraic theories.

\subsection{Concerning Commutativity}

In \refthm~\ref{thm:plotkin-1} we require the special term~$p$ to be commutative in order to establish that no composite theory exists. In \refthm~\ref{thm:plotkin-2} this commutativity was generalized to stability under the action of a fixed-point free permutation. This raises the question of whether commutativity-like axioms are essential to this type of proof. In fact, this is not the case, and a similar no-go theorem can be established under modified assumptions that make no use of commutativity.

\begin{thm}%
  \label{thm:plotkin-3}
  Let~$\mathbb{P}$ and~$\mathbb{V}$ be two algebraic theories, $T_{\mathbb{P}}$ a stable universal set of~$\mathbb{P}$-terms, and~$T_{\mathbb{V}}$ a stable universal set of $\mathbb{V}$-terms. If there are terms:
  \[
    2 \vdash_{\mathbb{P}} p \qquad\text{ and }\qquad 2 \vdash_{\mathbb{V}} v
  \]
  such that axioms~\ref{ax:pidem},~\ref{ax:pprolif},~\ref{ax:videm},~\ref{ax:vvar} and~\ref{ax:vbinary} hold, and:
  \begin{enumerate}[label=(P\arabic*)]
    \setcounter{enumi}{6}
  \item\label{ax:pvar-3} For all~$p' \in T_{\mathbb{P}}$, and any variable~$x$:
    \[
      \Gamma \vdash x \theoryeq{P} p' \quad\Rightarrow\quad \{x\} \vdash p'
    \]
  \item\label{ax:pbinary-3} For all~$p' \in T_{\mathbb{P}}$:
    \[
      \Gamma \vdash p(1,2) \theoryeq{P} p' \quad\Rightarrow\quad \neg (\emptyset \vdash p')
    \]
  \end{enumerate}
  Then there is no composite theory of~$\mathbb{P}$ after~$\mathbb{V}$.
\end{thm}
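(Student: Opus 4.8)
The plan is to mimic the proof of \refthm~\ref{thm:plotkin-1}, with the two new axioms \ref{ax:pvar-3} and \ref{ax:pbinary-3} taking over the role played there by commutativity of~$p$. First I would record some preliminary consequences of the hypotheses: that $\mathbb{P}$ and $\mathbb{V}$ are consistent, and --- using \ref{ax:pidem}, \ref{ax:pprolif}, \ref{ax:pvar-3}, \ref{ax:pbinary-3} on one side and \ref{ax:videm}, \ref{ax:vvar}, \ref{ax:vbinary} on the other --- that both $p$ and $v$ \emph{genuinely depend on both of their arguments}: no term with fewer than two variables is $\mathbb{P}$-equal to $p(1,2)$ or $\mathbb{V}$-equal to $v(1,2)$, so in particular $p$ and $v$ applied to two distinct variables are never provably equal to a single variable, and $v(1,3),v(1,4),v(2,4)$ are pairwise $\mathbb{V}$-inequivalent.

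Now assume, for contradiction, that a composite theory $\mathbb{U}$ of~$\mathbb{P}$ after~$\mathbb{V}$ exists, and analyse the term $v(p(1,2),p(3,4))$. By the composite property it is $\mathbb{U}$-equal to a separated term $p'[v'_x/x]$, and by universality we may take $p'\in T_{\mathbb{P}}$ and each $v'_x\in T_{\mathbb{V}}$. The substitution $f_1$ with $f_1(1)=f_1(3)=1$, $f_1(2)=f_1(4)=2$ collapses the left-hand side to $p(1,2)$ by idempotence of~$v$ (\ref{ax:videm}); applying essential uniqueness (\refthm~\ref{thm:essentialuniqueness}) together with \ref{ax:pprolif} (and using genuine binarity of~$p$, equivalently \ref{ax:pbinary-3}, to exclude the degenerate option) shows that $\var(p')$ splits into nonempty blocks $X_1\sqcup X_2$ with $p'[1/X_1,2/X_2]\theoryeq{P}p(1,2)$, and, via \ref{ax:vvar} and the preimages of~$f_1$, that $v'_x$ has variables contained in $\{1,3\}$ for $x\in X_1$ and in $\{2,4\}$ for $x\in X_2$.

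Next, the substitution $f_3$ with $f_3(1)=f_3(2)=1$, $f_3(3)=f_3(4)=2$ collapses the left-hand side to $v(1,2)$ by idempotence of~$p$ (\ref{ax:pidem}). Here is where \ref{ax:pvar-3} replaces commutativity: essential uniqueness now forces the whole $\mathbb{P}$-part to collapse to a single variable, and hence $v(1,2)\theoryeq{V}v'_x[f_3]$ for \emph{every} $x\in\var(p')$, not merely for one. Combining this with the variable bounds from the previous paragraph and \ref{ax:vbinary}, each $v'_x$ must have variable set exactly $\{1,3\}$ (when $x\in X_1$) or exactly $\{2,4\}$ (when $x\in X_2$), and in fact $v'_x\theoryeq{V}v(1,3)$, respectively $v'_x\theoryeq{V}v(2,4)$. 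Substituting the $\mathbb{V}$-terms $v(1,3)$ and $v(2,4)$ for the variables $1,2$ in $p'[1/X_1,2/X_2]\theoryeq{P}p(1,2)$ then yields
\[
v(p(1,2),p(3,4))\ \theoryeq{U}\ p'[v(1,3)/X_1,\,v(2,4)/X_2]\ \theoryeq{U}\ p(v(1,3),v(2,4)),
\]
so by essential uniqueness \emph{every} separated form of $v(p(1,2),p(3,4))$ has all its $\mathbb{V}$-components $\mathbb{V}$-equal to $v(1,3)$ or to $v(2,4)$.

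To finish, I would invoke the other direction of the composite structure: the multiplication axioms of a distributive law force the inner ($\mathbb{V}$) operations to distribute over the outer ($\mathbb{P}$) operations, so that $v(p(1,2),p(3,4))\theoryeq{U}p\big(p(v(1,3),v(2,3)),p(v(1,4),v(2,4))\big)$ --- a separated term among whose $\mathbb{V}$-components is $v(1,4)$. Since $v$ genuinely depends on both arguments, $v(1,4)$ is $\mathbb{V}$-equal neither to $v(1,3)$ nor to $v(2,4)$, which contradicts the conclusion of the previous paragraph; hence no composite theory of~$\mathbb{P}$ after~$\mathbb{V}$ exists, and \refthm~\ref{thm:plotkin-3} follows. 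The step I expect to be the main obstacle is this last one: one must verify, purely from the distributive-law axioms (or from the explicit shape of $E^{\lambda}$ in \refcor~\ref{cor:presentationforcomposite}), that \emph{every} composite --- not just one built from a ``standard'' distributive law --- makes $v$ distribute over~$p$ in this fashion. The $f_1$/$f_3$ essential-uniqueness bookkeeping also needs care, since, lacking commutativity, the term $v(p(1,2),p(3,4))$ admits only the single collapsing substitution $f_1$ rather than the pair $f_1,f_2$ exploited in \refthm~\ref{thm:plotkin-1}, so the ``crossed'' variable bound must be recovered instead from the distributivity identity above.
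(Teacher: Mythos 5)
Your argument tracks the paper's proof exactly up to the point where you establish that every $\mathbb{V}$-component of the separated form satisfies $v'_x \theoryeq{V} v(1,3)$ (for $x \in X_1$) or $v'_x \theoryeq{V} v(2,4)$ (for $x \in X_2$): the $f_1$ step, the collapse to $v(1,2)$ via \ref{ax:pidem} together with \ref{ax:pvar-3}, and the inverse-substitution bookkeeping all match. The gap is the final step, and it is not merely the ``main obstacle'' you flag but an unbridgeable one as stated: the multiplication axioms of a distributive law do \emph{not} force $v(p(1,2),p(3,4))$ to equal the fully distributed term $p(p(v(1,3),v(2,3)),p(v(1,4),v(2,4)))$ in a general composite theory. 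Such a times-over-plus identity is exactly what \refthm~\ref{thm:times-over-plus} establishes, and it needs substantially stronger hypotheses (constants acting as units for both binary terms, \ref{ax:svar0}, \ref{ax:tvar0}, and so on), none of which are assumed here --- indeed \refthm~\ref{thm:plotkin-3} is applied in the paper to the theory of a single idempotent binary operation, which has no constant at all. A composite theory only guarantees that $v(p(1,2),p(3,4))$ equals \emph{some} separated term; nothing pins that term down to the distributed shape, so you cannot manufacture a second separated form containing $v(1,4)$ this way.

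The paper closes the argument differently: it applies a substitution of \emph{terms} for variables, $f_5(1)=f_5(2)=p(1,2)$, $f_5(3)=p(1,1)$, $f_5(4)=p(2,2)$, chosen so that the left-hand side collapses to $p(1,2)$ by idempotence of both $p$ and $v$, while on the right each $v'_x[f_5]$ (being $v(p(1,2),p(1,1))$ or $v(p(1,2),p(2,2))$) can be re-separated by reusing the \emph{original} separation of $v(p(1,2),p(3,4))$ under the renamings $[1/3,1/4]$ and $[2/3,2/4]$. Essential uniqueness plus \ref{ax:pprolif} then forces the innermost $\mathbb{V}$-parts of the resulting nested term to be $\mathbb{V}$-equal to variables, which contradicts \ref{ax:vvar} and \ref{ax:vbinary} because those parts are $\mathbb{V}$-equal to $v(1,2)$ or $v(2,1)$; hence $X_1 = X_2 = \emptyset$, so $\var(p') = \emptyset$, contradicting \ref{ax:pbinary-3}. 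This self-referential term substitution is the missing idea; note that it genuinely uses substitution of terms (hence the monad multiplication), which is why this theorem, unlike \refthms~\ref{thm:plotkin-1} and~\ref{thm:plotkin-2}, does not extend to pointed endofunctors.
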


\begin{rem}
  Axiom~\ref{ax:pvar-3} is the same as axiom~\ref{ax:vvar}, but now we require it to hold for $\bb{P}$ as well as $\bb{V}$. Axiom~\ref{ax:pbinary-3} states that the term $p(1,2)$ is not equal to a constant in $\bb{P}$.
\end{rem}

\begin{proof}
  Assume by way of a contradiction that a composite theory~$\mathbb{U}$ of~$\mathbb{P}$ after~$\mathbb{V}$ exists. Then as~$\mathbb{U}$ is composite, there exist~$X \vdash p'$ and~$\Gamma \vdash v'_x$ for each variable $x \in \var(p')$ such that:
  \begin{equation}\label{eq:separation}
    \Gamma \vdash v(p(1,2),p(3,4)) \theoryeq{U} p'[v'_x/x].
  \end{equation}
  Without loss of generality, we may assume~$p' \in T_{\mathbb{P}}$ and all~$v'_x \in T_{\mathbb{V}}$ by universality.

   As in previous proofs, we use various substitutions to learn more about the terms $p'$ and all $v'_x$. The first substitution reduces the left hand side of Equation~\eqref{eq:separation} to a term involving just $p(1,2)$. This allows us to use essential uniqueness to get information about $v'_x$. Define substitution $f_1$ as follows:
  \begin{align*}
    f_1(1) &= f_1(3) = 1\\
    f_1(2) &= f_1(4) = 2
  \end{align*}
  Then from \refeqn~\eqref{eq:separation} we conclude:
  \begin{align*}
  & v(p(1,2),p(3,4)) \theoryeq{U} p'[v'_x/x] \\
  \Rightarrow\; & \reason{\text{Substitution}} \\
  & v(p(1,2)[f_1],p(3,4)[f_1]) \theoryeq{U} p'[v'_x[f_1]/x] \\
  \Rightarrow\; & \reason{\text{applying the substitution on the left hand side}} \\
  & v(p(1,2),p(1,2)) \theoryeq{U} p'[v'_x[f_1]/x] \\
  \Rightarrow\; & \reason{\text{Assumption }\ref{ax:videm}: v \text{ is idempotent }} \\
  & p(1,2) \theoryeq{U} p'[v'_x[f_1]/x].
  \end{align*}
  We now have two separated terms that are equal to each other, so by essential uniqueness, there are functions $g_1: 2 \rightarrow Z$ and $g_2: X \rightarrow Z$ such that:
  \begin{equation}\label{eq:p=p'}
    p[g_1] \theoryeq{P} p'[g_2],
  \end{equation}
  and, for all $x \in \var(p')$:
  \begin{align*}
    g_1(1) = g_2(x) & \Leftrightarrow 1 \theoryeq{V} v'_x[f_1] \\
    g_1(2) = g_2(x) & \Leftrightarrow 2 \theoryeq{V} v'_x[f_1].
  \end{align*}
  As~$T_\mathbb{P}$ is stable, any variable renaming of~$p'$ is also in~$T_{\mathbb{P}}$. And so, by assumption~\ref{ax:pprolif}, we must have $g_2(x) = g_1(1)$ or $g_2(x) = g_1(2)$, which means:
  \[
    \Gamma \vdash 1 \theoryeq{V} v'_x[f_1] \quad\vee\quad \Gamma \vdash 2 \theoryeq{V} v'_x[f_1].
  \]
  Then using assumption~\ref{ax:vvar}, for all~$v'_x$:
  \begin{equation}%
    \label{eq:condition1-plotkin3}
    \{1,3\} \vdash v'_x \quad\vee\quad \{2,4\} \vdash v'_x.
  \end{equation}
We can now split the set of variables $\var(p')$ into two disjoint subsets:
\begin{align}\label{index1}
  X_1 & = \{x \;|\; \{1,3\} \vdash v'_x \} \\
  X_2 & = \{x \;|\; \{2,4\} \vdash v'_x \}.
\end{align}
The aim of the rest of the proof will be to show that $\var(p') = \emptyset$, which will give a contradiction with assumption~\ref{ax:pbinary-3}. For this, we will need some more substitutions.
Define a second partial function $f_2$ as follows:
\begin{align*}
    f_2(1) &= f_2(2) = 1\\
    f_2(3) &= f_2(4) = 2
\end{align*}
Applying this substitution to \refeqn~\eqref{eq:separation}:
\begin{align*}
& v(p(1,2),p(3,4)) \theoryeq{U} p'[v'_x/x] \\
\Rightarrow\; & \reason{\text{Substitution}} \\
& v(p(1,2)[f_2],p(3,4)[f_2]) \theoryeq{U} p'[v'_x[f_2]/x] \\
\Rightarrow\; & \reason{\text{applying the substitution on the left hand side}} \\
& v(p(1,1),p(2,2)) \theoryeq{U} p'[v'_x[f_2]/x] \\
\Rightarrow\; & \reason{\text{Assumption }\ref{ax:pidem}: p \text{ is idempotent }} \\
& v(1,2) \theoryeq{U} p'[v'_x[f_2]/x],
\end{align*}
which is the same as:
\[
5[v(1,2)/5] \theoryeq{U} p'[v'_x[f_2] / x].
\]
Again we have two separated terms that are equal, allowing us to use essential uniqueness. We conclude that there are functions $g_3: \{5\} \rightarrow Z$ and $g_4: X \rightarrow Z$ such that:
  \[
    5[g_3] \theoryeq{P} p'[g_4],
  \]
  and, whenever $g_3(5) = g_4(x)$, $v(1,2) \theoryeq{V} v'_x[f_2]$.
  By assumption~\ref{ax:pvar-3}:
  \[
   \{g_3(5)\} \vdash p'[g_4].
  \]
  And so for all $x \in \var(p')$:
  \begin{equation}\label{eq:condition2-plotkin3}
    v'_x[f_2] \theoryeq{V} v(1,2).
  \end{equation}
That is, under substitution $f_2$, all the $v'_x$ are equal to $v(1,2)$. The next step is designed to get rid of substitution $f_2$, and hence fully understand each $v'_x$. Since we already know from \refeqn~\eqref{eq:condition1-plotkin3} that the variables appearing in $v'_x$ are different depending on whether $x \in X_1$ or $x \in X_2$, we need to treat those cases separately.
Starting with $x \in X_1$, consider the partial functions $f_3$:
\begin{align*}
    f_3(1) & = 1  & f_3(3) & = 3\\
    f_3(2) & = 3  & f_3(4) & = 4
\end{align*}
This substitution will act as an `inverse' for $f_2$, as will become clear in the following argument.
Combining Equations~\eqref{eq:condition1-plotkin3} and~\eqref{eq:condition2-plotkin3}:
\begin{align*}
 & v'_x[f_2][f_3] \\
 =\; & \reason{\text{writing out the substitutions}}  \\
 & v'_x[1/1, 1/2, 2/3, 2/4][1/1, 3/2, 3/3, 4/4] \\
 =\; & \reason{\text{ignoring the identity substitutions}} \\
 & v'_x[1/2, 2/3, 2/4][3/2] \\
 =\; & \reason{x \in X_1 \text{ so }\{1,3\} \vdash v'_x} \\
 & v'_x[2/3][3/2] \\
 =\; & \reason{\text{the second substitution is the inverse of the first}} \\
 & v'_x.
\end{align*}
Since $v'_x[f_2] \theoryeq{V} v(1,2)$ by \refeqn~\eqref{eq:condition2-plotkin3}, we conclude for all $x \in X_1$:
\begin{align}
& v'_x[f_2] \theoryeq{V} v(1,2) \nonumber\\
\Rightarrow\; & \reason{\text{substitution}} \nonumber\\
 & v'_x[f_2][f_3] \theoryeq{V} v(1,2)[f_3] \nonumber\\
\Rightarrow\; & \reason{\text{by the above: } v'_x[f_2][f_3] = v'_x } \nonumber\\
 & v'_x \theoryeq{V} v(1,2)[f_3] \nonumber\\
\Rightarrow\; & \reason{\text{applying } f_3} \nonumber \\
& v'_x \theoryeq{V} v(1,3).\label{eq:vi=v}
\end{align}
So for $x \in X_1$, we know that $v'_x \theoryeq{V} v(1,3)$.
A similar line of reasoning can be followed for $x \in X_2$. To negate substitution $f_2$ for these $v'_x$, we need the following substitution:
\begin{align*}
    f_4(1) & = 2  & f_4(3) & = 3\\
    f_4(2) & = 4  & f_4(4) & = 4
\end{align*}
The rest of the argument is the same, leading to the conclusion that for all $x \in X_2$, $v'_x \theoryeq{V} v(2,4)$.
We now have all the information about the $v'_x$ that we need. We start working towards a contradiction with yet another substitution. This time, it is a substitution of \emph{terms} for variables:
\begin{align*}
  f_5(1) & = p(1,2) & f_5(3) & = p(1,1) \\
  f_5(2) & = p(1,2) & f_5(4) & = p(2,2)
\end{align*}
These terms are chosen in such a way that the left hand side of Equation~\eqref{eq:separation} reduces to $p(1,2)$ after substitution with $f_5$, so that we can use essential uniqueness again. On the right hand side, the substitution creates $\mathbb{V}$-terms with $\mathbb{P}$-terms in them. Separating these terms into $\mathbb{P}$-terms built out of $\mathbb{V}$-terms will yield new information, and eventually the contradiction.
So, starting from Equation~\eqref{eq:separation}:
\begin{align}
  & v(p(1,2),p(3,4)) \theoryeq{U} p'[v'_x/x] \nonumber \\
  \Rightarrow\; & \reason{\text{Substitution}} \nonumber \\
  & v(p(1,2)[f_5],p(3,4)[f_5]) \theoryeq{U} p'[v'_x[f_5]/x] \nonumber\\
  \Rightarrow\; & \reason{\text{applying the substitution on the left hand side}} \nonumber\\
   & v(p(p(1,2),p(1,2)),p(p(1,1),p(2,2))) \theoryeq{U} p'[v'_x[f_5]/x] \nonumber \\
  \Rightarrow\; & \reason{p \text{ is idempotent}} \nonumber\\
  & v(p(1,2),p(1,2)) \theoryeq{U} p'[v'_x[f_5]/x] \nonumber\\
  \Rightarrow\; & \reason{v \text{ is idempotent}} \nonumber\\
  & p(1,2) \theoryeq{U} p'[v'_x[f_5]/x].\label{eq:condition3-plotkin3}
\end{align}
Before applying essential uniqueness to Equation~\eqref{eq:condition3-plotkin3}, we need to separate the right hand side of this equation. Remember from Equation~\eqref{eq:vi=v} that $v'_x \theoryeq{V} v(1,3)$ or $v'_x \theoryeq{V} v(2,4)$ depending on whether $x \in X_1$ or $x \in X_2$. And so, for $x \in X_1$:
\begin{align*}
  & v'_x \theoryeq{V} v(1,3) \\
 \Rightarrow\; & \reason{\text{substitution}} \\
  & v'_x[f_5] \theoryeq{U} v(1,3)[f_5] \\
 \Rightarrow\; & \reason{\text{applying the substitution on the right hand side}}\\
  & v'_x[f_5] \theoryeq{U} v(p(1,2),p(1,1)) \\
 \Rightarrow\; & \reason{\text{Substitution: } 1/3, 1/4 } \\
 & v'_x[f_5] \theoryeq{U} v(p(1,2),p(3,4)[1/3, 1/4]) \\
 \Rightarrow\; & \reason{\text{Equation~\eqref{eq:separation}}} \\
 & v'_x[f_5] \theoryeq{U} p'[v'_y[1/3, 1/4]/y],
\end{align*}
where we know from Equation~\eqref{eq:vi=v} that for $y \in X_1$, $v'_y \theoryeq{V} v(1,3)$ and for $y \in X_2$, $v'_y \theoryeq{V} v(2,4)$.

Similarly, for $x \in X_2$:
\begin{align*}
  & v'_x \theoryeq{V} v(2,4) \\
  \Rightarrow\; & \reason{\text{substitution}} \\
  & v'_x[f_5] \theoryeq{U} v(2,4)[f_5] \\
 \Rightarrow\; & \reason{\text{applying the substitution on the right hand side}}\\
 & v'_x[f_5] \theoryeq{U} v(p(1,2),p(2,2)) \\
 \Rightarrow\; & \reason{\text{Substitution: } 2/3, 2/4} \\
 & v'_x[f_5] \theoryeq{U} v(p(1,2),p(3,4)[2/3, 2/4]) \\
 \Rightarrow\; & \reason{\text{Equation~\eqref{eq:separation}}} \\
 & v'_x[f_5] \theoryeq{U} p'[v'_y[2/3, 2/4]/y],
\end{align*}
where we know from Equation~\eqref{eq:vi=v} that for $y \in X_1$, $v'_y \theoryeq{V} v(1,3)$ and for $y \in X_2$, $v'_y \theoryeq{V} v(2,4)$.
And so, continuing from Equation~\eqref{eq:condition3-plotkin3}:
\begin{align*}
& p(1,2) \theoryeq{U} p'[v'_x[f_5]/x] \\
\Rightarrow \; & p(1,2) \theoryeq{U} p'[p'[v'_y[1/3, 1/4]/y] / x \in X_1, \\
& \;\;\;\;\;\;\;\;\;\;\;\;\;\;\;\;\;\;\;\;\, p'[v'_y[2/3, 2/4]/y] / x \in X_2],
\end{align*}
where $v'_y \theoryeq{V} v(1,3)$ for $y \in X_1$ and $v'_y \theoryeq{V} v(2,4)$ for $y \in X_2$.
We can now apply essential uniqueness, and use property~\ref{ax:pprolif} to conclude that both:
\begin{align}
1 \theoryeq{V} v'_y[1/3, 1/4] \quad\vee\quad & 2 \theoryeq{V} v'_y[1/3, 1/4]\label{eq:condition4-plotkin3}\\
1 \theoryeq{V} v'_y[2/3, 2/4] \quad\vee\quad & 2 \theoryeq{V} v'_y[2/3, 2/4]\label{eq:condition5-plotkin3}.
\end{align}
For $y \in X_1$, however:
\begin{align*}
v'_y[2/3, 2/4] & \theoryeq{V} v(1,3)[2/3, 2/4] \\
& = v(1,2).
\end{align*}
So, to satisfy equation~\eqref{eq:condition5-plotkin3}, we conclude that $v(1,2) \theoryeq{V} 1$ or $v(1,2) \theoryeq{V} 2$. By property~\ref{ax:vvar}, this means that $\{1\} \vdash v(1,2)$ or $\{2\} \vdash v(1,2)$. This contradicts property~\ref{ax:vbinary}. And so we must conclude that $X_1 = \emptyset$.
Similarly, for $y \in X_2$:
\begin{align*}
v'_y[1/3, 1/4] & \theoryeq{V} v(2,4)[1/3, 1/4] \\
& = v(2,1).
\end{align*}
In order to satisfy equation~\ref{eq:condition4-plotkin3}, we must have $v(2,1) \theoryeq{V} 1$ or $v(2,1) \theoryeq{V} 2$. By property~\ref{ax:vvar}, this means that $\{1\} \vdash v(2,1)$ or $\{2\} \vdash v(2,1)$, which contradicts property~\ref{ax:vbinary}. And so we must also conclude that $X_2 = \emptyset$.
Therefore:
\begin{align*}
  \var(p') & = X_1 \cup X_2 \\
   & = \emptyset \cup \emptyset \\
   & = \emptyset.
\end{align*}
However, from Equation~\eqref{eq:p=p'} we know: $p[g_1] \theoryeq{P} p'[g_2]$, and from property~\ref{ax:pbinary-3} we know that:
\[
\neg (\emptyset \vdash p'[g_2]).
\]
And so, $\var(p')$ cannot be empty. Contradiction! Hence no composite theory of $\mathbb{P}$ after $\mathbb{V}$ can exist.
\end{proof}

\begin{rem}
The proofs in this section require different types of substitutions, and this difference impacts their scope of application. The proofs of \refthms~\ref{thm:plotkin-1} and~\ref{thm:plotkin-2} only require variable-for-variable substitutions, and actually preclude the existence of distributive laws for pointed endofunctors, generalizing~\cite[Theorem 2.4]{KlinSalamanca2018}. The proof of \refthm~\ref{thm:plotkin-3} requires more complex substitutions, implicitly assuming the multiplication axioms. Therefore, \refthm~\ref{thm:plotkin-3} only applies to distributive laws between monads.
\end{rem}

\begin{exa}
  If we consider the algebraic theory of an idempotent binary operation, \refthm~\ref{thm:plotkin-3} shows that the induced monad cannot distribute over itself. This remains true if we add either units or associativity, showing various non-commutative variants of non-determinism cannot be distributed over themselves.

  Similarly, if we denote any of these monads by~$T$, there is no distributive law~$D \circ T \Rightarrow T \circ D$, where~$D$ is the distribution monad.
\end{exa}
\begin{nonexample}[Reader Monad]
  The presentation of the binary reader monad given in~\refex~\ref{ex:reader-monad-presentation} satisfies:
  \[
    x = (x * y) * (z * x)
  \]
   However, this theory has no term~$p$ satisfying both axioms~\ref{ax:pidem} and~\ref{ax:pprolif}. So as expected, we cannot apply~\refthm~\ref{thm:plotkin-3} to the binary reader monad.
\end{nonexample}

\subsection{Regarding Idempotence}\label{sec:plotkin-idempunit}

All theorems so far rely heavily on idempotent terms in both theories. The main advantage idempotent terms provide for our proofs is that they can be reduced to a variable in a controlled way. Idempotence is, however, not the only algebraic property with this effect. Compare idempotence:
\[
x * x = x
\]
to unitality:
\[
x * 1 = x
\]

Both idempotence and unitality have $x$ as the only variable appearing on either side of the equation, and both equations reduce a more complicated term to a single variable. We can capture this behaviour in slightly more general terms, namely: \emph{``There is a term $t$ and a substitution $f$ such that for any variable $x \in \var(t)$, $t[f(x')/x' \neq x] = x $''.} For the idempotence and unitality equations, the term $t$ would be $x * y$, and the substitution would be $f(y) = x$ for idempotence and $f(y) = 1$ for unitality.

This generalisation of the idempotence equation leads us to \refthm~\ref{thm:NoGo-Videm-Punit}. The proof technique used for \refthm~\ref{thm:plotkin-1} still works with this more general assumption, although extra care needs to be taken when substituting a $\bb{T}$-term into an $\bb{S}$-term, as this could turn a previously separated term into a term that is no longer separated. As in the previous proofs, we need extra assumptions such as commutativity to make the proof go through. Notice that we only generalise one of the two idempotent terms. For the other term, the current proof method requires the more specific properties of the idempotence equation.

\begin{rem}\label{rem:NoGo-Videm-Punit-narymary}
  In \refthm~\ref{thm:NoGo-Videm-Punit} below we state and prove the theorem for \emph{binary} terms. This is to make it easier to see where and how the more general assumption replaces the assumption of idempotency in the proof. By copying the strategy from \refthm~\ref{thm:plotkin-2}, however, it is straightforward to generalise \refthm~\ref{thm:NoGo-Videm-Punit} to the case where $p$ is an $m$-ary term and $v$ and $n$-ary term.
\end{rem}

\begin{thm}%
  \label{thm:NoGo-Videm-Punit}
  Let~$\mathbb{P}$ and~$\mathbb{V}$ be two algebraic theories, $T_{\mathbb{P}}$ a stable universal set of~$\mathbb{P}$-terms, and~$T_{\mathbb{V}}$ a stable universal set of $\mathbb{V}$-terms.
  If there are terms:
  \[
    2 \vdash_{\mathbb{P}} p \qquad\text{ and }\qquad 2 \vdash_{\mathbb{V}} v
  \]
  such that:
  \begin{enumerate}[label=(P\arabic*)]
  \item\label{ax:pcomm-dan} $p$ is commutative:
    \[
      2 \vdash p(1,2) \theoryeq{P} p(2,1)
    \]
  \item\label{ax:punit} There is a substitution $f_p: \var(p) \rightarrow T_{\mathbb{P}}$, such that:
    \[
     \Gamma \vdash p(1,f_p(2)) \theoryeq{P} 1.
     \]
  %\item\label{ax:pvar} For all~$p' \in T_{\mathbb{V}}$, and~$x$ a variable:
   % \[
    %  \Gamma \vdash_{\mathbb{P}} x = p' \quad\Rightarrow\quad \{x\} \vdash p'
     % \]
  \item\label{ax:pprolif-3} For all~$n \vdash p' \in T_{\mathbb{P}}$:
   \[
     n \vdash p(1,2) \theoryeq{P} p' \quad\Rightarrow\quad 2 \vdash p'
   \]
  %\item\label{ax:pbinary} For all~$n \vdash p' \in T_{\mathbb{P}}$:
   % \[
    %  n \vdash p(1,2) = p' \quad\Rightarrow\quad \neg (0 \vdash p')
    %\]
  \end{enumerate}
  \begin{enumerate}[label=(V\arabic*)]
  \item\label{ax:videm-dan} $v$ is idempotent:
    \[
      1 \vdash v(1,1) \theoryeq{V} 1
    \]
  \item\label{ax:vvar-3} For all~$v' \in T_{\mathbb{V}}$, and each variable~$x$:
    \[
      \Gamma \vdash x \theoryeq{V} v' \quad\Rightarrow\quad \{x\} \vdash v'
    \]
  \item\label{ax:vbinary-3} For all~$v' \in T_{\mathbb{V}}$:
    \[
      \Gamma \vdash v(1,2) \theoryeq{V} v' \quad\Rightarrow\quad \neg (\{ 1 \} \vdash v'\vee \{ 2 \} \vdash v')
    \]
  \end{enumerate}
  Then there is no composite theory of~$\mathbb{P}$ after~$\mathbb{V}$.
\end{thm}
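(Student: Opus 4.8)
The plan is to mirror the proof of \refthm~\ref{thm:plotkin-1} as closely as possible, changing only the single step in which idempotence of~$p$ was used.

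First I would assume, for contradiction, that a composite theory~$\mathbb{U}$ of~$\mathbb{P}$ after~$\mathbb{V}$ exists, and consider the term $v(p(1,2),p(3,4))$; by compositeness it equals a separated term $p'[v'_x/x]$, and by universality we may take $p'\in T_{\mathbb{P}}$ and each $v'_x\in T_{\mathbb{V}}$. The substitutions $f_1\colon 1,3\mapsto 1,\ 2,4\mapsto 2$ and $f_2\colon 1,4\mapsto 1,\ 2,3\mapsto 2$ from that proof are variable-for-variable and hence preserve separation, so the first part of the argument carries over verbatim: using idempotence of~$v$ (axiom~\ref{ax:videm-dan}) for~$f_1$, and additionally commutativity of~$p$ (axiom~\ref{ax:pcomm-dan}) for~$f_2$, the left-hand side reduces to $p(1,2)$ in both cases, and essential uniqueness (\refthm~\ref{thm:essentialuniqueness}) together with axioms~\ref{ax:pprolif-3} and~\ref{ax:vvar-3} shows that each $v'_x$ contains at most one variable, necessarily one of $1,2,3,4$.

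The new ingredient replaces the final collapse, which in \refthm~\ref{thm:plotkin-1} used idempotence of~$p$ to turn $v(p(1,2),p(3,4))$ into $v(1,2)$. Since~$T_{\mathbb{P}}$ is stable, I would first substitute every variable of $f_p(2)$ by~$1$ to assume, without loss of generality, that $f_p(2)\in T_{\mathbb{P}}$ uses only the variable~$1$ and still satisfies $p(1,f_p(2))\theoryeq{P}1$; renaming~$1$ to~$2$ gives $p(2,f_p(2)[2/1])\theoryeq{P}2$. Applying the substitution
\[
  f_3\colon\quad 1\mapsto 1,\quad 2\mapsto f_p(2),\quad 3\mapsto 2,\quad 4\mapsto f_p(2)[2/1]
\]
to $v(p(1,2),p(3,4))\theoryeq{U}p'[v'_x/x]$ and using axiom~\ref{ax:punit} on each argument on the left yields $v(1,2)\theoryeq{U}p'[v'_x[f_3]/x]$. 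Since each $v'_x$ has a single variable, $v'_x[f_3]$ is either $v'_x$ (variables $\subseteq\{1\}$), or the renaming $v'_x[2/3]$ (variables $\subseteq\{2\}$), or a $\mathbb{V}$-term with the $\mathbb{P}$-term $f_p(2)$, resp.~$f_p(2)[2/1]$, substituted at its leaves — and in this last case the term need no longer be separated. There I would re-separate it using compositeness of~$\mathbb{U}$, take its components in $T_{\mathbb{P}}$ and $T_{\mathbb{V}}$ by universality, and finally rename all of its variables to~$1$ (resp.~$2$); this last renaming is harmless since they already lie in $\{1\}$ (resp.~$\{2\}$), and by stability it keeps the $\mathbb{V}$-components in $T_{\mathbb{V}}$. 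Flattening the outer $\mathbb{P}$-structure then exhibits the right-hand side as a separated term $p^{\dagger}[\bar v_y/y]$ in which every $\bar v_y\in T_{\mathbb{V}}$ has variables $\subseteq\{1\}$ or $\subseteq\{2\}$.

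Finally, I would regard $v(1,2)$ as a separated term with trivial $\mathbb{P}$-outer and single $\mathbb{V}$-component~$v(1,2)$, and apply essential uniqueness to $v(1,2)\theoryeq{U}p^{\dagger}[\bar v_y/y]$. Just as in \refthm~\ref{thm:plotkin-1}, consistency of~$\mathbb{P}$ forces the left-hand variable to occur on the right, giving some~$y_0$ with $v(1,2)\theoryeq{V}\bar v_{y_0}$; since $\bar v_{y_0}\in T_{\mathbb{V}}$ satisfies $\{1\}\vdash\bar v_{y_0}$ or $\{2\}\vdash\bar v_{y_0}$, this contradicts axiom~\ref{ax:vbinary-3}, so no composite theory of~$\mathbb{P}$ after~$\mathbb{V}$ can exist. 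The hard part is precisely this re-separation: substituting a $\mathbb{P}$-term into the single-variable $\mathbb{V}$-terms $v'_x$ destroys the separated form, and one must simultaneously recover a separated representative and keep sufficient control of its variables, which is exactly why $f_p(2)$ is normalised to use only one variable before being substituted. The $m$-ary/$n$-ary generalisation then follows by the same bookkeeping as in \refthm~\ref{thm:plotkin-2}, as noted in \refrem~\ref{rem:NoGo-Videm-Punit-narymary}.
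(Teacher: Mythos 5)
Your proposal is correct and follows the same overall strategy as the paper's proof: the same witness term $v(p(1,2),p(3,4))$, the same substitutions $f_1,f_2$ yielding that each $v'_x$ has at most one variable from $\{1,2,3,4\}$, and the same final substitution built from $f_p$ via axiom~\ref{ax:punit}, ending in a contradiction with~\ref{ax:vbinary-3} via essential uniqueness and consistency of $\mathbb{P}$.

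The only divergence is in the step you both identify as the crux: $f_3$ inserts a $\mathbb{P}$-term into the $\mathbb{V}$-terms $v'_x$, threatening separation. The paper handles this by first upgrading ``each $v'_x$ has at most one variable'' to ``each $v'_x$ is provably equal to a variable'' (a short case analysis using $f_1$ composed with inverse renamings); then $v'_x[f_3]$ is equal either to a variable or to the pure $\mathbb{P}$-term $f_p(2)$, so $p'[v'_x[f_3]/x]$ is already equal to a separated term and essential uniqueness applies at once. You instead keep only the one-variable bound and re-separate the offending components $v'_x[f_p(2)/2]$ by a second appeal to the separation axiom, regaining control of their variables by normalising $f_p(2)$ to the single variable $1$ beforehand and renaming the re-separated representative afterwards. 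That device is sound --- the renaming preserves the defining equation because the term being represented has variables contained in $\{1\}$ --- though your stated justification (``harmless since they already lie in $\{1\}$'') should really appeal to the variables of $v'_x[f_p(2)/2]$ rather than to those of the separated representative, which a priori need not lie in $\{1\}$. The paper's route avoids the second use of separation and the attendant flattening bookkeeping; yours avoids proving the ``equal to a variable'' claim. Both are valid.
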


\begin{proof}
  Assume by way of contradiction that a composite theory~$\bb{U}$ of~$\bb{P}$ after~$\bb{V}$ exists. Then as~$\bb{U}$ is composite, the separation axiom tells us that there exist~$X \vdash p'$ and~$\Gamma \vdash v'_x$ for each $x \in X$ such that:
  \begin{equation}
    \Gamma \vdash v(p(1,2),p(3,4)) \theoryeq{U} p'[v'_x/x].
  \end{equation}
We make the following substitution of variables:
\begin{align*}
f_1(1) & = 1 & f_1(3) & = 1 \\
f_1(2) & = 2 & f_1(4) & = 2
\end{align*}
This yields:
\begin{align*}
   & v(p(1,2),p(1,2)) \theoryeq{U} p'[v'_x[f_1]/x] \\
 \Leftrightarrow\;\; & \reason{v \text{ is idempotent}} \\
  & p(1,2) \theoryeq{U} p'[v'_x[f_1]/x].
\end{align*}
By essential uniqueness and assumption~\ref{ax:pprolif-3}, we conclude that for all $x$:
\begin{equation}\label{eq:1or2}
v'_x[f_1] \theoryeq{V} 1 \quad \vee \quad v'_x[f_1] \theoryeq{V} 2.
\end{equation}
So, using assumption~\ref{ax:vvar-3}:
\begin{equation}\label{eq:13or24}
\{1,3\} \vdash v'_x \quad \vee \quad \{2,4\} \vdash v'_x.
\end{equation}
We make a second substitution:
\begin{align*}
f_2(1) & = 1 & f_2(3) & = 2 \\
f_2(2) & = 2 & f_2(4) & = 1
\end{align*}
This yields:
\begin{align*}
   & v(p(1,2),p(2,1)) \theoryeq{U} p'[v'_x[f_2]/x] \\
 \Leftrightarrow\;\; & \reason{p \text{ is commutative}} \\
   & v(p(1,2),p(1,2)) \theoryeq{U} p'[v'_x[f_2]/x] \\
 \Leftrightarrow\;\; & \reason{v \text{ is idempotent}} \\
  & p(1,2) \theoryeq{U} p'[v'_x[f_2]/x].
\end{align*}
By essential uniqueness and assumption~\ref{ax:pprolif-3}, we conclude that for all $x$:
\[
v'_x[f_2] \theoryeq{V} 1 \quad \vee \quad v'_x[f_2] \theoryeq{V} 2.
\]
So, using assumption~\ref{ax:vvar-3}:
\begin{equation}\label{eq:14or23}
\{1,4\} \vdash v'_x \quad \vee \quad \{2,3\} \vdash v'_x.
\end{equation}
Taking Equations~\eqref{eq:13or24} and~\eqref{eq:13or24} together, we conclude for all x:
\begin{equation}\label{eq:1or2or3or4}
  \bigvee_{i\in4} \{i\} \vdash v'_x.
\end{equation}
This implies that for each $x$, $v'_x$ is equal to a variable:
\begin{itemize}
\item If $\{1\} \vdash v'_x$, then $v'_x[f_1] = v'_x$. From Equation~\ref{eq:1or2} we know that $v'_x[f_1] \theoryeq{V} 1$ or $v'_x[f_1] \theoryeq{V} 2$. And so also $v'_x \theoryeq{V} 1$ or $v'_x \theoryeq{V} 2$.
\item If $\{2\} \vdash v'_x$, then also $v'_x[f_1] = v'_x$, and so again $v'_x \theoryeq{V} 1$ or $v'_x \theoryeq{V} 2$.
\item If $\{3\} \vdash v'_x$, then $v'_x[f_1][3/1, 4/2] = v'_x$. From Equation~\ref{eq:1or2} we know that \\ $v'_x[f_1][3/1, 4/2] \theoryeq{V}~3$ or $v'_x[f_1][3/1, 4/2] \theoryeq{V} 4$. And so also $v'_x \theoryeq{V} 3$ or $v'_x \theoryeq{V} 4$.
\item If $\{4\} \vdash v'_x$, then also $v'_x[f_1][3/1, 4/2] = v'_x$, and so again $v'_x \theoryeq{V} 3$ or $v'_x \theoryeq{V} 4$.
\end{itemize}
We make a final substitution, using the substitution $f_p$ from property~\ref{ax:punit}::
\begin{align*}
f_3(1) & = 1 & f_3(3) & = 2 \\
f_3(2) & = f_p(2) & f_3(4) & = f_p(4)
\end{align*}
This yields:
\begin{align*}
   & v(p(1,f_p(2)),p(3,f_p(4))) \theoryeq{U} p'[v'_x[f_3]/x] \\
 \Leftrightarrow\;\; & \reason{\text{property~\ref{ax:punit}}} \\
  & v(1,3) \theoryeq{U} p'[v'_x[f_3]/x] \\
  \Rightarrow\;\; & \reason{\text{ clarifying that } v(1,3) \text{ is a separated term}} \\
  & 5[v(1,3)/5] \theoryeq{U} p'[v'_x[f_3]/x].
\end{align*}
Notice that the term $p'[v'_x[f_3]/x]$ is separated: since every $v'_x$ is just a variable, the substitution $f_3$ does not break separation, even though it might insert a $\bb{P}$ term. We apply essential uniqueness: there are substitutions $g_1: \{5\} \rightarrow Z, g_2: X \rightarrow Z$ such that:
\begin{align*}
5[g_1] & \theoryeq{P} p'[g_2] \\
g_1(5) = g_2(x) & \Leftrightarrow v(1,3) \theoryeq{V} v'_x.
\end{align*}
By consistency of $\bb{P}$, we know that there is at least one $x$ such that $g_1(5) = g_2(x)$. And hence there is at least one $v'_x$ such that $v'_x \theoryeq{V} v(1,3)$. But this contradicts assumption~\ref{ax:vbinary-3}, since for all $x$, $v'_x$ is equal to a variable. We conclude that no composite theory of $\bb{P}$ after $\bb{V}$ can exist.
\end{proof}

Theorem~\ref{thm:NoGo-Videm-Punit} precludes even more distributive laws:

\begin{exa}[Multiset and Powerset Monad: Filling in the Gap]
 The theory of commutative monoids, presenting the multiset monad, does not have an idempotent term. Therefore, it has so far been unaffected by our no-go theorems. It does, however, have a unital term, bringing it in scope of \refthm~\ref{thm:NoGo-Videm-Punit}. From Manes and Mulry~\cite[Theorem 4.3.4]{ManesMulry2007} we know that there are distributive laws $M \circ M \Rightarrow M \circ M$ and $M \circ P \Rightarrow P \circ M$, where $M$ is the multiset monad and $P$ the powerset monad. We already know that the powerset monad does not distribute over itself, which leaves the combination $P \circ M \Rightarrow M \circ P$. Theorem~\ref{thm:NoGo-Videm-Punit} fills this gap, showing there is no distributive law of that type.
\end{exa}

\section{No-Go Theorems Beyond Plotkin}\label{section:beyondPlotkin} % 4 pages
So far, all our impossibility results involve at least one monad with an idempotent term in its corresponding algebraic theory. But the absence of an idempotent term does not guarantee the existence of a distributive law. Consider the list monad for example. This monad is quite similar to the multiset monad, and we observed in~\refex~\ref{ex:multisetdistlaw} that the ``times over plus'' law of~\refeqn~\eqref{timesoverplus} induces a distributive law for the multiset monad over itself. If we assume this also yields a distributive law for the list monad over itself, then from one of the multiplication axioms:
\[
\lambda ([[a,b],[c,d]]) = [[a,c],[a,d],[b,c],[b,d]],
\]
whilst from the other:
\[
\lambda ([[a,b],[c,d]]) = [[a,c],[b,c],[a,d],[b,d]].
\]
These two statements are incompatible, so the list monad cannot distribute over itself in this way. However, not all distributive laws resemble the distributivity of times over plus, so from this observation alone we cannot yet rule out the possibility of a distributive law for the list monad over itself. In fact, Manes and Mulry found three other distributive laws for the \emph{non-empty} list monad over itself~\cite[Example 5.1.9]{ManesMulry2007},~\cite[Example 4.10]{ManesMulry2008}. But despite these distributive laws being good candidates, they fail to extend to distributive laws for the full list monad over itself.

The results in this section build up towards a proof that shows the search for a distributive law for the list monad over itself is futile; no such law exists. As in the previous section, we state our results in general terms, so that they do not only apply to the list monad, but to any monad satisfying the conditions of the theorems.

Firstly, we show in \refprop~\ref{propnary} that under certain circumstances, constants of one algebraic theory act as `multiplicative zeroes' for terms of the other theory. One application is that any putative distributive law of the list monad over itself must satisfy:
\[
\lambda [L_1,\ldots,L_n] = [\;] \text{ if there is an } i \text{ s.t. } L_i = [\;].
\]
When the theory has multiple constants, this observation immediately leads to a contradiction, and hence give us no-go~\refthm~\ref{thm:nogoConstants}.

Building on \refprop~\ref{propnary}, we then derive conditions under which a distributive law has to behave like the distributivity of times over plus~\eqref{timesoverplus}, resulting in \refthm~\ref{thm:times-over-plus}. As a consequence, monads satisfying the conditions of this theorem compose via a unique distributive law, if they compose at all. Again this theorem applies to the list monad, meaning there can be at most one distributive law for the list monad over itself. %It follows that if there is a distributive law for two monads satisfying the criteria of this theorem, then it is unique.

Finally, in \refthms~\ref{thm:nogoTreeList} and~\ref{thm:nogoTreeIdem} we identify properties that together with \refthm~\ref{thm:times-over-plus} provide two more no-go theorems: \refthm~\ref{thm:nogoTreeList} and \refthm~\ref{thm:nogoTreeIdem}. Whereas idempotence was the main property of interest for the no-go theorems in \refsec~\ref{section:Plotkin}, the focus now becomes unitality equations. In addition, the~\define{abides} equation ({\bf ab}ove-bes{\bf ides},~\cite{Bird1988}) will be important for \refthm~\ref{thm:nogoTreeList}:
\begin{equation}\label{eq:abides}
  (a * b) * (c * d) = (a * c) * (b * d).
\end{equation}
We will require that this equation does \emph{not} hold. This is made precise in \refpro~\ref{ax:tspecialproperty} below. From this theorem we can conclude that there is no distributive law for the list monad over itself: the lacking of the abides property is exactly what causes the problem identified at the beginning of this section.

Throughout this section, we will consider two algebraic theories~$\bb{S}$ and~$\bb{T}$. For~$\bb{S}$ we identify the following properties:
\begin{enumerate}[label=(S\arabic*)]
  \item\label{ax:svar0} For any two terms $s', s''$:
      \[
      \emptyset \vdash s' \; \wedge \; \Gamma \vdash s' \theoryeq{S} s'' \quad\Rightarrow\quad \emptyset \vdash s''.
      \]
  \item\label{ax:svar1} For any term $s'$ and variable $x$:
  \[
  \Gamma \vdash s' \theoryeq{S} x \quad\Rightarrow\quad \{x\} \vdash s'.
  \]
  \item\label{ax:newsnary} $\bb{S}$ has an $n$-ary term $\specialopS$ ($n\geq 1$), for which there is a substitution $f: \var(\specialopS) \rightarrow \bb{S}$ such that for any $x \in \var(\specialopS)$:
   \[
   \Gamma \vdash \specialopS[f(y)/y \neq x] \theoryeq{S} x.
   \]
   In addition, we require that the terms $f(y)$ do not contain the variable $x$.
   \item\label{ax:newsnary-all} For any n-ary term $s'$ ($n\geq 1$), there is a substitution $f: \var(s') \rightarrow \bb{S}$ such that for any $x \in \var(s')$:
   \[
   \Gamma \vdash s'[f(y)/y \neq x] \theoryeq{S} x,
   \]
   where the terms $f(y)$ do not contain the variable $x$.
  \item $\bb{S}$ has a binary term $\specialopS$ such that:
   \begin{enumerate}
     \item\label{ax:sbinary} $e_\specialopS$ is a unit for $\specialopS$:
      \[
      \{x\} \vdash \specialopS(x,e_\specialopS) \theoryeq{S} x \theoryeq{S} \specialopS(e_\specialopS,x).
      \]
     \item\label{ax:sidem} $\specialopS$ is idempotent:
     \[
     \{x\} \vdash \specialopS(x,x) \theoryeq{S} x.
     \]
   \end{enumerate}
\end{enumerate}
And for $\bb{T}$:
\begin{enumerate}[label=(T\arabic*)]
  \item\label{ax:newtoomanyconstantsreq} For all terms $X \vdash t'$, constant $e_\bb{T}$, and any variable substitution $f: X \rightarrow Y$:
  \[
  Y \vdash t'[f] \theoryeq{T} e_\bb{T} \quad\Rightarrow\quad X \vdash t' \theoryeq{T} e_\bb{T}.
  \]
  \item\label{ax:tvar0} For any two terms $t', t''$:
      \[
      \emptyset \vdash t' \; \wedge \; \Gamma \vdash t' \theoryeq{T} t'' \quad\Rightarrow\quad \emptyset \vdash t''.
      \]
  \item\label{ax:tvar1} For any term $t'$ and variable $x$:
  \[
  \Gamma \vdash t' \theoryeq{T} x \quad\Rightarrow\quad \{x\} \vdash t'.
  \]
  \item\label{ax:tconst} $\bb{T}$ has a constant $e_\bb{T}$.
  \item $\bb{T}$ has a binary term $\specialopT$ such that:
    \begin{enumerate}
      \item\label{ax:tunit} $e_\specialopT$ is a unit for $\specialopT$:
      \[
      \{x\} \vdash \specialopT(x,e_\specialopT) \theoryeq{T} x \theoryeq{T} \specialopT(e_\specialopT,x).
      \]
      \item\label{ax:tspecialproperty} The abides equation does not hold in $\bb{T}$:
   \begin{align*}
    & \Gamma  \vdash \specialopT(\specialopT(x,y),\specialopT(z,w)) \theoryeq{T} \specialopT(\specialopT(x,z),\specialopT(y,w)) \\
    & \Rightarrow\;  3 \vdash \specialopT(\specialopT(x,y),\specialopT(z,w)).
   \end{align*}
   \end{enumerate}
\end{enumerate}
\begin{rem}[Interpretation of Axioms]
  The properties~\ref{ax:svar0},~\ref{ax:svar1},~\ref{ax:tvar0},~\ref{ax:tvar1} are all constraints on the variables appearing in terms. \ref{ax:svar0} and~\ref{ax:tvar0} read: ``Any term provably equal to a constant cannot have any variables itself''. This is, for example, not the case for any theory involving multiplicative zeroes. In the theory of rings, $0 * x = 0$, and since the term $0 * x$ has a variable, it does not satisfy~\ref{ax:svar0}/\ref{ax:tvar0}. % chktex 2

  \ref{ax:svar1} and~\ref{ax:tvar1} read: ``Any term provably equal to a variable only contains that single variable''. Idempotent terms are examples of terms that equal a variable, satisfying this condition. In theories with absorption axioms, such as the equation $x \vee (x \wedge y) = x$ from bounded lattices, properties~\ref{ax:svar1}/\ref{ax:tvar1} do not hold. % chktex 2

  Properties~\ref{ax:newsnary} and~\ref{ax:newsnary-all} are generalizations of unital equations. They require that terms can be reduced to variables via a suitable substitution. However, contrary to the similar requirement~\ref{ax:punit} in \refthm~\ref{thm:NoGo-Videm-Punit}, idempotence is not an instance of properties~\ref{ax:newsnary} and~\ref{ax:newsnary-all}. Idempotence requires the substitution to change all variables to $x$: $\specialopS(x,y)[x/y] = \specialopS(x,x) = x$, which is not allowed in this case. Unitality does not use the resulting variable in its substitutions, so unitality is an instance of these properties: $\specialopS(x,y)[e_s/y] = \specialopS(x,e_s) = x$. The difference between~\ref{ax:newsnary} and~\ref{ax:newsnary-all} is the quantifier.

  Property~\ref{ax:newtoomanyconstantsreq} is a weaker version of~\ref{ax:tvar0}, focussing on the provability of an equality between a term and a constant, rather than restricting the variables appearing in that term. It reads: ``If a variable substitution of term $t'$ is provably equal to a constant, then $t'$ is already provably equal to that constant.'' The usefulness of this property compared to~\ref{ax:tvar0} is that is allows for equations such as $0 * x = 0$, which were forbidden by~\ref{ax:tvar0}.

  Notice that property~\ref{ax:tvar0} implies~\ref{ax:newtoomanyconstantsreq}: if $t'[f] = e_\bb{T}$ then~\ref{ax:tvar0} requires $t'[f]$ to contain no variables. Since $f$ is a variable substitution, this means $t'$ already had no variables, and hence $t' = t'[f] = e_\bb{T}$, so~\ref{ax:newtoomanyconstantsreq} is satisfied. This justifies our claim that one is a weaker version of the other.

  Also notice that~\ref{ax:sbinary} and~\ref{ax:sidem} both imply~\ref{ax:newsnary}.
\end{rem}

\begin{exa}
If a theory $\bb{S}$ has a presentation in which all operations are either idempotent or have a unit, then an easy induction shows that $\bb{S}$ will satisfy~\ref{ax:newsnary-all}. More precisely, if for every $s' \in \Sigma^\bb{S}$, either:
  \begin{itemize}
  \item $s'$ is idempotent, that is: $s'[x/y \neq x] \theoryeq{S} x$, or:
  \item $s'$ has a unit $e_{s'}$: $s'[e_{s'}/y \neq x] \theoryeq{S} x$,
  \end{itemize}
  then $\bb{S}$ satisfies~\ref{ax:newsnary-all}.
\end{exa}

\begin{exa}[Algebraic Properties of Key Monads]\label{ex:STproperties}
  $\;$
  \begin{itemize}
    \item The list monad, presented by the theory of monoids, satisfies~\ref{ax:svar0} and~\ref{ax:svar1}. The monoid multiplication satisfies~\ref{ax:sbinary} and hence also~\ref{ax:newsnary} and even~\ref{ax:newsnary-all},  but the theory of monoids does not satisfy~\ref{ax:sidem}.
        The equation~$(x * y) * (z * w) = (x * z) * (y * w)$ holds in the theory of monoids if and only if $y = z$, and so it satisfies all of~\ref{ax:newtoomanyconstantsreq},~\ref{ax:tvar0},~\ref{ax:tvar1},~\ref{ax:tconst},~\ref{ax:tunit}, and~\ref{ax:tspecialproperty}.
    \item The powerset monad is presented by the theory of join semilattices, which satisfies~\ref{ax:sidem} in addition to~\ref{ax:svar0},~\ref{ax:svar1},~\ref{ax:newsnary},~\ref{ax:newsnary-all}, and~\ref{ax:sbinary}. However, this theory does not have property~\ref{ax:tspecialproperty} as the join is commutative and associative and so satisfies the abides equation~\eqref{eq:abides}. Properties~\ref{ax:newtoomanyconstantsreq},~\ref{ax:tvar0},~\ref{ax:tvar1},~\ref{ax:tconst}, and~\ref{ax:tunit} still hold.
    \item The exception monad corresponds to an algebraic theory with a signature containing constants for each exception, and no axioms. It satisfies~\ref{ax:svar0},~\ref{ax:svar1}, and~\ref{ax:newsnary-all}. It does not satisfy~\ref{ax:newsnary},~\ref{ax:sbinary}, and~\ref{ax:sidem} as there are no binary terms. Similarly, it satisfies~\ref{ax:newtoomanyconstantsreq},~\ref{ax:tvar0},~\ref{ax:tvar1}, and~\ref{ax:tconst}, but not~\ref{ax:tunit} or~\ref{ax:tspecialproperty}.
  \end{itemize}
\end{exa}

\subsection{Multiplicative Zeroes}\label{section:persistentconstants}
Our first focus is on properties~\ref{ax:newsnary} and~\ref{ax:tconst}. The goal is to prove that in a composite of theories~$\bb{S}$ and~$\bb{T}$, the constant $e_\bb{T}$ behaves like a multiplicative zero, consuming any $\bb{S}$-term it appears in.

\begin{prop}\label{propnary}
Let $\bb{S}$ be an algebraic theory satisfying property~\ref{ax:newsnary} for term $\specialopS$ in $\bb{S}$, and $\bb{T}$ a theory satisfying properties~\ref{ax:newtoomanyconstantsreq} and~\ref{ax:tconst}, giving constant $e_\bb{T}$.
If $\bb{U}$ is a composite theory of $\bb{T}$ after $\bb{S}$, then we must have that, for any $x_i \in \var(\specialopS)$, $\specialopS[e_\bb{T}/x_i] \theoryeq{U} e_\bb{T}$.
\end{prop}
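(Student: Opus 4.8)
The plan is to use that $\bb{U}$ is composite to put $\specialopS[e_\bb{T}/x_i]$ in separated form, and then to pin down its $\bb{T}$-component by feeding the witness substitution of \ref{ax:newsnary} through essential uniqueness and \ref{ax:newtoomanyconstantsreq}.

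I would first dispose of the degenerate case: if $\specialopS$ mentions no variable other than $x_i$, then the substitution $[f(y)/y\neq x_i]$ is empty, so \ref{ax:newsnary} already gives $\specialopS \theoryeq{S} x_i$ and hence $\specialopS[e_\bb{T}/x_i] \theoryeq{U} e_\bb{T}$ outright. So assume $\specialopS$ has at least two variables. Since $\bb{U}$ is composite, $\specialopS[e_\bb{T}/x_i]$ is provably equal in $\bb{U}$ to a separated term $t[\hat{s}_z/z]$, with $Z \vdash_\bb{T} t$ and $\hat{s}_z$ a family of $\bb{S}$-terms over the variables of $\specialopS$ other than $x_i$. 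It then suffices to prove $t \theoryeq{T} e_\bb{T}$, because $e_\bb{T}$ is a constant and so $\specialopS[e_\bb{T}/x_i] \theoryeq{U} t[\hat{s}_z/z] \theoryeq{U} e_\bb{T}$ would follow.

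To reach $t$, I would exploit \ref{ax:newsnary} at the variable $x_i$: from $\specialopS[f(y)/y\neq x_i] \theoryeq{S} x_i$, substituting $e_\bb{T}$ for $x_i$ throughout gives
\[
  \specialopS\big[\, e_\bb{T}/x_i,\ f(x_k)[e_\bb{T}/x_i]/x_k\ (k \neq i)\,\big] \theoryeq{U} e_\bb{T}.
\]
The left-hand side is precisely $\specialopS[e_\bb{T}/x_i]$ post-composed with the substitution $\rho$ sending each remaining variable $x_k$ to $f(x_k)[e_\bb{T}/x_i]$. Applying $\rho$ to the equality $\specialopS[e_\bb{T}/x_i] \theoryeq{U} t[\hat{s}_z/z]$ then yields $t[\hat{s}_z[\rho]/z] \theoryeq{U} e_\bb{T}$. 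When this is still a separated term, I would compare it with the (trivially separated) term $e_\bb{T}$ via Theorem~\ref{thm:essentialuniqueness}: since the $\bb{T}$-component of $e_\bb{T}$ is the constant over the empty context, clause~(a) of Definition~\ref{def:composite} forces a variable substitution $g$ on $Z$ with $t[g] \theoryeq{T} e_\bb{T}$, and then \ref{ax:newtoomanyconstantsreq} upgrades this to $t \theoryeq{T} e_\bb{T}$, as required.

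The main obstacle is the qualification ``when this is still separated''. The witness $f$ of \ref{ax:newsnary} may send a variable to an $\bb{S}$-term mentioning $x_i$ — this already happens for idempotent terms, where $f$ is a transposition of the two variables — in which case $\rho$ injects copies of $e_\bb{T}$ into the $\bb{S}$-parts and $t[\hat{s}_z[\rho]/z]$ is no longer visibly of the form ``$\bb{T}$-term with $\bb{S}$-terms substituted in''. Dealing with this is the delicate point: one must either re-separate the contaminated $\bb{S}$-subterms before invoking essential uniqueness, or iterate the substitution $[e_\bb{T}/x_i]$ so that the reintroduced occurrences are again eliminated. The underlying mechanism — separated form, then an \ref{ax:newsnary}-driven instance equal to $e_\bb{T}$, then essential uniqueness, then \ref{ax:newtoomanyconstantsreq} — is unaffected by this bookkeeping.
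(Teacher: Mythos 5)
Your route is the paper's: dispose of the one-variable case via \ref{ax:newsnary}, take a separated form $t[\hat{s}_z/z]$ of $\specialopS[e_\bb{T}/x_i]$, use the witness substitution of \ref{ax:newsnary} to manufacture an instance provably equal to $e_\bb{T}$, and then combine essential uniqueness with \ref{ax:newtoomanyconstantsreq} to force $t \theoryeq{T} e_\bb{T}$, from which $\specialopS[e_\bb{T}/x_i] \theoryeq{U} t[\hat{s}_z/z] \theoryeq{U} e_\bb{T}$ follows because $e_\bb{T}$ is closed. The opening, the setup, and the closing step are all correct and match the paper.

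The middle step is where the proposal stops, and as written this is a genuine gap rather than bookkeeping. With your $\rho(x_k)=f(x_k)[e_\bb{T}/x_i]$ the left-hand side does collapse to $e_\bb{T}$, but neither suggested repair closes the argument: ``re-separating the contaminated $\bb{S}$-subterms'' amounts to separating $s[e_\bb{T}/x_i]$ for \emph{arbitrary} $\bb{S}$-terms $s=\hat{s}_z[f(x_k)/x_k]$, which is a strengthening of the very statement being proved, so it is circular; and iterating $[e_\bb{T}/x_i]$ merely reproduces the same contamination one level down. The paper's proof instead applies the substitution $x_k\mapsto f(x_k)$ \emph{without} pushing $e_\bb{T}$ into the $f(x_k)$: the $f(x_k)$ are pure $\bb{S}$-terms, so the right-hand side $t[\hat{s}_z[f(x_k)/x_k]/z]$ stays separated for free, and the left-hand side $\specialopS[e_\bb{T}/x_i,\, f(x_k)/x_k]$ equals $\bigl(\specialopS[f(y)/y\neq x_i]\bigr)[e_\bb{T}/x_i] \theoryeq{U} e_\bb{T}$ \emph{provided} $x_i\notin\var(f(x_k))$ for every $k$. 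That proviso holds for unital-style witnesses, which are the ones arising in all of the paper's applications of this proposition, but it fails for idempotent-style witnesses such as the transposition you point to --- so you have correctly located a soft spot that the paper's own proof also passes over silently. To make your proof complete you must either run the argument under the hypothesis $x_i\notin\var(f(y))$ (harmless in practice, but then it should be stated), or give an actual argument covering general witnesses; flagging the difficulty and asserting that the mechanism is ``unaffected'' does not discharge the step.
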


%\begin{rem}[Interpretation of Axioms]
% Axiom~\ref{ax:newsnary} is a generalization of idempotence and unitality equations. Clearly both idempotent terms and unital terms satisfy~\ref{ax:newsnary}: for an idempotent term, the required substitution substitutes $x$ for every variable $y \neq x$, and for a term with units, the unit is substituted into every variable $y \neq x$.
%
% Axiom~\ref{ax:newtoomanyconstantsreq} states: if a variable substitution of any term $t'$ is provably equal to a constant, then $t'$ is already provably equal to that constant.
%\end{rem}

\begin{proof}
  The statement is trivial in the case that~$\specialopS$ has only one free variable, because property~\ref{ax:newsnary} implies that we must have $\specialopS(x) \theoryeq{S} x$. We may therefore assume that $\specialopS$ has at least two free variables.

  In a composite theory, every term is equal to a separated term. So there is a $t'$ and there are $s'_{x}$ such that:
  \[ \specialopS[e_\bb{T}/x_i] \theoryeq{U} t'[s'_{x}/x]. \]
  Using the substitution $f$ given by property~\ref{ax:newsnary}:
  \begin{align*}
  & \specialopS[e_\bb{T}/x_i][f(y)/y \neq x_i] \theoryeq{U} t'[s'_{x}[f(y)/y \neq x_i]/x] \\
  \Rightarrow \;\; & \reason{ \text{axiom~\ref{ax:newsnary}} } \\
  & e_\bb{T} \theoryeq{U} t'[s'_{x}[f(y)/y \neq x_i]/x].
  \end{align*}
  We now have two separated terms equal to each other, so we can use essential uniqueness to conclude that there are variable substitutions $g_1, g_2$ such that:
  \[
  e_\bb{T}[g_1] \theoryeq{T} t'[g_2].
  \]
  Since $g_2$ is a variable substitution, we can apply assumption~\ref{ax:newtoomanyconstantsreq}, and conclude that $t' \theoryeq{T} e_\bb{T}$. Going back to our original equation:
  \begin{align*}
  & \specialopS[e_\bb{T}/x_i] \theoryeq{U} t'[s'_{x}/x] \\
  \Rightarrow \;\; & \reason{ t' \theoryeq{T} e_\bb{T} } \\
  & \specialopS[e_\bb{T}/x_i] \theoryeq{U} e_\bb{T}[s'_{x}/x] \\
  \Rightarrow \;\; & \reason{ e_\bb{T} \text{ has no variables} } \\
  & \specialopS[e_\bb{T}/x_i] \theoryeq{U} e_\bb{T},
  \end{align*}
  which is what we needed to show.
\end{proof}

\begin{rem}\label{Prop1}
If $\bb{S}$ satisfies property~\ref{ax:newsnary-all}, that is, all terms in $\bb{S}$ with $\geq 1$ variables satisfy~\ref{ax:newsnary}, then the constant $e_\bb{T}$ in $\bb{T}$ annihilates any $\bb{S}$-term it appears in. We say that $e_\bb{T}$ acts as a multiplicative zero.
\end{rem}

If $\bb{T}$ has more than one constant, this could lead to inconsistencies. Our next no-go theorem makes this precise:
\begin{thm}[No-Go Theorem: Too Many Constants]\label{thm:nogoConstants}
  Let $\bb{S}$ and $\bb{T}$ be algebraic theories with properties~\ref{ax:newsnary} and~\ref{ax:newtoomanyconstantsreq} respectively. Further assume that the term $\specialopS$ satisfying~\ref{ax:newsnary} has at least two free variables. Then there exists no composite theory of $\bb{T}$ after $\bb{S}$ if $\bb{T}$ has two or more constants.
\end{thm}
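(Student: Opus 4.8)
The plan is to reduce the statement directly to Proposition~\ref{propnary}, which already does all the conceptual work: whenever a composite theory $\bb{U}$ of $\bb{T}$ after $\bb{S}$ exists, the special term $\specialopS$ from property \ref{ax:newsnary} is annihilated by the constant $e_\bb{T}$ along each of its variables, i.e. $\specialopS[e_\bb{T}/x_i] \theoryeq{U} e_\bb{T}$. Since $\bb{T}$ has at least one constant, property \ref{ax:tconst} holds, so the hypotheses of Proposition~\ref{propnary} are met. The key observation is that property \ref{ax:newtoomanyconstantsreq} is stated for \emph{every} constant of $\bb{T}$, so this annihilation is available for any constant we choose; exploiting it with two different constants will let us collapse them inside $\bb{U}$.

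Concretely, I would argue as follows. Suppose for contradiction that a composite theory $\bb{U}$ of $\bb{T}$ after $\bb{S}$ exists, and let $e_1, e_2$ be two constants of $\bb{T}$ that are not provably equal in $\bb{T}$. Since $\specialopS$ has at least two free variables, fix distinct $x_1, x_2 \in \var(\specialopS)$. Applying Proposition~\ref{propnary} with the constant $e_1$ and variable $x_1$ gives $\specialopS[e_1/x_1] \theoryeq{U} e_1$; substituting $e_2$ for $x_2$ throughout --- which leaves the right-hand side untouched, as $e_1$ has no variables --- yields $\specialopS[e_1/x_1,\, e_2/x_2] \theoryeq{U} e_1$. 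By the symmetric argument, Proposition~\ref{propnary} with the constant $e_2$ and variable $x_2$, followed by substituting $e_1$ for $x_1$, gives $\specialopS[e_1/x_1,\, e_2/x_2] \theoryeq{U} e_2$. Transitivity then forces $e_1 \theoryeq{U} e_2$.

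To close the argument I would mirror the proof of Proposition~\ref{prop:consistency}: both $e_1$ and $e_2$ are separated terms in $\bb{U}$, since a constant of $\bb{T}$ is a $\bb{T}$-term with empty variable context. Essential uniqueness applied to the two separated representatives witnessing $e_1 \theoryeq{U} e_2$ then forces their $\bb{T}$-components to agree: the functions and auxiliary $\bb{S}$-terms promised by Definition~\ref{def:composite} all have empty domain, so the first clause degenerates to $e_1 \theoryeq{T} e_2$. This contradicts the choice of $e_1$ and $e_2$, and hence no composite theory of $\bb{T}$ after $\bb{S}$ can exist.

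I do not anticipate a real obstacle here; once Proposition~\ref{propnary} is in hand the theorem is essentially bookkeeping, and there is no delicate case analysis. The only points that deserve care are the interpretation of ``two or more constants'' --- which I read as the existence of two constants that $\bb{T}$ does not identify --- and the routine verification that substituting a constant for a variable in a $\bb{U}$-equation is legitimate and inert on any side that is already a constant.
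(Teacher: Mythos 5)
Your proposal is correct and follows essentially the same route as the paper: apply Proposition~\ref{propnary} twice to obtain $e_1 \theoryeq{U} \specialopS[e_1/x_1, e_2/x_2] \theoryeq{U} e_2$, then invoke essential uniqueness on the two separated constants to force $e_1 \theoryeq{T} e_2$, a contradiction. You merely spell out the intermediate substitution and the degenerate essential-uniqueness step in more detail than the paper does.
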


\begin{proof}
  Suppose that $\bb{U}$ is a composite theory of $\bb{T}$ after $\bb{S}$ and let $e_1$ and $e_2$ be distinct constants in $\bb{T}$. Suppose that $\{x,y\} \subseteq \var(\specialopS)$. Then by \refprop~\ref{propnary} we have:
  \[
  e_1 \theoryeq{U} s[e_1/x, e_2/y] \theoryeq{U} e_2.
  \]
  By essential uniqueness, we may conclude that $e_1 \theoryeq{T} e_2$. Contradiction. So $\bb{U}$ cannot be a composite of $\bb{T}$ after $\bb{S}$.
\end{proof}

\begin{exa}[Iterated Distributive Laws]\label{ex:iterateddistlawsnogo}
Theorem~\ref{thm:nogoConstants} is remarkably useful for determining whether iterated distributive laws are possible. Cheng shows in her paper \emph{Iterated distributive laws}~\cite{Cheng2011b} that three or more monads can be composed if there are distributive laws for the pairwise compositions of the monads, and these distributive laws additionally satisfy the Yang-Baxter equation. Theorem~\ref{thm:nogoConstants} approaches the question of iterated distributive laws from the other end, by severely limiting the possibilities. If in a proposed composition of monads $C \circ B \circ A$, the monads $B$ and $C$ each have a constant (and the other mild requirements of \refthm~\ref{thm:nogoConstants} are satisfied), then there is no distributive law $A \circ (C \circ B) \Rightarrow (C \circ B) \Rightarrow A$, and hence any possible pairwise distributive laws will not satisfy Yang-Baxter.

We give a few concrete examples involving the list, multiset and powerset monads $L, M, P$, whose algebraic theories are monoids, commutative monoids, and join semilattices respectively. We know from Manes and Mulry~\cite[Theorem 4.3.4]{ManesMulry2007} that we can form the monads $M \circ L$, $M \circ M$, and $M \circ P$ via distributive laws. All of these monads have two constants, which satisfy the condition for $\bb{T}$ in \refthm~\ref{thm:nogoConstants}. By picking the term $x*y$ in each of the theories for $L$, $M$, and $P$, we see that these monads satisfy the condition for $\bb{S}$. We can therefore exclude all of the following compositions via distributive laws:
\begin{table}[h]
  \centering
  \caption{Overview of possible distributive laws of type:}\label{tab:iterateddistlawsnogo}
  row $\circ$ column $\Rightarrow$ column $\circ$ row, involving the monads \\ list $(L)$, multiset $(M)$, and powerset $(P)$.

  \belowrulesep=0ex
  \aboverulesep=0ex
  \renewcommand{\arraystretch}{1.1}
  \begin{tabular}{l|ccc} % chktex 44
    \toprule
    % after \\: \hline or \cline{col1-col2} \cline{col3-col4} ...
      & $M \circ L$ & $M \circ M$ & $M \circ P$ \\ \midrule
    $L$ & \no & \no & \no \\
    $M$ & \no & \no & \no \\
    $P$ & \no & \no & \no \\\bottomrule
  \end{tabular}
\end{table}
\end{exa}

\begin{exa}[An Error in the Literature]%
  \label{ex:error-in-the-literature}
  We saw in the previous example that the term $x*y$ from monoids (the list monad) satisfies the conditions for $\bb{S}$ in \refthm~\ref{thm:nogoConstants}. The exception monad satisfies~\ref{ax:newtoomanyconstantsreq}, so when the exception monad has more than one exception, \refthm~\ref{thm:nogoConstants} states that there is no distributive law $L\circ (-+E) \Rightarrow (-+E) \circ L$.

However, Manes and Mulry claim to have a distributive law of this type for the case where $E = \{a,b\}$~\cite[Example 4.12]{ManesMulry2008}, given by:

\begin{align*}
\lambda [] &= [] & \lambda L & = L \text{ if no element of $L$ is in } E \\
\lambda [e] & = e \text{ for any exception } e \in E & \lambda L & = a \text{ otherwise.}
\end{align*}
We check more concretely that this cannot be a distributive law by showing that it fails the first multiplication axiom from \refdef~\ref{def:distlaw}:
\begin{center}
  \begin{tikzcd}[column sep=small]
  {[[b],[]]} \arrow[d, "\mu^L_{EX}"] \arrow[rr, "L(\lambda_X)"] & & {[b, []]} \arrow[rr, "\lambda_{LX}"] & & a \arrow[d, "E(\mu^L_X)"] \\
  {[b]} \arrow[rrr, "\lambda_X"]& & & b \arrow[r, phantom, "\neq"] & a
  \end{tikzcd}
\end{center}
The given distributive law follows directly from Manes and Mulry's Theorem 4.6~\cite{ManesMulry2008}. We suspect that the problem originates in Lemma 4.5 of this paper. Louis Parlant found that the proof of this lemma might use the isomorphism $(A \otimes I) \cong A$ implicitly if the signature of the theory has constants, while the lemma explicitly does not assume any monoidal properties of its functors. So the lemma, and hence also Theorem 4.6, may not be valid in the case that the theory has constants. In addition, the induction in the proof of Lemma 4.5 starts at $n = 1$, where $n$ is the number of variables appearing in a term. This induction therefore excludes constants, which should be considered separately but are absent from the proof.

%The given distributive law seems to follow directly from Manes and Mulry's Theorem 4.6~\cite{ManesMulry2008}. This would indicate that in its current form, the published theorem produces invalid distributive laws. We leave identifying the scope of the problem with this theorem to later work.

It is important to notice that \refthm~\ref{thm:nogoConstants} does not contradict the well-known result that the exception monad distributes over every set monad $T$; that result is for the other direction $(-+E) \circ T \Rightarrow T \circ (-+E)$.
\end{exa}
\begin{nonexample}[Exception Monad]
It is well known that the exception monad distributes over itself. Even though the corresponding theory satisfies properties~\ref{ax:newsnary} and~\ref{ax:newtoomanyconstantsreq}, there are no terms with more than one free variable, and hence \refthm~\ref{thm:nogoConstants} does not apply.
\end{nonexample}

\subsection{The One Distributive Law, If It Exists}
Needing just the properties~\ref{ax:newsnary},~\ref{ax:newtoomanyconstantsreq}, and~\ref{ax:tconst}, \refprop~\ref{propnary} already greatly restricts the possibilities for a distributive law between monads $S$ and $T$. We will now see that if both $\bb{S}$ and $\bb{T}$ have binary terms with units, then in a composite theory, the binary of $\bb{S}$ distributes over the binary of $\bb{T}$ like times over plus in \refeqn~\eqref{timesoverplus}. For the monads corresponding to these theories, this means that there is only one candidate distributive law to consider.

%\begin{restatable}[Times over Plus Theorem]{theorem}{timesoverplustheorem}
\begin{thm}[Times over Plus Theorem]%
\label{thm:times-over-plus}
Let $\bb{S}$ and $\bb{T}$ be two algebraic theories, satisfying~\ref{ax:svar0},~\ref{ax:svar1},~\ref{ax:newsnary-all} and~\ref{ax:tvar0},~\ref{ax:tvar1},~\ref{ax:tconst} respectively. Assume furthermore that there are terms:
 \[
 2 \vdash_{\mathbb{S}} \specialopS \qquad\text{ and }\qquad 2 \vdash_{\mathbb{T}} \specialopT,
 \]
 satisfying~\ref{ax:sbinary} and~\ref{ax:tunit} respectively. Finally, let $\bb{U}$ be a composite theory of $\bb{T}$ after $\bb{S}$. Then $\specialopS$ distributes over $\specialopT$:
\begin{align}
    \specialopS(\specialopT(y_1,y_2), x_0) & \theoryeq{U} \specialopT(\specialopS(y_1,x_0),\specialopS(y_2,x_0))\label{theoremeq1} \\
    \specialopS(x_0, \specialopT(y_1,y_2)) & \theoryeq{U} \specialopT(\specialopS(x_0,y_1),\specialopS(x_0,y_2)).\label{theoremeq2}
\end{align}
\end{thm}
%\end{restatable}

%\begin{proof} We only have space for a rough sketch.
We derive the distributional behaviour in three stages, relying as always on separation and essential uniqueness in a composite theory. %If $\specialopS$ is the binary in $\bb{S}$, and $\specialopT$ the binary in $\bb{T}$, and $t_0[s_y/y]$ is the separated term equal to $\specialopS(\specialopT(y_1,y_2), x_0)$, then we derive the following about $t_0[s_y/y]$:
Suppose that $t'[s'_x/x]$ is a separated term such that $\specialopS(\specialopT(y_1,y_2), x_0) \theoryeq{U} t'[s'_x/x]$, then we derive the following about $t'[s'_x/x]$:
\begin{enumerate}
  \item First we prove which variables appear in the terms $s'_x$ of the separated term: $\var(s'_x) = \{ y_1, x_0\}$ or $\var(s'_x) = \{y_2,x_0\}$.
  \item Then, we prove that each of the $s'_x$ is either equal to $\specialopS(y_1,x_0)$ or to $\specialopS(y_2,x_0)$.
  \item Finally, we derive that the separated term $t'[s'_x/x]$ has to be equal to $\specialopT(\specialopS(y_1,x_0),\specialopS(y_2,x_0))$.
\end{enumerate}
The proofs of these three stages are quite long. They are separated into different lemmas to make it easier to keep track of the main line of reasoning.
\begin{lem}\label{t_0(s_i)}
  Let $\bb{S}$ and $\bb{T}$ be two algebraic theories satisfying~\ref{ax:svar0},~\ref{ax:svar1},~\ref{ax:newsnary-all}, and~\ref{ax:tvar0},~\ref{ax:tconst} respectively. Assume furthermore that there are terms:
 \[
 2 \vdash_{\mathbb{S}} \specialopS \qquad\text{ and }\qquad 2 \vdash_{\mathbb{T}} \specialopT,
 \]
 satisfying~\ref{ax:sbinary} and~\ref{ax:tunit} respectively. Finally, let $\bb{U}$ be a composite theory of $\bb{T}$ after $\bb{S}$. Then there is a $\bb{T}$-term $X \vdash t'$ and there is a family of $\bb{S}$-terms $s'_x, x \in X$ such that:
  \[
    \specialopS(\specialopT(y_1,y_2), x_0) \theoryeq{U} t'[s'_x/x],
  \]
  and for each $x \in \var(t')$:
  \[
  \var(s'_x) = \{y_1,x_0\} \qquad \text{or} \qquad \var(s'_x) = \{y_2,x_0\}.
  \]
  Moreover, there is an $x$ such that $\var(s'_x) = \{y_1,x_0\}$ and an $x$ such that $\var(s'_x) = \{y_2,x_0\}$.

  Similarly, there is a $\bb{T}$-term $X' \vdash t''$ and there is a family of $\bb{S}$-terms $s''_{x'}, x' \in X'$ such that:
  \[
    \specialopS(x_0,\specialopT(y_1,y_2)) \theoryeq{U} t''[s''_{x'}/x'],
  \]
  and for each $x' \in \var(t'')$:
  \[
  \var(s''_{x'}) = \{y_1,x_0\} \qquad \text{or} \qquad \var(s''_{x'}) = \{y_2,x_0\}.
  \]
  Moreover, there is an $x'$ such that $\var(s''_{x'}) = \{y_1,x_0\}$ and an $x'$ such that $\var(s''_{x'}) = \{y_2,x_0\}$.
\end{lem}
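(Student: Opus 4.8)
The argument uses the standard toolkit for composite theories: invoke the separation property to obtain a separated representative, then hit the resulting identity with a short sequence of substitutions and read off constraints via essential uniqueness (\refthm~\ref{thm:essentialuniqueness}). I describe the argument for the first statement; the second is its mirror image, using that $e_\specialopS$ is a two-sided unit of $\specialopS$. A useful preliminary observation is that, under \ref{ax:tvar0} and \ref{ax:tunit}, the term $\specialopT$ depends essentially on both of its arguments: otherwise two-sided unitality would force $\specialopT(x_1,x_2)\theoryeq{T}x_2$, hence $\specialopT(x_1,e_\specialopT)\theoryeq{T}e_\specialopT$, contradicting right unitality together with \ref{ax:tvar0}; indeed \ref{ax:tvar0} already precludes any $\bb{T}$-operation from ignoring one of its arguments. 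By separation, write $\specialopS(\specialopT(y_1,y_2),x_0)\theoryeq{U}t'[s'_x/x]$ with $t'$ a $\bb{T}$-term over $X=\var(t')$ and $\{s'_x\}$ a family of $\bb{S}$-terms; after deleting any inessential leaves of $t'$, we may assume each $x\in X$ occurs essentially in $t'$.

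The first substitution is $x_0\mapsto e_\specialopT$. On the left, \refprop~\ref{propnary} applies (its hypotheses hold: $\bb{S}$ satisfies \ref{ax:newsnary-all}, $\bb{T}$ satisfies \ref{ax:tvar0} and hence \ref{ax:newtoomanyconstantsreq}, and $e_\specialopT$ is a constant by \ref{ax:tconst}), so $e_\specialopT$ acts as a multiplicative zero and the left-hand side collapses to $e_\specialopT$. On the right, \refprop~\ref{propnary} applied to each $\bb{S}$-leaf (via \ref{ax:newsnary-all}) absorbs every leaf $s'_x$ with $x_0\in\var(s'_x)$ to $e_\specialopT$; absorbing these $\bb{T}$-constants into the $\bb{T}$-skeleton leaves a separated term $t''[\,s'_x/x \mid x_0\notin\var(s'_x)\,]$ equal to $e_\specialopT$. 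Essential uniqueness together with \ref{ax:tvar0} then forces $t''$ to be variable-free, which --- as $t'$ has no inessential leaves --- means $x_0\in\var(s'_x)$ for every $x\in\var(t')$.

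The second substitution is $x_0\mapsto e_\specialopS$. By \ref{ax:sbinary} the left-hand side collapses to $\specialopT(y_1,y_2)$, while the right-hand side stays separated, since we have substituted only an $\bb{S}$-constant into the $\bb{S}$-leaves. Applying essential uniqueness to $\specialopT(y_1,y_2)\theoryeq{U}t'[s'_x[e_\specialopS/x_0]/x]$ --- reading the left side as the separated term $\specialopT(z_1,z_2)[y_1/z_1,y_2/z_2]$, and exploiting that its two $\bb{S}$-leaves $y_1,y_2$ are distinct, that $\specialopT$ depends essentially on both arguments, and the variable constraints \ref{ax:svar0} and \ref{ax:svar1} --- one reads off that for every $x\in\var(t')$ either $s'_x[e_\specialopS/x_0]\theoryeq{S}y_1$ or $s'_x[e_\specialopS/x_0]\theoryeq{S}y_2$, and that both alternatives are realised. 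Combining this with $x_0\in\var(s'_x)$ from the previous step, \ref{ax:svar1} pins the variable set down exactly: $\var(s'_x)=\{y_1,x_0\}$ or $\var(s'_x)=\{y_2,x_0\}$, with both values occurring. This is the assertion of the lemma.

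The main obstacle, I expect, is the variable bookkeeping in the last two steps. One must check that re-separating the right-hand side after substituting the ``foreign'' constant $e_\specialopT$ genuinely yields a separated term, so that essential uniqueness applies; and one must convert the biconditional clauses of \refthm~\ref{thm:essentialuniqueness} into \emph{exact} statements about the sets $\var(s'_x)$ rather than mere inclusions --- precisely the role of \ref{ax:svar0}, \ref{ax:svar1} on the $\bb{S}$-side and of \ref{ax:tvar0} on the $\bb{T}$-side, the latter guaranteeing that no $\bb{T}$-operation (and hence no separated representative) can smuggle in a spurious variable.
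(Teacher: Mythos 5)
Your proposal is correct and follows essentially the same route as the paper's proof: separation, then the two substitutions $x_0\mapsto e_\specialopT$ (handled via \refprop~\ref{propnary} and \ref{ax:tvar0} to force $x_0\in\var(s'_x)$) and $x_0\mapsto e_\specialopS$ (handled via essential uniqueness, \ref{ax:tvar0}, \ref{ax:svar0} and \ref{ax:svar1} to force each $s'_x[e_\specialopS/x_0]\theoryeq{S}y_1$ or $y_2$ with both realised). The only differences are cosmetic --- you perform the substitutions in the opposite order, and you package the ``both $y_1$ and $y_2$ must occur'' step as essential dependence of $\specialopT$ on both arguments, which is the same \ref{ax:tunit}-plus-\ref{ax:tvar0} argument the paper runs directly on $t'[f']$.
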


\begin{proof}
We only explicitly prove the statements for $\specialopS(\specialopT(y_1,y_2), x_0)$. The proof for $\specialopS(x_0, \specialopT(y_1,y_2))$ is similar.

From the fact that $\bb{U}$ is a composite of the theories $\bb{S}$ and $\bb{T}$, we know that every term in $\bb{U}$ is equal to a separated term. And so, there is a $X \vdash t'$ and there is a family $s'_x, x \in X$ such that:
\begin{align}\label{lambda_2}
\specialopS(\specialopT(y_1,y_2), x_0) \theoryeq{U} t'[s'_x/x].
\end{align}

We substitute $x_0 \mapsto e_\specialopS$ in \refeqn~\eqref{lambda_2}. This yields:
\begin{align}
& \specialopS(\specialopT(y_1,y_2), e_\specialopS) \theoryeq{U} t'[s'_x[e_\specialopS/x_0]/x] \nonumber \\
\Rightarrow \;\; & \reason{e_\specialopS \text{ is the unit of } \specialopS} \nonumber \\
& \specialopT(y_1,y_2) \theoryeq{U} t'[s'_x[e_\specialopS/x_0]/x].\label{t-knowledge2}
\end{align}
By the essential uniqueness property, we conclude that there are functions:
\[
  \essuniqfunctionA: \{y_1, y_2\} \rightarrow Z, \quad \essuniqfunctionB: X \rightarrow Z,
\]
such that:
\begin{align}\label{t-knowledge1}
& \specialopT(y_1, y_2)[\essuniqfunctionA] \theoryeq{T} t'[\essuniqfunctionB].
\end{align}
Furthermore, whenever $\essuniqfunctionA(y_1) = \essuniqfunctionB(x)$ or $\essuniqfunctionA(y_2) = \essuniqfunctionB(x)$, we have respectively:
\begin{align}
  y_1 & \theoryeq{S} s'_x[e_\specialopS/x_0]\label{s-knowledge1} \\
  y_2 & \theoryeq{S} s'_x[e_\specialopS/x_0].\label{s-knowledge2}
\end{align}
Since we assume variables $y_1$ and $y_2$ to be distinct, essential uniqueness also gives us $\essuniqfunctionA(y_1) \neq \essuniqfunctionA(y_2)$.

We analyse \refeqn~\eqref{t-knowledge1} more closely, comparing the variables appearing in both $\specialopT(y_1, y_2)[\essuniqfunctionA]$ and $t'[\essuniqfunctionB]$. First, we show that $\{\essuniqfunctionB(x) \given x \in \var(t')\} \subseteq \{\essuniqfunctionA(y_1), \essuniqfunctionA(y_2)\}$. This follows from the following equalities:
\begin{align*}
  & e_\specialopT \\
\theoryeq{T}\; & \reason{e_\specialopT \text{ is the unit for } \specialopT} \\
  & \specialopT(e_\specialopT, e_\specialopT) \\
\theoryeq{T}\; & \reason{\text{Substitution}} \\
  & \specialopT(\essuniqfunctionA(y_1), \essuniqfunctionA(y_2))[e_\specialopT/\essuniqfunctionA(y_1), e_\specialopT/\essuniqfunctionA(y_2)] \\
\theoryeq{T}\; & \reason{\text{Equation }\eqref{t-knowledge1}} \\
  & t'[\essuniqfunctionB][e_\specialopT/\essuniqfunctionA(y_1), e_\specialopT/\essuniqfunctionA(y_2)].
  \end{align*}
So: $e_\specialopT \theoryeq{T} t'[\essuniqfunctionB][e_\specialopT/\essuniqfunctionA(y_1), e_\specialopT/\essuniqfunctionA(y_2)]$. Then by assumption~\ref{ax:tvar0}:
\[
  \var(t'[\essuniqfunctionB][e_\specialopT/\essuniqfunctionA(y_1), e_\specialopT/\essuniqfunctionA(y_2)]) = \emptyset.
\]
Therefore, $t'[\essuniqfunctionB]$ can contain no other variables than $\essuniqfunctionA(y_1)$ and $\essuniqfunctionA(y_2)$. That is:
\begin{equation}\label{eq:variable-subset}
  \{\essuniqfunctionB(x) \given x \in \var(t')\} \subseteq \{\essuniqfunctionA(y_1), \essuniqfunctionA(y_2)\}.
\end{equation}
Next, we show that both $\essuniqfunctionA(y_1)$ and $\essuniqfunctionA(y_2)$ need to appear in $t'[\essuniqfunctionB]$. Suppose that $\essuniqfunctionA(y_1)$ does not appear in $t'[\essuniqfunctionB]$. Then from \refeqn~\eqref{eq:variable-subset} we know that for all $x \in \var(t')$, $\essuniqfunctionB(x) = \essuniqfunctionA(y_2)$. Then:
\begin{align*}
 & \essuniqfunctionA(y_1) \\
\theoryeq{T}\; & \reason{e_\specialopT \text{ is the unit for } \specialopT} \\
 & \specialopT(\essuniqfunctionA(y_1), e_\specialopT) \\
\theoryeq{T}\; & \reason{\text{Substitution}} \\
 & \specialopT(\essuniqfunctionA(y_1), \essuniqfunctionA(y_2))[e_\specialopT/\essuniqfunctionA(y_2)] \\
\theoryeq{T}\; & \reason{\text{Equation~\eqref{t-knowledge1}}} \\
 & t'[\essuniqfunctionB][e_\specialopT/\essuniqfunctionA(y_2)] \\
\theoryeq{T}\; & \reason{\text{For all } x, \essuniqfunctionB(x) = \essuniqfunctionA(y_2)} \\
 & t'[e_\specialopT/x].
\end{align*}
So: $\essuniqfunctionA(y_1) \theoryeq{T} t'[e_\specialopT/x]$, but this contradicts assumption~\ref{ax:tvar0}, because $\var(t'[e_\specialopT/x]) = \emptyset$, since every free variable in $t'$ has been substituted with the constant $e_\specialopT$, and $\var(\essuniqfunctionA(y_1)) = \{\essuniqfunctionA(y_1)\} \neq \emptyset$.

So $\essuniqfunctionA(y_1)$ has to appear in $t'[\essuniqfunctionB]$. A similar line of reasoning yields the same conclusion for $\essuniqfunctionA(y_2)$. Therefore, there is an $x$ such that $\essuniqfunctionB(x) = \essuniqfunctionA(y_1)$ and there is an $x$ such that $\essuniqfunctionB(x) = \essuniqfunctionA(y_2)$.
In summary, if we define:
\begin{align*}
X_1 &= \{x \in \var(t') \given \essuniqfunctionB(x) = \essuniqfunctionA(y_1)\} \\
X_2 &= \{x \in \var(t') \given \essuniqfunctionB(x) = \essuniqfunctionA(y_2)\},
\end{align*}
then we know that neither $X_1$ nor $X_2$ is empty and that $X_1 \cup X_2 = \var(t')$.

We finally consider \refeqns~\eqref{s-knowledge1} and~\eqref{s-knowledge2} to reach a conclusion about the variables appearing in the terms $s'_x$.
Since for all $x \in X_1: \essuniqfunctionB(x) = \essuniqfunctionA(y_1)$, we have by \refeqn~\eqref{s-knowledge1} that $s'_x[e_\specialopS/x_0] \theoryeq{S} y_1$. Similarly, for all $x \in X_2$, $s'_x[e_\specialopS/x_0] \theoryeq{S} y_2$.
By assumption~\ref{ax:svar1}, we conclude that:
\begin{align}
\forall x \in X_1:\; & \{x_0, y_1\} \vdash s'_x\label{variable_subset1} \\
\forall x \in X_2:\; & \{x_0, y_2\} \vdash s'_x\label{variable_subset2}.
\end{align}
In addition, since for any $x \in X_1$, $y_1 \theoryeq{S} s'_x[e_\specialopS/x_0]$, we would have $y_1$ equal to a constant if $y_1$ would not appear in $s'_x$, contradicting assumption~\ref{ax:svar0}. Similarly for $y_2$ and $s'_x, x \in X_2$. And so:
\begin{align}
\forall x \in X_1:\; & y_1 \in \var(s'_x)\label{y_1_in_var}\\
\forall x \in X_2:\; & y_2 \in \var(s'_x)\label{y_2_in_var}.
\end{align}

To prove that $x_0 \in \var(s'_x)$ for all $x \in \var(t')$, we substitute $x_0 \mapsto e_\specialopT$ in \refeqn~\eqref{lambda_2}:
\[
\specialopS(\specialopT(y_1,y_2), e_\specialopT) \theoryeq{U} t'[s'_x[e_\specialopT/x_0]/x].
\]
By \refprop~\ref{propnary}, $\specialopS(\specialopT(y_1,y_2), e_\specialopT) \theoryeq{U} e_\specialopT$. Therefore we must have that also:
\begin{equation}\label{eq:t'=et}
t'[s'_x[e_\specialopT/x_0]/x] \theoryeq{U} e_\specialopT.
\end{equation}
The left hand side of this equation might not be separated, since we substitute a $\bb{T}$-term inside $\bb{S}$-terms. We analyse the terms $s'_x[e_\specialopT/x_0]$ further to separate them into $\bb{T}$-terms of $\bb{S}$-terms. There are two cases: either $x_0 \in \var(s'_x)$ or not.
 %Hence by assumption~\ref{ax:tvar0}, $\var(t'[s'_x[e_\specialopT/x_0]/x]) = \emptyset$. We show that $x_0$ must appear in every $s'_x$:
\begin{itemize}
\item If $x_0 \in \var(s'_x)$, then by property~\ref{ax:newsnary-all}: $s'_x[e_\specialopT/x_0] = e_\specialopT$, which is a separated term.
\item If $x_0 \notin \var(s'_x)$, then $s'_x[e_\specialopT/x_0] = s'_x$, which is also a separated term.
\end{itemize}
We conclude that $t'[s'_x[e_\specialopT/x_0]/x]$ is separated. It is our goal to show that we must have $x_0$ appearing in each $s'_x$. To this end, define:
\[
X_3 = \{x \in \var(t') \;|\; x_0 \in s'_x \}.
\]
We will show that we must have $X_3 = \var(t')$. We define:
\[
t'' = t'[e_\specialopT / x \in X_3].
\]
Then:
\begin{align*}
& t''[s'_x / x] \\
\theoryeq{U} \;\; & \reason{\text{definition of } t''} \\
& t'[e_\specialopT / x \in X_3, s'_x / x \notin X_3] \\
\theoryeq{U} \;\; & \reason{\text{for } x \in X_3: s'_x[e_\specialopT/x_0] = e_\specialopT \right. \\
                  & \left. \;\;\;\;\,\text{and for } x \notin X_3: s'_x[e_\specialopT/x_0] = s'_x} \\
& t'[s'_x[e_\specialopT/x_0]/x] \\
\theoryeq{U} \;\; & \reason{\text{\refeqn~\eqref{eq:t'=et}}} \\
& e_\specialopT.
\end{align*}
So we can apply essential uniqueness to the equation $t''[s'_x / x] \theoryeq{U} e_\specialopT$. We conclude that there must be a variable substitution $g$ such that $t''[g] \theoryeq{T} e_\specialopT$. By property~\ref{ax:tvar0} we conclude that $\var(t''[g]) = \emptyset$ and hence also $\var(t'') = \emptyset$. Since $t'' = t'[e_\specialopT / x \in X_3]$, we conclude that $t'$ has no variables that are not in $X_3$. In other words: $X_3 = \var(t')$, and hence for all $x \in \var(t')$, $x_0 \in \var(s'_x)$.

%Suppose there is a $x_i \in \var(t')$ such that $x_0 \notin \var(s'_{x_i})$. Then substituting $x_0 \mapsto e_\specialopT$ has no effect on this $s'_{x_i}$: $s'_{x_i}[e_\specialopT/x_0] \theoryeq{U} s'_{x_i}$. Suppose that $x_i \in X_1$. Then $y_1 \in \var(s'_{x_i})$ and hence: $y_1 \in \var(t'[s'_x[e_\specialopT/x_0]/x])$, contradiction. We get a similar contradiction if $x_i \in X_2$.
%\item All $s_y$ contain the variable $x_0$. Then by \refprop~\ref{propnary}:
%  \[
%    \forall y \in \var(t_0):\; s_y[e_\specialopT/x_0] \theoryeq{U} e_\specialopT
%  \]
%  Then $t_0[s_y[e_\specialopT/x_0]/y] \theoryeq{U} t_0[e_\specialopT/y]$, agreeing with the fact that $\var(t_0[s_y[e_\specialopT/x_0]/y]) = \emptyset$.
%So, for all $x \in \var(t')$, $x_0 \in \var(s'_x)$.

This, together with \refeqns~\eqref{variable_subset1},~\eqref{variable_subset2},~\eqref{y_1_in_var} and~\eqref{y_2_in_var}, proves that for all $x \in \var(t')$:
\[
\var(s'_x) = \{y_1,x_0\} \qquad \text{or} \qquad \var(s'_x) = \{y_2,x_0\}.
\]
The fact that neither $X_1$ nor $X_2$ are empty means that this proves the lemma.
\end{proof}

With the first step done, we move on to the second step:

\begin{lem}\label{t_0-diamond}
  Let $\bb{S}$ and $\bb{T}$ be two algebraic theories satisfying~\ref{ax:svar0},~\ref{ax:svar1},~\ref{ax:newsnary-all} and~\ref{ax:tvar0},~\ref{ax:tvar1}, \ref{ax:tconst} respectively. Assume furthermore that there are terms: % chktex 2
 \[
 2 \vdash_{\mathbb{S}} \specialopS \qquad\text{ and }\qquad 2 \vdash_{\mathbb{T}} \specialopT,
 \]
 satisfying~\ref{ax:sbinary} and~\ref{ax:tunit} respectively. Finally, let $\bb{U}$ be a composite theory of $\bb{T}$ after $\bb{S}$.  Then there is a $\bb{T}$-term $X \vdash t'$ and there is a family of $\bb{S}$-terms $s'_x, x \in X$ such that:
  \[
    \specialopS(\specialopT(y_1,y_2), x_0) \theoryeq{U} t'[s'_x/x],
  \]
  and for each $x \in \var(t')$:
  \[
  s'_x \theoryeq{S} \specialopS(y_1,x_0) \qquad \text{or} \qquad s'_x \theoryeq{S} \specialopS(y_2,x_0).
  \]
  Moreover, there is an $x$ such that $s'_x \theoryeq{S} \specialopS(y_1,x_0)$ and an $x$ such that $s'_x \theoryeq{S} \specialopS(y_2,x_0)$.

  Similarly, there is a $\bb{T}$-term $X' \vdash t''$ and there is a family of $\bb{S}$-terms $s''_{x'}, x' \in X'$ such that:
  \[
    \specialopS(x_0,\specialopT(y_1,y_2)) \theoryeq{U} t''[s''_{x'}/x'],
  \]
  and for each $x' \in \var(t'')$:
  \[
  s''_{x'} \theoryeq{S} \specialopS(x_0,y_1) \qquad \text{or} \qquad s''_{x'} \theoryeq{S} \specialopS(x_0,y_2).
  \]
  Moreover, there is an $x'$ such that $s''_{x'} \theoryeq{S} \specialopS(y_1,x_0)$ and an $x'$ such that $s''_{x'} \theoryeq{S} \specialopS(y_2,x_0)$.
\end{lem}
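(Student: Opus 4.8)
The plan is to bootstrap from \reflem~\ref{t_0(s_i)}, which already fixes $\var(s'_x)\in\{\{y_1,x_0\},\{y_2,x_0\}\}$, and to upgrade these variable constraints to genuine $\bb{S}$-equalities by means of two further substitutions together with essential uniqueness. Reusing the $t'$ and $s'_x$ supplied by \reflem~\ref{t_0(s_i)}, I would set $X_1=\{x\in\var(t'):\var(s'_x)=\{y_1,x_0\}\}$ and $X_2=\{x\in\var(t'):\var(s'_x)=\{y_2,x_0\}\}$; since $y_1\neq y_2$ these are disjoint, and by \reflem~\ref{t_0(s_i)} they partition $\var(t')$ into two nonempty blocks. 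The goal is to show $s'_x\theoryeq{S}\specialopS(y_1,x_0)$ for every $x\in X_1$ and $s'_x\theoryeq{S}\specialopS(y_2,x_0)$ for every $x\in X_2$.

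For the first of these, I would substitute $y_2\mapsto e_\specialopT$ in the equation $\specialopS(\specialopT(y_1,y_2),x_0)\theoryeq{U}t'[s'_x/x]$. On the left, $\specialopT(y_1,e_\specialopT)\theoryeq{T}y_1$ by \ref{ax:tunit}, so the left side becomes $\specialopS(y_1,x_0)$. On the right, $s'_x[e_\specialopT/y_2]=s'_x$ for $x\in X_1$, since $y_2\notin\var(s'_x)$, whereas for $x\in X_2$ the constant $e_\specialopT$ annihilates $s'_x$ by \refprop~\ref{propnary} together with \refrem~\ref{Prop1} (using that $\bb{S}$ satisfies \ref{ax:newsnary-all}). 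Hence the right side is $\theoryeq{U}$-equal to $\tilde t[s'_x/x\in X_1]$, where $\tilde t:=t'[e_\specialopT/x\in X_2]$ is a $\bb{T}$-term whose set of variables is exactly $X_1$; crucially, inserting the $\bb{T}$-constant into the affected $\bb{S}$-subterms has not destroyed separatedness, precisely because those subterms were completely swallowed. So we arrive at an equality of separated terms $z[\specialopS(y_1,x_0)/z]\theoryeq{U}\tilde t[s'_x/x\in X_1]$.

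Now apply essential uniqueness, \refthm~\ref{thm:essentialuniqueness}: there are functions $g_1:\{z\}\to Z$ and $g_2:X_1\to Z$ with $g_1(z)\theoryeq{T}\tilde t[g_2]$ and $g_1(z)=g_2(x)\Leftrightarrow\specialopS(y_1,x_0)\theoryeq{S}s'_x$. Since $X_1\neq\emptyset$ and each $x\in X_1=\var(\tilde t)$ occurs in $\tilde t$, the variable $g_2(x)$ occurs in $\tilde t[g_2]$, so $\var(\tilde t[g_2])\neq\emptyset$; as $\tilde t[g_2]\theoryeq{T}g_1(z)$ with $g_1(z)$ a single variable, property \ref{ax:tvar1} forces $\var(\tilde t[g_2])=\{g_1(z)\}$, hence $g_2(x)=g_1(z)$ for all $x\in X_1$ and therefore $s'_x\theoryeq{S}\specialopS(y_1,x_0)$ for all $x\in X_1$. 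Substituting $y_1\mapsto e_\specialopT$ instead of $y_2$ and repeating the argument verbatim yields $s'_x\theoryeq{S}\specialopS(y_2,x_0)$ for all $x\in X_2$. Since $X_1$ and $X_2$ are nonempty, this establishes the statement for $\specialopS(\specialopT(y_1,y_2),x_0)$; the statement for $\specialopS(x_0,\specialopT(y_1,y_2))$ follows from the symmetric argument, swapping the two arguments of $\specialopS$ throughout.

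The only delicate point I anticipate is the middle step: one must check that the substitution $y_2\mapsto e_\specialopT$, although it inserts a $\bb{T}$-constant into $\bb{S}$-subterms, still leaves a separated term — which is exactly what the multiplicative-zero \refprop~\ref{propnary} guarantees, since each affected $s'_x$ collapses entirely to $e_\specialopT$ — so that essential uniqueness remains applicable. Everything else is a routine combination of the unit axiom \ref{ax:tunit} with the variable-constraint axiom \ref{ax:tvar1}.
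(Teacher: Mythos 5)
Your proposal is correct and follows essentially the same route as the paper's proof: both bootstrap from \reflem~\ref{t_0(s_i)}, substitute a unit constant $e_\specialopT$ for one of $y_1,y_2$, use \refprop~\ref{propnary} to collapse the annihilated $\bb{S}$-subterms so the result stays separated, and then apply essential uniqueness together with \ref{ax:tvar1} to force all relevant $s'_x$ equal to the appropriate $\specialopS(y_i,x_0)$. The only cosmetic difference is that you substitute $y_2\mapsto e_\specialopT$ first (concluding about $X_1$) where the paper substitutes $y_1\mapsto e_\specialopT$ first (concluding about $X_2$); the two orders are symmetric.
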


\begin{proof}
Again, we only explicitly prove the statements for $\specialopS(\specialopT(y_1,y_2), x_0)$. The second half of the claim follows similarly.

As $\bb{U}$ is a composite theory, we know from the separation axiom that there is a $\bb{T}$-term $X \vdash t'$ and a family of $\bb{S}$-terms $s'_x,\, x \in X$ such that:
  \[
    \specialopS(\specialopT(y_1,y_2), x_0) \theoryeq{U} t'[s'_x/x].
  \]
We substitute $y_1 \mapsto e_\specialopT$:
\begin{align}
  & \specialopS(\specialopT(e_\specialopT,y_2), x_0) \theoryeq{U} t'[s'_x[e_\specialopT/y_1]/x] \nonumber\\
  \Rightarrow \;\; & \reason{e_\specialopT \text{ is the unit of } \specialopT} \nonumber\\
  & \specialopS(y_2,x_0) \theoryeq{U} t'[s'_x[e_\specialopT/y_1]/x] \nonumber\\
  \Rightarrow \;\; & \reason{\text{showing that $\specialopS(y_2,x_0)$ is a separated term}} \nonumber\\
  & z[\specialopS(y_2,x_0)/z] \theoryeq{U} t'[s'_x[e_\specialopT/y_1]/x].\label{eq:z-t0}
\end{align}

To use essential uniqueness, we need two separated terms. However, $t'[s'_x[e_\specialopT/y_1]/x]$ is a $\bb{T}$-term built out of $\bb{S}$-terms with possibly a $\bb{T}$-constant in them. So we need to separate this term. We use \refprop~\ref{propnary} in combination with our knowledge from \reflem~\ref{t_0(s_i)} about the variables appearing in each $s'_x$ to do this. Define:
\begin{align*}
X_1 & = \{x \in \var(t') \given \var(s'_x) = \{y_1, x_0\}\} \\
X_2 & = \{x \in \var(t') \given \var(s'_x) = \{y_2, x_0\}\}.
\end{align*}
From \reflem~\ref{t_0(s_i)} we know that neither $X_1$ nor $X_2$ is empty and their union contains all variables in $t'$. For each $x \in X_1$, we use property~\ref{ax:newsnary-all} to apply \refprop~\ref{propnary}, which tells us $s'_x[e_\specialopT/y_1] \theoryeq{U} e_\specialopT$. For each $x \in X_2$, we know that since $y_1$ does not appear in $s'_x$, $s'_x[e_\specialopT/y_1] \theoryeq{U} s'_x$. Therefore:
\begin{equation}\label{eq:t'}
t'[s'_x[e_\specialopT/y_1]/x] \theoryeq{T} t'[e_\specialopT/x \in X_1, s'_x/x \in X_2].
\end{equation}
Next, define:
\[
t'' = t'[e_\specialopT/x \in X_1].
\]
Then $X_2 \vdash_\bb{T} t''$. We have:
\begin{align*}
   & t''[s'_x/x] \\
  \theoryeq{U} \;  & \reason{\text{Definition of } t''} \\
   & t'[e_\specialopT/x \in X_1, s'_x/x \in X_2]  \\
  \theoryeq{U} \; & \reason{\text{Equation~\eqref{eq:t'}} }\\
   \; &  t'[s'_x[e_\specialopT/y_1]/x] \\
  \theoryeq{U} \; & \reason{\text{Equation~\eqref{eq:z-t0}}} \\
   &  z[\specialopS(y_2,x_0)/z].
\end{align*}
Now we can use the essential uniqueness property, and conclude that there are functions $\essuniqfunctionA: \{z\} \rightarrow Z, \essuniqfunctionB: X_2 \rightarrow Z$ such that:
\begin{align}\label{t-knowledge3}
& z[\essuniqfunctionA(z)/z] \theoryeq{T} t''[\essuniqfunctionB(x)/x].
\end{align}
Furthermore, we have:
\begin{align}
   \essuniqfunctionB(x) = \essuniqfunctionA(z) \Leftrightarrow s'_x & \theoryeq{S} \specialopS(y_2,x_0).\label{s-knowledge3}
\end{align}

From \refeqn~\eqref{t-knowledge3} and assumption~\ref{ax:tvar1} we conclude that $\{\essuniqfunctionA(z)\} \vdash t''[\essuniqfunctionB(x)/x]$. And hence for all $x \in X_2$, $\essuniqfunctionB(x) = \essuniqfunctionA(z)$. So by \refeqn~\eqref{s-knowledge3}, for each $x \in X_2$: $s'_x \theoryeq{S} \specialopS(y_2,x_0)$, which gives us half of the desired conclusion about each $s'_x, x \in \var(t')$.

 A similar argument using the substitution $y_2 \mapsto e_\specialopT$ instead of $y_1 \mapsto e_\specialopT$ leads to the conclusion that for each $x \in X_1$, $s'_x \theoryeq{S} \specialopS(y_1, x_0)$.
\end{proof}

And finally, we prove the last step, in which we show that $\specialopS$ distributes over $\specialopT$. This yields a proof of \refthm~\ref{thm:times-over-plus}. We restate the theorem for convenience:

\begin{thm}[Times over Plus Theorem]
Let $\bb{S}$ and $\bb{T}$ be two algebraic theories, satisfying~\ref{ax:svar0},~\ref{ax:svar1},~\ref{ax:newsnary-all} and~\ref{ax:tvar0},~\ref{ax:tvar1},~\ref{ax:tconst} respectively. Assume furthermore that there are terms:
 \[
 2 \vdash_{\mathbb{S}} \specialopS \qquad\text{ and }\qquad 2 \vdash_{\mathbb{T}} \specialopT,
 \]
 satisfying~\ref{ax:sbinary} and~\ref{ax:tunit} respectively. Finally, let $\bb{U}$ be a composite theory of $\bb{T}$ after $\bb{S}$. Then $\specialopS$ distributes over $\specialopT$:
\begin{align}
    \specialopS(\specialopT(y_1,y_2), x_0) & \theoryeq{U} \specialopT(\specialopS(y_1,x_0),\specialopS(y_2,x_0)) \\
    \specialopS(x_0, \specialopT(y_1,y_2)) & \theoryeq{U} \specialopT(\specialopS(x_0,y_1),\specialopS(x_0,y_2)).
\end{align}
\end{thm}

\begin{proof}
Again, we only explicitly prove the first statement, as the proof of the second statement is similar, using the appropriate parts of \reflems~\ref{t_0(s_i)} and~\ref{t_0-diamond}.

From the separation axiom of composite theories and \reflem~\ref{t_0-diamond} we know that there is a $\bb{T}$-term $X \vdash t'$ and a family of $\bb{S}$-terms $s'_x,\, x \in X$ such that either $s'_x = \specialopS(y_1,x_0)$ or $s'_x = \specialopS(y_2,x_0)$ and:
 \[
   \specialopS(\specialopT(y_1,y_2), x_0) \theoryeq{U} t'[s'_x/x].
 \]
Define:
\begin{align*}
X_1 & = \{x \in \var(t') \given s'_x = \specialopS(y_1,x_0)\} \\
X_2 & = \{x \in \var(t') \given s'_x = \specialopS(y_2,x_0)\}.
\end{align*}
Then, using the substitution $x_0 \mapsto e_\specialopS$, we get:
\begin{align*}
& \specialopT(y_1,y_2) \\
\theoryeq{U} \; & \reason{e_\specialopS \text{ is the unit of } \specialopS} \\
& \specialopS(\specialopT(y_1,y_2), e_\specialopS) \\
\theoryeq{U} \; & \reason{\text{Substitution}} \\
& \specialopS(\specialopT(y_1,y_2), x_0)[e_\specialopS/x_0] \\
\theoryeq{U} \; & \reason{\text{\reflem~\ref{t_0-diamond}}} \\
& t'[s'_x[e_\specialopS/x_0]/x] \\
\theoryeq{U} \; & \reason{s'_x = \specialopS(y_1,x_0) \text{ or } s'_x = \specialopS(y_2,x_0)\right.  \\
& \left. \;\;\;\text{ and } e_\specialopS \text{ is the unit of } \specialopS} \\
& t' [y_1/x \in X_1, y_2/x \in X_2].
\end{align*}
So:
\begin{equation}\label{eq:theorem1}
  \specialopT(y_1,y_2) \theoryeq{U} t' [y_1/x \in X_1, y_2/x \in X_2].
\end{equation}
We also have:
\begin{align*}
& t'[s'_x/x]\\
\theoryeq{U} \; & \reason{\text{\reflem~\ref{t_0-diamond}}} \\
& t'[\specialopS(y_1,x_0)/x \in X_1, \specialopS(y_2,x_0)/x \in X_2] \\
\theoryeq{U}\; & \reason{\text{Substitution: } \text{term} = \text{variable}[\text{term}/\text{variable}]} \\
& t'[y_1[\specialopS(y_1,x_0)/y_1]/x \in X_1, y_2[\specialopS(y_2,x_0)/y_2]/x \in X_2].
\end{align*}
So we conclude:
\begin{align*}
& t'[s'_x/x] \\
\theoryeq{U} \; & \reason{\text{by the reasoning above}} \\
 &  t'[y_1[\specialopS(y_1,x_0)/y_1]/x \in X_1, y_2[\specialopS(y_2,x_0)/y_2]/x \in X_2] \\
\theoryeq{U} \; & \reason{\text{Equation~\eqref{eq:theorem1}}} \\
&  \specialopT(y_1,y_2)[\specialopS(y_1,x_0)/y_1, \specialopS(y_2,x_0)/y_2] \\
\theoryeq{U} \; &  \specialopT(\specialopS(y_1,x_0),\specialopS(y_2,x_0)),
\end{align*}
which proves the theorem.
\end{proof}

In suitable cases, \refthm~\ref{thm:times-over-plus} reduces the search space for distributive laws to a single possibility. From \refprop~\ref{thm:distlaw-from-compth} we know that the action of distributive laws is determined by the separated terms in the composite theory. And so:
\begin{cor}\label{cor:upperlimit}
   Let $S$ and $T$ be two monads presented by algebraic theories $\bb{S}$ and $\bb{T}$, having signatures with at least one constant and one binary operation. If for both theories the constant acts as a unit for the binary operation and the theories further satisfy~\ref{ax:svar0},~\ref{ax:svar1},~\ref{ax:newsnary-all}, and~\ref{ax:tvar0} and~\ref{ax:tvar1} respectively, then any distributive law $S \circ T \Rightarrow T \circ S$ distributes the binary from $\bb{S}$ over the binary from $\bb{T}$ as in \refeqn~\eqref{timesoverplus}.
\end{cor}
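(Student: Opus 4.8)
The plan is to deduce this corollary directly from the Times over Plus Theorem (\refthm~\ref{thm:times-over-plus}) together with the correspondence between distributive laws and composite theories established in \refprop~\ref{thm:theoryfromdistlaw} and \refprop~\ref{thm:distlaw-from-compth}. First I would check that the hypotheses of the corollary are exactly those of \refthm~\ref{thm:times-over-plus}: ``$\bb{S}$ and $\bb{T}$ have signatures with at least one constant'' supplies \ref{ax:tconst}; ``for both theories the constant acts as a unit for the binary operation'' supplies the binary term $2 \vdash_{\mathbb{S}} \specialopS$ satisfying \ref{ax:sbinary} and the binary term $2 \vdash_{\mathbb{T}} \specialopT$ satisfying \ref{ax:tunit}; and the remaining variable-constraint hypotheses \ref{ax:svar0}, \ref{ax:svar1}, \ref{ax:newsnary-all} for $\bb{S}$ and \ref{ax:tvar0}, \ref{ax:tvar1} for $\bb{T}$ are assumed outright.

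Next, given an arbitrary distributive law $\lambda: S \circ T \Rightarrow T \circ S$, I would invoke \refprop~\ref{thm:theoryfromdistlaw} to obtain the composite theory $\bb{TS}^\lambda$ of $\bb{T}$ after $\bb{S}$, whose signature is $\Sigma^\bb{S} \uplus \Sigma^\bb{T}$ and whose axioms are $E^\bb{S} \cup E^\bb{T} \cup E^\lambda$. Setting $\bb{U} = \bb{TS}^\lambda$ and applying \refthm~\ref{thm:times-over-plus} yields
\[
  \specialopS(\specialopT(y_1,y_2), x_0) \theoryeq{U} \specialopT(\specialopS(y_1,x_0),\specialopS(y_2,x_0))
\]
and
\[
  \specialopS(x_0, \specialopT(y_1,y_2)) \theoryeq{U} \specialopT(\specialopS(x_0,y_1),\specialopS(x_0,y_2)).
\]

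Finally I would translate these equations back into a statement about $\lambda$. The left-hand term $\specialopS(\specialopT(y_1,y_2),x_0)$ is a representative of an element of $STX$ (an $\bb{S}$-term with $\bb{T}$-terms substituted for its variables), while the right-hand term $\specialopT(\specialopS(y_1,x_0),\specialopS(y_2,x_0))$ is a separated term, i.e. a representative of an element of $TSX$. By the definition of $E^\lambda$ — equivalently, by \refprop~\ref{thm:distlaw-from-compth}, which says the distributive law induced by a composite theory sends the equivalence class of $s[t_x/x]$ to the equivalence class of a separated term equal to it — an equation of exactly this shape holding in $\bb{TS}^\lambda$ is precisely the assertion that $\lambda$ maps the equivalence class of $\specialopS(\specialopT(y_1,y_2),x_0)$ to that of $\specialopT(\specialopS(y_1,x_0),\specialopS(y_2,x_0))$, which is the ``times over plus'' behaviour of \refeqn~\eqref{timesoverplus} (together with its left-handed counterpart). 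I expect the only delicate point to be the bookkeeping in this last step: one must be careful that the composite theory fed into \refthm~\ref{thm:times-over-plus} is the one \emph{canonically induced by $\lambda$}, so that the resulting distributivity equations genuinely constrain $\lambda$ rather than some unrelated composite theory. This is exactly what \refprop~\ref{thm:theoryfromdistlaw} guarantees (and is packaged together with \refprop~\ref{thm:distlaw-from-compth} in \refcor~\ref{cor:presentationforcomposite}), so no further work is needed beyond citing it carefully.
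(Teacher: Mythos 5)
Your proposal is correct and follows essentially the same route as the paper, which derives the corollary immediately from \refthm~\ref{thm:times-over-plus} together with the observation (via \refprop~\ref{thm:theoryfromdistlaw} and \refprop~\ref{thm:distlaw-from-compth}) that the action of a distributive law is determined by the separated terms of the composite theory it induces. The only difference is that you spell out the bookkeeping between $\lambda$, $\bb{TS}^\lambda$, and $E^\lambda$ explicitly, which the paper leaves implicit.
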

\begin{exa}[Unique Distributive Laws]\label{ex:uniquedistlaws}
 Let $S$ be any of the monads tree\footnote{respresented by the theory of a single binary operation and a constant, satisfying only left and right unitality.}, list, or multiset. Then the corresponding algebraic theory $\bb{S}$ contains only linear equations. Let $T$ be either the multiset or powerset monad. Since the multiset and powerset monads are commutative monads, we know that there is a distributive law $S \circ T \Rightarrow T \circ S$~\cite[Theorem 4.3.4]{ManesMulry2007}. Corollary~\ref{cor:upperlimit} states that this distributive law is unique. In particular, the distributive law for the multiset monad over itself mentioned in \refex~\ref{ex:multisetdistlaw} is unique.
\end{exa}

\subsection{Lacking the Abides Property: a No-Go Theorem}

With \refthm~\ref{thm:times-over-plus} narrowing down the possible distributive laws for two monads, it is easier to find cases in which no distributive law can exist at all. We identify two properties that clash with \refthm~\ref{thm:times-over-plus}, one for $\bb{T}$ and one for $\bb{S}$. In this section we show that not satisfying the abides equation, \refpro~\ref{ax:tspecialproperty}, in combination with \refthm~\ref{thm:times-over-plus} prevents the existence of a distributive law. In the next section we do the same for idempotence, \refpro~\ref{ax:sidem}. Both properties are sufficiently common to cover a broad class of monads.

\begin{thm}[No-Go Theorem: Lacking Abides]%
\label{thm:nogoTreeList}
   Let $\bb{S}$ and $\bb{T}$ be algebraic theories satisfying the conditions of \refthm~\ref{thm:times-over-plus}, and assume that the binary $\specialopT$ in $\bb{T}$ promised by~\ref{ax:tunit} additionally satisfies~\ref{ax:tspecialproperty}, then there does not exist a composite theory of $\bb{T}$ after $\bb{S}$.
\end{thm}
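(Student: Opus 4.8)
The plan is to assume, for contradiction, that a composite theory~$\bb{U}$ of~$\bb{T}$ after~$\bb{S}$ exists, and to show that this forces the abides equation for~$\specialopT$ to hold on four distinct variables, contradicting~\ref{ax:tspecialproperty}. Since the hypotheses of~\refthm~\ref{thm:times-over-plus} hold, in~$\bb{U}$ the binary~$\specialopS$ distributes over the binary~$\specialopT$ on both sides. Fixing distinct variables~$a,b,c,d$, I would expand~$\specialopS(\specialopT(a,b),\specialopT(c,d))$ in~$\bb{U}$ in two ways: distributing over the first argument and then the second gives
\[
  \specialopS(\specialopT(a,b),\specialopT(c,d)) \theoryeq{U} \specialopT\bigl(\specialopT(\specialopS(a,c),\specialopS(a,d)),\specialopT(\specialopS(b,c),\specialopS(b,d))\bigr),
\]
while distributing over the second argument first gives
\[
  \specialopS(\specialopT(a,b),\specialopT(c,d)) \theoryeq{U} \specialopT\bigl(\specialopT(\specialopS(a,c),\specialopS(b,c)),\specialopT(\specialopS(a,d),\specialopS(b,d))\bigr).
\]
Writing~$p = \specialopS(a,c)$, $q = \specialopS(a,d)$, $r = \specialopS(b,c)$, $w = \specialopS(b,d)$, the two right-hand sides are the separated terms obtained from the single~$\bb{T}$-term $\specialopT(\specialopT(1,2),\specialopT(3,4))$ by substituting the~$\bb{S}$-terms~$(p,q,r,w)$ and~$(p,r,q,w)$ into its four variable slots. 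Equating them yields the ``abides instance''
\[
  \specialopT(\specialopT(p,q),\specialopT(r,w)) \theoryeq{U} \specialopT(\specialopT(p,r),\specialopT(q,w))
\]
in~$\bb{U}$.

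Next I would transfer this equation to~$\bb{T}$. A short computation, substituting~$e_\specialopS$ for one argument and using~\ref{ax:sbinary}, \ref{ax:svar0} and~\ref{ax:svar1}, shows that~$p,q,r,w$ are pairwise distinct in~$\bb{S}$: for example~$p \theoryeq{S} q$ would, after the substitution~$a \mapsto e_\specialopS$, give~$c \theoryeq{S} d$, contradicting~\ref{ax:svar1}; and~$p \theoryeq{S} w$ would give~$c \theoryeq{S} \specialopS(b,d)$, hence~$\var(\specialopS(b,d)) = \emptyset$, hence~$\emptyset \vdash c$ by~\ref{ax:svar0}, which is absurd. Applying essential uniqueness (\refthm~\ref{thm:essentialuniqueness}) to the two separated terms above then produces relabelling functions~$g,g'\colon\{1,2,3,4\}\to Z$ with
\[
  \specialopT(\specialopT(g(1),g(2)),\specialopT(g(3),g(4))) \theoryeq{T} \specialopT(\specialopT(g'(1),g'(2)),\specialopT(g'(3),g'(4))),
\]
where~$g$ identifies two slots exactly when the corresponding entries of~$(p,q,r,w)$ are~$\bb{S}$-equal, $g'$ likewise for~$(p,r,q,w)$, and~$g(i) = g'(j)$ exactly when the~$i$-th entry of~$(p,q,r,w)$ is~$\bb{S}$-equal to the~$j$-th entry of~$(p,r,q,w)$. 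Pairwise distinctness of~$p,q,r,w$ makes~$g$ and~$g'$ injective, and matching up equal entries across the two tuples forces~$g'(1) = g(1)$, $g'(2) = g(3)$, $g'(3) = g(2)$, $g'(4) = g(4)$. Substituting, the displayed~$\bb{T}$-equation becomes
\[
  \specialopT(\specialopT(g(1),g(2)),\specialopT(g(3),g(4))) \theoryeq{T} \specialopT(\specialopT(g(1),g(3)),\specialopT(g(2),g(4))),
\]
an instance of the abides equation with four \emph{distinct} variables~$g(1),\ldots,g(4)$.

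Finally, since such an instance holds if and only if the abides equation holds for~$1,2,3,4$, property~\ref{ax:tspecialproperty} applies and gives~$3 \vdash \specialopT(\specialopT(1,2),\specialopT(3,4))$. But~$\specialopT$ depends genuinely on both of its arguments: by~\ref{ax:tunit} and~\ref{ax:tvar0}, if the variable~$2$ did not occur in~$\specialopT(1,2)$, then~$\specialopT(e_\specialopT,2)$ would be a variable-free term that is~$\bb{T}$-equal to the variable~$2$, contradicting~\ref{ax:tvar0}. Hence~$\specialopT(\specialopT(1,2),\specialopT(3,4))$ genuinely contains the variable~$4$, contradicting~$3 \vdash \specialopT(\specialopT(1,2),\specialopT(3,4))$; so no composite theory of~$\bb{T}$ after~$\bb{S}$ exists. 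I expect the transfer step to be the delicate part: one must verify that both two-order expansions really are separated terms, establish pairwise~$\bb{S}$-distinctness of~$p,q,r,w$ so that essential uniqueness pins down~$g$ and~$g'$ precisely, and track which slots~$g$ and~$g'$ must agree on in order to recover the full abides equation rather than a weaker consequence.
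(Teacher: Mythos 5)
Your proposal is correct and follows essentially the same route as the paper: both expand $\specialopS(\specialopT(a,b),\specialopT(c,d))$ in two orders using \refthm~\ref{thm:times-over-plus}, apply essential uniqueness to the two separated right-hand sides, and derive a violation of \ref{ax:tspecialproperty}, with the impossibility of identifying any two of $\specialopS(a,c),\specialopS(a,d),\specialopS(b,c),\specialopS(b,d)$ established via the unit substitution $e_\specialopS$ and \ref{ax:svar1}. The only difference is organizational: you prove pairwise distinctness of the four $\bb{S}$-terms up front so that the abides instance lands on four distinct variables, whereas the paper first invokes \ref{ax:tspecialproperty} to force a coincidence $z_i = z_j$ and then refutes the resulting $\bb{S}$-equality — the same content in contrapositive order.
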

\begin{proof}
Suppose there exists a composite theory $\bb{U}$. Given \refthm~\ref{thm:times-over-plus}, we compute a separated term equal in $\bb{U}$ to $\specialopS(\specialopT(y_1,y_2), \specialopT(y_3,y_4))$:

\begin{align}
   & \specialopS(\specialopT(y_1,y_2), \specialopT(y_3,y_4)) \nonumber\\
\theoryeq{U}\; & \reason{\text{Substitution}} \nonumber\\
   & \specialopS(\specialopT(y_1,y_2),x_0)[\specialopT(y_3,y_4)/x_0] \nonumber\\
\theoryeq{U}\; & \reason{\text{Equation~\eqref{theoremeq1} from \refthm~\ref{thm:times-over-plus}}} \nonumber\\
   & \specialopT(\specialopS(y_1,x_0), \specialopS(y_2,x_0))[\specialopT(y_3,y_4)/x_0] \nonumber\\
\theoryeq{U}\; & \reason{\text{Substitution}} \nonumber\\
   & \specialopT(\specialopS(y_1,\specialopT(y_3,y_4)), \specialopS(y_2,\specialopT(y_3,y_4))) \nonumber\\
\theoryeq{U}\; & \reason{\text{Equation~\eqref{theoremeq2} from \refthm~\ref{thm:times-over-plus}}} \nonumber\\
   & \specialopT(\specialopT(\specialopS(y_1,y_3),\specialopS(y_1,y_4)), \specialopT(\specialopS(y_2,y_3),\specialopS(y_2,y_4))) .\label{eq:nogoTreeList1-appendix}
\end{align}
Notice that we made a choice, taking out the right $\specialopT$ term in
\[
  \specialopS(\specialopT(y_1,y_2), \specialopT(y_3,y_4)) \theoryeq{U} \specialopS(\specialopT(y_1,y_2),x_0)[\specialopT(y_3,y_4)/x_0],
\]
rather than the left:
\[
  \specialopS(\specialopT(y_1,y_2), \specialopT(y_3,y_4)) \theoryeq{U} \specialopS(x_0, \specialopT(y_3,y_4))[\specialopT(y_1,y_2)/x_0].
\]
The latter option yields:
\begin{align}
    & \specialopS(\specialopT(y_1,y_2), \specialopT(y_3,y_4)) \nonumber \\
\theoryeq{U}\; & \reason{\text{Substitution}} \nonumber\\
    & \specialopS(x_0, \specialopT(y_3,y_4))[\specialopT(y_1,y_2)/x_0] \nonumber\\
\theoryeq{U}\; & \reason{\text{Equation~\eqref{theoremeq2} from \refthm~\ref{thm:times-over-plus}}} \nonumber\\
    & \specialopT(\specialopS(x_0,y_3), \specialopS(x_0,y_4))[\specialopT(y_1,y_2)/x_0] \nonumber\\
\theoryeq{U}\; & \reason{\text{Substitution}} \nonumber \\
    & \specialopT(\specialopS(\specialopT(y_1,y_2),y_3), \specialopS(\specialopT(y_1,y_2), y_4)) \nonumber\\
\theoryeq{U}\; & \reason{\text{Equation~\eqref{theoremeq1} from \refthm~\ref{thm:times-over-plus}}} \nonumber\\
 & \specialopT(\specialopT(\specialopS(y_1,y_3),\specialopS(y_2,y_3)),\specialopT(\specialopS(y_1,y_4),\specialopS(y_2,y_4))) .\label{eq:nogoTreeList2-appendix}
\end{align}
Of course, both computations are equally valid, so the terms in \refeqns~\eqref{eq:nogoTreeList1-appendix} and~\eqref{eq:nogoTreeList2-appendix} must be equal:
\begin{align*}
& \specialopT(\specialopT(\specialopS(y_1,y_3),\specialopS(y_1,y_4)),\specialopT(\specialopS(y_2,y_3),\specialopS(y_2,y_4))) \\ \theoryeq{U}\; & \specialopT(\specialopT(\specialopS(y_1,y_3),\specialopS(y_2,y_3)),\specialopT(\specialopS(y_1,y_4),\specialopS(y_2,y_4))).
\end{align*}
Since these are two separated terms that are equal, we can apply the essential uniqueness property, stating that there exist functions:
\begin{align*}
  \essuniqfunctionA:&\: \{x_1,x_2,x_3,x_4\} \rightarrow Z\\
  \essuniqfunctionB:&\: \{x_5,x_6,x_7,x_8\} \rightarrow Z,
\end{align*}
such that:
\begin{itemize}
  \item Equality in $\bb{T}$:
         \begin{align*}
         & \specialopT(\specialopT(\essuniqfunctionA(x_1),\essuniqfunctionA(x_2)), \specialopT(\essuniqfunctionA(x_3),\essuniqfunctionA(x_4))) \\ \theoryeq{T}\; &  \specialopT(\specialopT(\essuniqfunctionB(x_5),\essuniqfunctionB(x_6)), \specialopT(\essuniqfunctionB(x_7),\essuniqfunctionB(x_8))).
         \end{align*}
  \item $\essuniqfunctionA(x_i) = \essuniqfunctionB(x_j)$ iff the $\bb{S}$-terms substituted for $x_i$ and $x_j$ in \refeqns~\eqref{eq:nogoTreeList1-appendix} and~\eqref{eq:nogoTreeList2-appendix} are equal in $\bb{S}$.
\end{itemize}
From the second part of essential uniqueness we get that:
\begin{align*}
\essuniqfunctionA(x_1) & = \essuniqfunctionB(x_5) & \essuniqfunctionA(x_3) & = \essuniqfunctionB(x_6)\\
\essuniqfunctionA(x_2) & = \essuniqfunctionB(x_7) & \essuniqfunctionA(x_4) & = \essuniqfunctionB(x_8).
\end{align*}
For readability of the next argument, set:
\begin{align*}
z_1 = \essuniqfunctionA(x_1) & = \essuniqfunctionB(x_5) & z_3 = \essuniqfunctionA(x_3) & = \essuniqfunctionB(x_6)\\
z_2 = \essuniqfunctionA(x_2) & = \essuniqfunctionB(x_7) & z_4 = \essuniqfunctionA(x_4) & = \essuniqfunctionB(x_8).
\end{align*}
Putting this in the equality in $\bb{T}$ we found under the first bullet point yields:
\begin{align*}
 \specialopT(\specialopT(z_1,z_2), \specialopT(z_3,z_4)) & \theoryeq{T} \specialopT(\specialopT(z_1,z_3), \specialopT(z_2,z_4)).
\end{align*}
And so by \refpro~\ref{ax:tspecialproperty}:
\[
\#\var(\specialopT(\specialopT(z_1,z_2), \specialopT(z_3,z_4))) \leq 3.
\]
So there must be $i,j$ such that $i \neq j$ and $z_i = z_j$. Suppose without loss of generality that $z_1 = z_2$. Then by essential uniqueness we must have that $\specialopS(y_1,y_3) \theoryeq{S} \specialopS(y_1,y_4)$. But then we can reason:
\begin{align*}
   & y_1 \\
   \theoryeq{S} \; & \reason{ e_\specialopS \text{ is a unit for } \specialopS } \\
   & \specialopS(y_1, e_\specialopS) \\
   \theoryeq{S} \; & \reason{\text{Substitution}} \\
   &  \specialopS(y_1, y_3)[e_\specialopS/y_3] \\
   \theoryeq{S} \; & \reason{\specialopS(y_1, y_3) = \specialopS(y_1, y_4)} \\
   & \specialopS(y_1, y_4)[e_\specialopS/y_3] \\
   \theoryeq{S} \; & \reason{\text{Substitution: no } y_3 \text{ in } \specialopS(y_1, y_4) } \\
   & \specialopS(y_1, y_4).
\end{align*}

We conclude that, by \refpro~\ref{ax:svar1}: $\{y_1\} \vdash \specialopS(y_1, y_4)$, and so we must have $y_4 = y_1$. Since these variables are assumed to be distinct, we have a contradiction.
The same argument holds for any other $i,j$ pair. Therefore, the existence of a composite theory leads to a contradiction. In other words, no such composite theory exists.
\end{proof}
\begin{cor}
 If monads~$S$ and~$T$ are presented by algebraic theories~$\bb{S}$ and~$\bb{T}$, satisfying the axioms of \refthm~\ref{thm:nogoTreeList}, then there does not exist a distributive law $S \circ T \Rightarrow T \circ S$.
\end{cor}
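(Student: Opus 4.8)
The plan is to reduce this corollary directly to \refthm~\ref{thm:nogoTreeList} using the algebraic characterization of distributive laws. First I would assume, for contradiction, that a distributive law $\lambda : S \circ T \Rightarrow T \circ S$ exists. Since $S$ and $T$ are presented by the algebraic theories $\bb{S}$ and $\bb{T}$, I can invoke \refthm~\ref{thm:distlaw-vs-compositetheory} (the theorem of Pir\'og and Staton), which tells us that the existence of such a distributive law is equivalent to the existence of a composite theory of $\bb{T}$ after $\bb{S}$. In fact \refprop~\ref{thm:theoryfromdistlaw} gives an explicit such composite, namely $\bb{TS}^\lambda$, but for this argument only the bare existence matters.

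The second step is to observe that the hypotheses are exactly those of \refthm~\ref{thm:nogoTreeList}: by assumption $\bb{S}$ and $\bb{T}$ satisfy all the conditions of \refthm~\ref{thm:times-over-plus} (namely \ref{ax:svar0}, \ref{ax:svar1}, \ref{ax:newsnary-all} for $\bb{S}$ and \ref{ax:tvar0}, \ref{ax:tvar1}, \ref{ax:tconst} for $\bb{T}$, together with the binary terms $\specialopS$ and $\specialopT$ satisfying \ref{ax:sbinary} and \ref{ax:tunit}), and in addition $\specialopT$ satisfies \ref{ax:tspecialproperty}, the failure of the abides equation. \refthm~\ref{thm:nogoTreeList} then asserts that no composite theory of $\bb{T}$ after $\bb{S}$ can exist. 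This contradicts the composite theory produced in the first step, so no distributive law $S \circ T \Rightarrow T \circ S$ can exist.

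Since this is a routine transport of a statement about algebraic theories to the corresponding statement about monads, I do not anticipate any real obstacle; the only point requiring a line of care is checking that the ``axioms of \refthm~\ref{thm:nogoTreeList}'' — which are phrased for the theories $\bb{S}$ and $\bb{T}$ — are precisely the hypotheses assumed for the presenting theories in the corollary, so that \refthm~\ref{thm:distlaw-vs-compositetheory} can be applied in the one nontrivial direction (distributive law $\Rightarrow$ composite theory). Everything else is immediate.

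\begin{proof}
  Suppose, for contradiction, that a distributive law $\lambda : S \circ T \Rightarrow T \circ S$ exists. Since $S$ and $T$ are presented by $\bb{S}$ and $\bb{T}$, \refthm~\ref{thm:distlaw-vs-compositetheory} yields a composite theory of $\bb{T}$ after $\bb{S}$. But $\bb{S}$ and $\bb{T}$ satisfy the conditions of \refthm~\ref{thm:times-over-plus} and $\specialopT$ additionally satisfies \refpro~\ref{ax:tspecialproperty}, so by \refthm~\ref{thm:nogoTreeList} no composite theory of $\bb{T}$ after $\bb{S}$ exists. This is a contradiction, so there is no distributive law $S \circ T \Rightarrow T \circ S$.
\end{proof}
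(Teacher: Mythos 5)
Your proposal is correct and is exactly the argument the paper intends (and leaves implicit): combine \refthm~\ref{thm:distlaw-vs-compositetheory} with the no-go result for composite theories in \refthm~\ref{thm:nogoTreeList} to derive the contradiction. No gaps; the one point you flag — checking that the hypotheses on the presenting theories are literally those of \refthm~\ref{thm:nogoTreeList} — is indeed the only thing to verify, and it holds by assumption.
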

\begin{exa}[Resolving an Open Question]
  This finally settles the question of whether the list monad distributes over itself, posed repeatedly by Manes and Mulry~\cite{ManesMulry2007,ManesMulry2008}. The theory of monoids satisfies all the conditions required of both theories in \refthm~\ref{thm:nogoTreeList}, and hence there is no distributive law for the list monad over itself.

  Note that a distributive law for lists was claimed by King and Wadler~\cite{KingWadler1993}, although it was subsequently shown to be incorrect by Jones and Duponcheel~\cite{JonesDuponcheel1993}.
\end{exa}
\begin{rem}
  Although there is no distributive law for the list monad over itself, the functor $LL$ does still carry a monad structure. We are very grateful to Bartek Klin for pointing this out to us. The monad structure on $LL$ can be described as follows:
  \begin{itemize}
    \item There is a distributive law for the list monad over the non-empty list monad $L \circ L^+ \Rightarrow L^+ \circ L$~\cite{ManesMulry2007}.
    \item There is a distributive law for the resulting monad over the maybe monad $(L^+L) \circ {(-)}_\bot \Rightarrow {(-)}_\bot \circ (L^+L)$, derived from general principles~\cite{ManesMulry2007}.
    \item The resulting functor ${(-)}_\bot \circ (L^+ \circ L)$ is isomorphic to $L \circ L$, and carries a monad structure. Hence $L \circ L$ carries a monad structure, but not one that can be derived from a distributive law $L \circ L \Rightarrow L \circ L$.
  \end{itemize}
\end{rem}
\begin{nonexample}[Multiset Monad]\label{counter:multisetNolist}
The multiset monad is closely related to the list monad, with an algebraic theory having just one extra equation compared to the list monad: commutativity. Because of this equation, the theory does not have \refpro~\ref{ax:tspecialproperty}. As we have seen in \refex~\ref{ex:uniquedistlaws}, there is a unique distributive law for the multiset monad over itself.
\end{nonexample}

\subsection{Yet Another No-Go Theorem Caused by Idempotence}
In \refsec~\ref{section:Plotkin} we saw a no-go theorem that required an idempotent term on one side, and a unital one on the other: \refthm~\ref{thm:NoGo-Videm-Punit}. We will now see a second theorem of this type. Adding idempotence of the binary to the assumptions for $\bb{S}$ yields yet another no-go theorem, which partly overlaps with \refthm~\ref{thm:NoGo-Videm-Punit}, but neither theorem is a consequence of the other.

\begin{thm}[No-Go Theorem: Idempotence and Units]%
 \label{thm:nogoTreeIdem}
  Let $\bb{S}$ and $\bb{T}$ be algebraic theories satisfying~\ref{ax:svar0},~\ref{ax:svar1},~\ref{ax:newsnary-all} and~\ref{ax:tvar0},~\ref{ax:tvar1},~\ref{ax:tconst} respectively. Assume furthermore that there are terms
\[
 2 \vdash_{\mathbb{S}} \specialopS \qquad\text{ and }\qquad 2 \vdash_{\mathbb{T}} \specialopT,
 \]
 satisfying~\ref{ax:sbinary},~\ref{ax:sidem} and~\ref{ax:tunit} respectively. Then there exists no composite theory of $\bb{T}$ after $\bb{S}$.
\end{thm}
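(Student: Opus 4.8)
Assume for a contradiction that a composite theory $\bb{U}$ of $\bb{T}$ after $\bb{S}$ exists. All hypotheses of \refthm~\ref{thm:times-over-plus} are in force, so $\specialopS$ distributes over $\specialopT$ in $\bb{U}$, giving Equations~\eqref{theoremeq1} and~\eqref{theoremeq2}. The plan is to evaluate the single term $\specialopS(\specialopT(y_1,y_2),\specialopT(y_1,y_2))$ in two different ways, obtaining two separated terms that must be provably equal, and to derive a contradiction from essential uniqueness.

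First, substituting $\specialopT(y_1,y_2)$ for the unique variable in the idempotence law \ref{ax:sidem} gives
\[
  \specialopS(\specialopT(y_1,y_2),\specialopT(y_1,y_2)) \theoryeq{U} \specialopT(y_1,y_2).
\]
Second, applying \eqref{theoremeq2} with $x_0=\specialopT(y_1,y_2)$, then \eqref{theoremeq1} to each of the two resulting subterms, and finally simplifying $\specialopS(y_1,y_1)\theoryeq{S} y_1$ and $\specialopS(y_2,y_2)\theoryeq{S} y_2$ by \ref{ax:sidem}, gives
\[
  \specialopS(\specialopT(y_1,y_2),\specialopT(y_1,y_2)) \theoryeq{U}
  \specialopT\bigl(\specialopT(y_1,\specialopS(y_2,y_1)),\,\specialopT(\specialopS(y_1,y_2),y_2)\bigr).
\]
Hence $\specialopT(y_1,y_2)\theoryeq{U}\specialopT(\specialopT(y_1,\specialopS(y_2,y_1)),\specialopT(\specialopS(y_1,y_2),y_2))$, an equality between two separated terms: on the left, the $\bb{T}$-context is $\{w_1,w_2\}$ with associated $\bb{S}$-terms $y_1,y_2$; on the right, the $\bb{T}$-context is $\{w_1,w_2,w_3,w_4\}$ with associated $\bb{S}$-terms $y_1,\specialopS(y_2,y_1),\specialopS(y_1,y_2),y_2$.

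Now apply essential uniqueness (\refthm~\ref{thm:essentialuniqueness}) to get functions $f,g$ into a common set $Z$. Since $y_1$ labels $w_1$ on both sides we get $f(w_1)=g(w_1)=:a$; since $y_2$ labels $w_2$ on the left and $w_4$ on the right, $f(w_2)=g(w_4)=:b$; and $a\neq b$ because $y_1\not\theoryeq{S} y_2$. Write $c:=g(w_2)$ and $d:=g(w_3)$. The crucial observation is that $c,d\notin\{a,b\}$: by \ref{ax:svar0} the binary term $\specialopS$ must genuinely depend on both of its arguments (otherwise its unit law from \ref{ax:sbinary} forces a variable to equal a variable-free term), so $\var(\specialopS(y_1,y_2))=\var(\specialopS(y_2,y_1))=\{y_1,y_2\}$, and then \ref{ax:svar1} forbids either $\specialopS(y_1,y_2)$ or $\specialopS(y_2,y_1)$ from being provably equal in $\bb{S}$ to $y_1$ or to $y_2$; via the equivalences of \refthm~\ref{thm:essentialuniqueness} this yields $c,d\notin\{a,b\}$. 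The $\bb{T}$-equality handed back by essential uniqueness therefore reads
\[
  \specialopT(a,b)\theoryeq{T}\specialopT\bigl(\specialopT(a,c),\,\specialopT(d,b)\bigr),
  \qquad a\neq b,\ \ c,d\notin\{a,b\}.
\]
To finish, split on whether $c=d$. If $c=d$, substitute the constant $e_\specialopT$ for both $a$ and $b$; using that $e_\specialopT$ is a unit for $\specialopT$ (\ref{ax:tunit}), the right-hand side collapses to $\specialopT(c,c)$, so $e_\specialopT\theoryeq{T}\specialopT(c,c)$, contradicting \ref{ax:tvar0} since $\specialopT(c,c)$ contains the variable $c$ (recall $\specialopT$ has a unit, hence is not a constant term). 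If $c\neq d$, substitute $e_\specialopT$ for $a$, $b$ and $d$; the right-hand side collapses to $c$, so $e_\specialopT\theoryeq{T} c$, again contradicting \ref{ax:tvar0}. Either way, no composite theory of $\bb{T}$ after $\bb{S}$ can exist. The step most prone to error is the essential-uniqueness bookkeeping: correctly reading off the images $f(w_i),g(w_j)$ and ruling out the degenerate coincidences. Note that the case $c=d$ is exactly the one that arises when $\specialopS$ is commutative, which is precisely the situation of the intended application $P\circ M\Rightarrow M\circ P$, so it must be treated and cannot be waved away.
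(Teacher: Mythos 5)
Your proof is correct and follows essentially the same route as the paper's: evaluate $\specialopS(\specialopT(y_1,y_2),\specialopT(y_1,y_2))$ once by \ref{ax:sidem} and once via \refthm~\ref{thm:times-over-plus}, then apply essential uniqueness to the two resulting separated terms. The only divergence is in how the final contradiction is extracted: the paper substitutes $e_\specialopT$ first to force $\{c,d\}\subseteq\{a,b\}$ and then contradicts \ref{ax:svar1}, whereas you use \ref{ax:svar1} first to show $c,d\notin\{a,b\}$ and then contradict \ref{ax:tvar0} via the unit substitution --- the same two facts deployed in the opposite order.
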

\begin{proof}
  Suppose such a composite theory $\bb{U}$ exists. Then we have:
  \begin{align*}
     & \specialopT(y_1,y_2) \\
  \theoryeq{U}\; & \reason{\text{\ref{ax:sidem}: } \specialopS \text{ is idempotent} } \\
     & \specialopS(\specialopT(y_1,y_2), \specialopT(y_1,y_2)) \\
  \theoryeq{U}\; & \reason{\text{Substitution}} \\
     & \specialopS(\specialopT(y_1,y_2),x_0)[\specialopT(y_1,y_2)/x_0] \\
  \theoryeq{U}\; & \reason{\text{Equation~\eqref{theoremeq1} from \refthm~\ref{thm:times-over-plus}}} \\
     & \specialopT(\specialopS(y_1,x_0), \specialopS(y_2,x_0))[\specialopT(y_1,y_2)/x_0] \\
  \theoryeq{U}\; & \reason{\text{Substitution}} \\
     & \specialopT(\specialopS(y_1,\specialopT(y_1,y_2)), \specialopS(y_2,\specialopT(y_1,y_2))) \\
  \theoryeq{U}\; & \reason{\text{Equation~\eqref{theoremeq2} from \refthm~\ref{thm:times-over-plus}}} \\
     & \specialopT(\specialopT(\specialopS(y_1,y_1),\specialopS(y_1,y_2)), \specialopT(\specialopS(y_2,y_1),\specialopS(y_2,y_2))) \\
  \theoryeq{U}\; & \reason{\text{\ref{ax:sidem}: } \specialopS \text{ is idempotent} } \\
     & \specialopT(\specialopT(y_1,\specialopS(y_1,y_2)), \specialopT(\specialopS(y_2,y_1),y_2)).
  \end{align*}

  From the essential uniqueness property, we may conclude that there are functions $\essuniqfunctionA: \{y_1,y_2\} \mapsto Z, \essuniqfunctionB: \{y'_1,y'_2,y'_3,y'_4\} \mapsto Z$ such that:
  \begin{align*}
    \specialopT(\essuniqfunctionA(y_1), \essuniqfunctionA(y_2)) & \theoryeq{T} \specialopT(\specialopT(\essuniqfunctionB(y'_1),\essuniqfunctionB(y'_2)), \specialopT(\essuniqfunctionB(y'_3), \essuniqfunctionB(y'_4))),
  \end{align*}
  and $\essuniqfunctionA(y_i) = \essuniqfunctionB(y'_j)$ if and only if the $\bb{S}$-terms substituted for $y_i$ and $y'_j$ in $\specialopT(y_1,y_2)$ and $\specialopT(\specialopT(y_1,\specialopS(y_1,y_2)), \specialopT(\specialopS(y_2,y_1),y_2))$ are equal. From this we immediately get:
  \begin{align*}
   \essuniqfunctionA(y_1) & = \essuniqfunctionB(y'_1) \\
   \essuniqfunctionA(y_2) & = \essuniqfunctionB(y'_4).
  \end{align*}
   We know from essential uniqueness that $\essuniqfunctionA(y_1) \neq \essuniqfunctionA(y_2)$.
   We show that $\{\essuniqfunctionB(y'_2), \essuniqfunctionB(y'_3)\} \subseteq \{\essuniqfunctionA(y_1),\essuniqfunctionA(y_2)\}$. Since we have:
   \[
   \specialopT(\essuniqfunctionA(y_1), \essuniqfunctionA(y_2)) \theoryeq{T} \specialopT(\specialopT(\essuniqfunctionB(y'_1), \essuniqfunctionB(y'_2)), \specialopT(\essuniqfunctionB(y'_3),\essuniqfunctionB(y'_4))),
   \]
   we also have:
   \begin{align*}
   & \specialopT(\essuniqfunctionA(y_1), \essuniqfunctionA(y_2))[e_\specialopT/\essuniqfunctionA(y_1),e_\specialopT/\essuniqfunctionA(y_2)] \\
   & \theoryeq{T} \specialopT(\specialopT(\essuniqfunctionB(y'_1), \essuniqfunctionB(y'_2)), \specialopT(\essuniqfunctionB(y'_3),\essuniqfunctionB(y'_4)))[e_\specialopT/\essuniqfunctionA(y_1),e_\specialopT/\essuniqfunctionA(y_2)] \\
   \Rightarrow \; & \reason{\text{Substitution, and } \essuniqfunctionA(y_1) = \essuniqfunctionB(y'_1), \essuniqfunctionA(y_2) = \essuniqfunctionB(y'_4)} \\
   & \specialopT(e_\specialopT, e_\specialopT) \theoryeq{T} \specialopT(\specialopT(e_\specialopT, \essuniqfunctionB(y'_2)), \specialopT(\essuniqfunctionB(y'_3),e_\specialopT))[e_\specialopT/\essuniqfunctionA(y_1),e_\specialopT/\essuniqfunctionA(y_2)] \\
   \Rightarrow \; & \reason{\text{\ref{ax:tunit}: }e_\specialopT \text{ is the unit of } \specialopT} \\
   & e_\specialopT \theoryeq{T} \specialopT(\essuniqfunctionB(y'_2),\essuniqfunctionB(y'_3))[e_\specialopT/\essuniqfunctionA(y_1),e_\specialopT/\essuniqfunctionA(y_2)].
   \end{align*}
   So by~\ref{ax:tvar0}, $\var(\specialopT(\essuniqfunctionB(y'_2), \essuniqfunctionB(y'_3))[e_\specialopT/\essuniqfunctionA(y_1),e_\specialopT/\essuniqfunctionA(y_2)]) = \emptyset$. So we must have:
   \[
     \{\essuniqfunctionB(y'_2), \essuniqfunctionB(y'_3)\} \subseteq \{\essuniqfunctionA(y_1),\essuniqfunctionA(y_2)\}.
   \]
   But then, by the second part of the essential uniqueness property:
   \[
     \specialopS(y_1,y_2) \theoryeq{S} y_1 \quad \text{or} \quad \specialopS(y_1,y_2) \theoryeq{S} y_2.
   \]
   Both contradict~\ref{ax:svar1}. Therefore, the composite theory $\bb{U}$ cannot exist.
\end{proof}
\begin{cor}
  If monads~$S$ and~$T$ are presented by algebraic theories~$\bb{S}$ and $\bb{T}$, satisfying the axioms of \refthm~\ref{thm:nogoTreeIdem}, then there does not exist a distributive law $S \circ T \Rightarrow T \circ S$.
\end{cor}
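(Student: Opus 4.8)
The plan is to transport the purely algebraic no-go result \refthm~\ref{thm:nogoTreeIdem} across the algebra–monad dictionary established in the preliminaries. First I would invoke \refthm~\ref{thm:distlaw-vs-compositetheory}, the theorem of Pir\'og and Staton: since $S$ and $T$ are presented by the algebraic theories $\bb{S}$ and $\bb{T}$, there is a distributive law of type $S \circ T \Rightarrow T \circ S$ if and only if there is a composite theory of $\bb{T}$ after $\bb{S}$. Hence it suffices to rule out the existence of such a composite theory.

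Second, I would unpack the hypothesis. To say that $\bb{S}$ and $\bb{T}$ ``satisfy the axioms of \refthm~\ref{thm:nogoTreeIdem}'' is by definition to say that $\bb{S}$ satisfies \ref{ax:svar0}, \ref{ax:svar1}, \ref{ax:newsnary-all}, that $\bb{T}$ satisfies \ref{ax:tvar0}, \ref{ax:tvar1}, \ref{ax:tconst}, and that there are binary terms $2 \vdash_{\mathbb{S}} \specialopS$ and $2 \vdash_{\mathbb{T}} \specialopT$ satisfying \ref{ax:sbinary}, \ref{ax:sidem} and \ref{ax:tunit}. With these in hand, \refthm~\ref{thm:nogoTreeIdem} applies verbatim and yields that no composite theory of $\bb{T}$ after $\bb{S}$ exists. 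Combining this with the first step gives the non-existence of the distributive law $S \circ T \Rightarrow T \circ S$, as claimed.

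There is essentially no obstacle here: this corollary is one of the ``similar corollaries'' alluded to after \refthm~\ref{thm:plotkin-1} and \refcor~\ref{cor:plotkin-1}, and its entire mathematical content is carried by \refthm~\ref{thm:nogoTreeIdem}. The single point deserving a moment's care is orientation: \refthm~\ref{thm:distlaw-vs-compositetheory} matches a distributive law $S \circ T \Rightarrow T \circ S$ with a composite of $\bb{T}$ \emph{after} $\bb{S}$, and \refthm~\ref{thm:nogoTreeIdem} is stated for precisely that orientation — with the idempotent binary $\specialopS$ living in $\bb{S}$ and the unital binary $\specialopT$ in $\bb{T}$ — so no exchange of roles between the two theories is required, and the two results compose directly with no further calculation.
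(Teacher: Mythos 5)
Your proposal is correct and is exactly the argument the paper intends (the paper leaves this corollary unproved, noting after \refcor~\ref{cor:plotkin-1} that such corollaries follow routinely): combine \refthm~\ref{thm:nogoTreeIdem} with the Pir\'og--Staton equivalence of \refthm~\ref{thm:distlaw-vs-compositetheory}, with the orientation matching as you checked.
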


\begin{exa}[Powerset Monad Again]
The theory of join semilattices satisfies all the axioms required of both theories in Theorem~\ref{thm:nogoTreeIdem}. Therefore, there is no distributive law for the powerset monad over itself. This was already shown by Klin and Salamanca~\cite{KlinSalamanca2018} using similar methods as in \refsec~\ref{section:Plotkin}. Theorem~\ref{thm:nogoTreeIdem} gives a second, independent proof of this fact.
\end{exa}

\begin{rem}
Theorems~\ref{thm:nogoTreeIdem} and~\ref{thm:NoGo-Videm-Punit} both require an idempotent term in theory $\bb{S}$, and a unital term in theory $\bb{T}$, to preclude a composite theory of $\bb{T}$ after $\bb{S}$. However, these theorems are neither equivalent, nor does one imply the other. The most obvious difference is that Theorem~\ref{thm:NoGo-Videm-Punit} is stated slightly more general, where instead of a unital term a generalisation of both unitality and idempotence is required. When restricting this requirement to just a unital term, there are still differences between the two theorems: Theorem~\ref{thm:NoGo-Videm-Punit} applies only if the unital term in $\bb{T}$ is commutative, whereas \refthm~\ref{thm:nogoTreeIdem} allows this term to be non-commutative. Conversely, \refthm~\ref{thm:nogoTreeIdem} requires the idempotent term in $\bb{S}$ to be unital, where \refthm~\ref{thm:NoGo-Videm-Punit} does not have this restriction.
\end{rem}

\begin{nonexample}[Multiset Monad: The Sweet Spot]
 We come back to the multiset monad. In \refnon~\ref{counter:multisetNolist} we saw that the algebraic theory presenting the multiset monad had one extra equation compared to the theory for the list monad: commutativity. Because of this equation, \refpro~\ref{ax:tspecialproperty} did not hold, and therefore \refthm~\ref{thm:nogoTreeList} did not apply.

 There is a similar relation between the multiset monad and the powerset monad. Compared to the powerset monad, the theory presenting the multiset monad \emph{lacks} just one equation: idempotence, which is exactly what \refpro~\ref{ax:sidem} requires. The lack of this equation in the theory for the multiset monad therefore means that \refthm~\ref{thm:nogoTreeIdem} does not apply to multiset either. So multiset holds a sort of `sweet spot' in between the two no-go theorems, where a distributive law~$M \circ M \Rightarrow M \circ M$ still can and does exist.
\end{nonexample}

\section{A Conjecture of Beck Yielding More No-Go Theorems}%
\label{sec:plus-over-times}

\subsection{Motivation}
The classical example of a distributive law constructs the ring monad from the list monad~$L$, and the Abelian group monad~$A$, via a distributive law~$L \circ A \Rightarrow A \circ L$. We encountered this distributive law in~\refex~\ref{ex:ring-monad}. It dates back to Beck's original paper~\cite{Beck1969}, and exploits the arithmetic distribution of multiplication over addition. A natural question, briefly considered by Beck, is
whether there can be a distributive law~$A \circ L \Rightarrow L \circ A$ with these monads reversed.
Beck gives the intuition that such a distributive law ``would have the air of a universal solution to the problem of factoring polynomials into linear factors''. As such, he suggests $L \circ A$ ``has little chance of being a triple''. Addressing the question of whether ``plus distributes over times'' is the motivating example for the work in this section.
Unsurprisingly, Beck's intuition is correct, and no such distributive law exists. It turns out that \refprop~\ref{propnary} is key to the proof.

\begin{counter}\label{counter:noplusovertimes}
There is no distributive law~$A \circ L \Rightarrow L \circ A$ for the Abelian group monad over the list monad.
\end{counter}
\begin{proof}
The theory of Abelian groups $\bb{A}$, yielding the Abelian group monad~$A$, has presentation:
\begin{itemize}
\item Signature: $\Sigma^{\bb{A}} = \{0^{(0)}, -{(.)}^{(1)}, +^{(2)}\}$.
\item Equations: $E^\bb{A}$ contains the equations stating that $0$ is the unit of $+$, $+$ is associative and commutative, and~$-x$ is the additive inverse of $x$: $x + (-x) = 0$.
\end{itemize}
The theory of monoids $\bb{M}$, yielding the list monad~$L$, has presentation:
\begin{itemize}
\item Signature: $\Sigma^{\bb{M}}= \{1^{(0)}, *^{(2)}\}$.
\item Equations: $E^\bb{M}$ contains the equations stating that $1$ is unit of $*$, and~$*$ is associative.
\end{itemize}
The term $x + y$ in $\bb{A}$ satisfies the conditions for $\bb{S}$ in \refprop~\ref{propnary}, and $\bb{M}$ satisfies the conditions for $\bb{T}$.
We conclude that in any composite theory $\bb{U}$ of $\bb{M}$ after $\bb{A}$, the following equation must hold:
\[
x + 1 \theoryeq{U} 1.
\]
We prove that this implies that $x \theoryeq{U} 0$:
\begin{align*}
& x \\
\theoryeq{U}\;\; & \reason{\text{unit}}\\
&  x + 0 \\
\theoryeq{U}\;\; & \reason{\text{inverse}} \\
&  x + (1 + (-1)) \\
\theoryeq{U}\;\; & \reason{\text{associativity}} \\
&  (x + 1) + (-1) \\
\theoryeq{U}\;\; & \reason{x + 1 \theoryeq{U} 1} \\
&  1 + (-1) \\
\theoryeq{U}\;\; & \reason{\text{inverse}} \\
&  0.
\end{align*}
Hence for any two variables: $x \theoryeq{U} 0 \theoryeq{U} y$, which means any composite theory $U$ is inconsistent. Since the component theories $\mathbb{M}$ and $\mathbb{A}$ are consistent, \refprop~\ref{prop:consistency} tells us that there is no such composite theory.
\end{proof}
As for the previous proof ideas, this specific negative result can be abstracted to yield general theorems. We shall address this in the following section.

\subsection{Generalization}
The well-known \emph{positive} result of Manes and Mulry~\cite{ManesMulry2007} about distributive laws requires monads presented by theories with only linear equations. That is, with variables appearing exactly once on each side of the equation. These monads distribute over commutative monads: $linear \circ commutative \Rightarrow commutative \circ linear$. The careful management of variables has also been important to the proofs of the \emph{negative} theorems in~\refsec~\ref{section:Plotkin} and \refsec~\ref{section:beyondPlotkin}. Generally these theorems do not apply to theories with inverses such as groups, as~$x - x = 0$ means we can make variables appear and disappear almost at will, leading to behaviour too wild to analyze.
The concrete \refcounter~\ref{counter:noplusovertimes} \emph{did} apply to an algebraic theory involving inverses. In fact, it made essential use of the ability to make variables appear and disappear, turning this behaviour from a liability into an asset.

We now abstract from \refcounter~\ref{counter:noplusovertimes}, deducing two general theorems that require `variables to go missing' in different ways. The first theorem requires an equation of form $s' = e_\bb{S}$, where $e_\bb{S}$ is a constant and $s'$ has at least one variable. The second theorem asks for an equation of form $s' = s$, where the term $s$ has a variable that does not appear in $s'$, and neither $s$ nor $s'$ are constants.

We begin with the case of a theory with an equation~$s' = e_\bb{S}$. Here, the example to keep in mind is an inverse axiom such as~$x - x = 0$, typical of group-like structures, such as groups and rings. This is the most direct abstraction of the motivating counterexample.

Notice that the proof of \refcounter~\ref{counter:noplusovertimes} used associativity. We could have avoided this by using the following alternative, but less intuitive, reasoning that applies \refprop~\ref{propnary} twice:
\begin{align*}
& x \\
\theoryeq{U}\;\; & \reason{\text{unit}}\\
&  x + 0 \\
\theoryeq{U}\;\; & \reason{\text{inverse}} \\
&  x + (1 + (-1)) \\
\theoryeq{U}\;\; & \reason{\text{substitution}} \\
&  x + (1 + y)[-1/y] \\
\theoryeq{U}\;\; & \reason{\text{\refprop~\ref{propnary}: } 1 + y \theoryeq{U} 1} \\
&  x + 1 [-1/y] \\
\theoryeq{U}\;\; & \reason{\text{no variable } y} \\
&  x + 1 \\
\theoryeq{U}\;\; & \reason{\text{\refprop~\ref{propnary}: } x + 1 \theoryeq{U} 1} \\
&  1.
\end{align*}
This argumentation uses fewer assumptions, so this is the proof we generalize below.

\begin{thm}[No-Go Theorem: Inverse Trouble]\label{theoremwithconst}
  Let $\bb{S}$ be an algebraic theory such that:
  \begin{enumerate}[label=(S\arabic*)]
  \item $\bb{S}$ has a constant $e_\bb{S}$.
  \item\label{ax:sunitforthmwithconst} $\bb{S}$ has a term $s$ of arity $\geq 2$ such that $e_\bb{S}$ is a unit of $s$, that is, for any variable $x \in \var(s)$:
  \[
  s[e_\bb{S}/y\neq x] \theoryeq{S} x.
  \]
  \item\label{ax:complicatedcond} $\bb{S}$ satisfies an equation of form
  \[
  X \vdash s' \theoryeq{S} e_\bb{S},
  \]
  with $\var(s') \neq \emptyset$, and $s'$ can be written as $s''[s_i/x_i]$, such that $\var(s'') \cap \var(s') \neq \emptyset$, and there is a substitution $f: \var(s'') \rightarrow \bb{S}$ such that for any $x \in \var(s'')$:
   \[
   \Gamma \vdash s''[f(y)/y \neq x] \theoryeq{S} x.
   \]
  %\item \bG The outermost operation appearing in $s'$ has a unit\e. That is, $s'$ can be written as $s''(s_1,\ldots,s_n)$ for some $s'' \in \Sigma^\bb{S}$, and terms $s_1,\ldots,s_n$, and there is a constant $e_{s''}$ such that $s''(e_{s''},\ldots,e_{s''}, x_i, e_{s''},\ldots,e_{s''}) = x_i$ for each $1 \leq i \leq n$.
  \end{enumerate}
  Let $\bb{T}$ be an algebraic theory such that:
  \begin{enumerate}[label=(T\arabic*)]
  \item $\bb{T}$ has a constant $e_\bb{T}$.
  \item\label{ax:inverse-weakt0} For all terms $t'$ and any substitution $f: X \rightarrow Y$:
  \[
  Y \vdash t'[f] \theoryeq{T} e_\bb{T} \quad\Rightarrow\quad X \vdash t' \theoryeq{T} e_\bb{T}.
  \]
  \end{enumerate}
Then there does not exist a composite theory of $\bb{T}$ after $\bb{S}$.
\end{thm}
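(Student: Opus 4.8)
The plan is to lift Counterexample~\ref{counter:noplusovertimes} to this abstract setting, with \refprop~\ref{propnary} doing the real work. Assume for contradiction that $\bb{U}$ is a composite theory of $\bb{T}$ after $\bb{S}$; by the convention recorded before \refprop~\ref{prop:consistency} it suffices to derive that $\bb{U}$ is inconsistent. First I would assemble the two relevant instances of \refprop~\ref{propnary}: the term $s$ satisfies property~\ref{ax:newsnary} in $\bb{S}$ (take the substitution constantly equal to $e_\bb{S}$; this works by~\ref{ax:sunitforthmwithconst}), and $s''$ satisfies~\ref{ax:newsnary} via the substitution $f$ supplied by~\ref{ax:complicatedcond}. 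Since $\bb{T}$ has the constant $e_\bb{T}$ and satisfies~\ref{ax:inverse-weakt0} (which is precisely~\ref{ax:newtoomanyconstantsreq}), \refprop~\ref{propnary} yields
\[
  s[e_\bb{T}/x] \theoryeq{U} e_\bb{T} \quad (x \in \var(s)), \qquad s''[e_\bb{T}/x] \theoryeq{U} e_\bb{T} \quad (x \in \var(s'')),
\]
that is, $e_\bb{T}$ acts as a multiplicative zero for both $s$ and $s''$ inside $\bb{U}$.

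The heart of the argument is to show that $e_\bb{S} \theoryeq{U} e_\bb{T}$. Pick a variable $x_0 \in \var(s'') \cap \var(s')$, which exists by~\ref{ax:complicatedcond}; reformulating the decomposition $s' = s''[s_i/x_i]$ if necessary, we may assume $x_0$ is not among the substituted variables $x_i$, so that $[e_\bb{T}/x_0]$ commutes past the inner substitution and $s'[e_\bb{T}/x_0] = \bigl(s''[e_\bb{T}/x_0]\bigr)\bigl[s_i[e_\bb{T}/x_0]/x_i\bigr]$. Instantiating the equation $s' \theoryeq{S} e_\bb{S}$ of~\ref{ax:complicatedcond} at $x_0 \mapsto e_\bb{T}$ (the abstract analogue of turning $x + (-x) = 0$ into $1 + (-1) = 0$) gives $s'[e_\bb{T}/x_0] \theoryeq{U} e_\bb{S}$. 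On the other hand, the second \refprop~\ref{propnary} instance gives $s''[e_\bb{T}/x_0] \theoryeq{U} e_\bb{T}$, and substituting this into the factorisation above collapses the right-hand side to $e_\bb{T}$, a constant absorbing any further substitution. Chaining these equalities yields $e_\bb{S} \theoryeq{U} e_\bb{T}$.

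It remains to conclude, generalising the short computation that ends Counterexample~\ref{counter:noplusovertimes}. Let $z$ be an arbitrary variable. Keeping the first argument of $s$ and filling the rest with $e_\bb{S}$, property~\ref{ax:sunitforthmwithconst} gives $z \theoryeq{U} s(z, e_\bb{S}, \ldots, e_\bb{S})$; replacing one of the $e_\bb{S}$'s (the second argument, say) by $e_\bb{T}$ using $e_\bb{S} \theoryeq{U} e_\bb{T}$ and congruence yields $z \theoryeq{U} s(z, e_\bb{T}, e_\bb{S}, \ldots, e_\bb{S})$; and the \refprop~\ref{propnary} instance $s[e_\bb{T}/x] \theoryeq{U} e_\bb{T}$ for $x$ the second variable of $s$, specialised by substituting $z$ for the first variable and $e_\bb{S}$ for the others, gives $s(z, e_\bb{T}, e_\bb{S}, \ldots, e_\bb{S}) \theoryeq{U} e_\bb{T}$. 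Hence $z \theoryeq{U} e_\bb{T}$ for every variable $z$, so all variables are identified in $\bb{U}$ and $\bb{U}$ is inconsistent, contradicting \refprop~\ref{prop:consistency}.

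I expect the main obstacle to be the middle paragraph: decomposing $s'[e_\bb{T}/x_0]$ through the presentation $s' = s''[s_i/x_i]$ cleanly enough that \refprop~\ref{propnary} applies to the outer term $s''$, and verifying that substituting $e_\bb{T}$ for $x_0$ genuinely commutes with the inner substitution. This is exactly what the intricate side conditions in~\ref{ax:complicatedcond} (the non-empty intersection $\var(s'') \cap \var(s')$ and the existence of the reducing substitution $f$ on $\var(s'')$) are engineered to guarantee; the remaining steps are routine equational-logic bookkeeping together with \refprop~\ref{propnary} and \refprop~\ref{prop:consistency}.
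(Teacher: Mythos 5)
Your proposal is correct and follows essentially the same route as the paper: both arguments rest on \refprop~\ref{propnary} applied to $s$ and to $s''$, use the decomposition $s' = s''[s_i/x_i]$ together with the shared variable to collapse $s'$ to $e_\bb{T}$ after substituting $e_\bb{T}$, and conclude that $\bb{U}$ is inconsistent. The only difference is organizational — you isolate the intermediate equation $e_\bb{S} \theoryeq{U} e_\bb{T}$ and substitute $e_\bb{T}$ for a single shared variable, whereas the paper substitutes $e_\bb{T}$ for all of $\var(s')$ while keeping everything nested inside $s$ — and your explicit handling of the commutation of $[e_\bb{T}/x_0]$ past the inner substitution addresses the same (tacit) assumption the paper's proof makes.
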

\begin{rem}[Interpretation of Axioms]~\ref{ax:complicatedcond} is designed to generalize the proof technique illustrated above. We used the equation $x + (-x) = 0$, and the fact that $x + (-x)$ could be written as $x + y[(-x) / y]$. We then used \refprop~\ref{propnary} on $x + y$. In order for this argument to work in general, we hence require:
  \begin{itemize}
  \item A term $s'$ which is equal to a constant, $x + (-x)$ in the example above.
  \item A term $s''$ such that $s'$ is equal to $s''$ under a certain substitution, $x + y$ in the example.
  \item Since we want to apply \refprop~\ref{propnary}, $s''$ needs to satisfy the conditions for this proposition.
  \item And lastly we require that $s'$ and $s''$ share at least one variable, which is a technicality needed to make the proof go through.
  \end{itemize}

  \noindent
  We need axiom~\ref{ax:inverse-weakt0} to be able to apply \refprop{propnary}. As a reminder, it reads: ``If a variable substitution of term $t$ is provably equal to a constant, then $t$ is already provably equal to that constant.''
\end{rem}
\begin{proof}
Let $\bb{U}$ be any (candidate) composite theory of $\bb{T}$ after $\bb{S}$.
Consider the equation $s' = e_\bb{S}$.
We know that $\bb{S}$ has a term $s$ such that $e_\bb{S}$ is a unit of $s$. Choose variable $x$ such that $x \notin \var(s')$. Then we use both the unit equation for $s$ and the fact that $s' = e_\bb{S}$:
\begin{align*}
 & x \theoryeq{U} s[e_\bb{S}/y \neq x] \\
\Rightarrow \;\; & \reason{ e_\bb{S} \theoryeq{S} s' } \\
 & x \theoryeq{U} s[s'/y \neq x].
\end{align*}
Next, we substitute $e_\bb{T}$ into all variables in $s'$. Since we chose $x$ such that $x \notin \var(s')$, this substitution has no effect on the left hand side of our equation.
\begin{align*}
 & x \theoryeq{U} s[s'/y \neq x] \\
\Rightarrow \;\; & \reason{ \text{Substitution} } \\
 & x[e_\bb{T}/z\in\var(s')] \theoryeq{U} s[s'[e_\bb{T}/z\in\var(s')]/y \neq x] \\
\Rightarrow \;\; & \reason{ x \notin \var(s')} \\
 & x \theoryeq{U} s[s'[e_\bb{T}/z\in\var(s')]/y \neq x].
\end{align*}
We will now work on the term $s[s'[e_\bb{T}/z\in\var(s')]/y \neq x]$. Recall that $s'$ can be written as $s''[s_i/x_i]$ and $s''$ satisfies the conditions for \refprop~\ref{propnary}. Also, since $\var(s'')\cap\var(s')\neq\emptyset$, we know that the substitution $s'[e_\bb{T}/z\in\var(s')] \theoryeq{U} s''[s_i/x_i][e_\bb{T}/z\in\var(s')]$ yields a term where at least one of the variables of $s''$ gets substituted with $e_\bb{T}$. Hence by \refprop~\ref{propnary}, this resulting term is equal to $e_\bb{T}$. Therefore:
\begin{align*}
& x \theoryeq{U} s[s'[e_\bb{T}/z\in\var(s')]/y \neq x] \\
\Rightarrow \;\; & \reason{\text{writing } s' \text{ as } s''[s_i/x_i] } \\
 & x \theoryeq{U} s[s''[s_i/x_i][e_\bb{T}/z\in\var(s')]/y \neq x] \\
\Rightarrow \;\; & \reason{\text{\refprop~\ref{propnary}: } s''[s_i/x_i][e_\bb{T}/z\in\var(s')] \theoryeq{U} e_\bb{T}} \\
& x \theoryeq{U} s[e_\bb{T}/y \neq x] \\
\Rightarrow \;\; & \reason{ \text{\refprop~\ref{propnary} again}} \\
 & x \theoryeq{U} e_\bb{T}.
\end{align*}
Notice that in the last step, we applied \refprop~\ref{propnary} to $s$ instead of $s''$. We are allowed to do this because $s$ is unital by assumption~\ref{ax:sunitforthmwithconst}, and hence also satisfies \refprop~\ref{propnary}.

From the equation $x \theoryeq{U} e_\bb{T}$, we get by simple variable substitution that $y \theoryeq{U} e_\bb{T}$ for any variable $y$, and so specifically: $x \theoryeq{U} e_\bb{T} \theoryeq{U} y$. We conclude that $\bb{U}$ is inconsistent. As the original theories are assumed to be consistent, there is no such composite theory.
\end{proof}
\begin{exa}
In case of Abelian groups and monoids, the equation $s' = e_\bb{S}$ required from Abelian groups is the inverse equation $x + (-x) = 0$. This equation can be written as $(x + y)[(-x)/y]$. Since the terms $x + y$ and $x + (-x)$ share the variable $x$, condition~\ref{ax:complicatedcond} is satisfied.
\end{exa}

\begin{exa}
There are countless monads satisfying the criteria for $\bb{T}$. A few natural examples are the list, multiset, powerset, and the exception monads.

We have already seen that Abelian groups satisfy the criteria for $\bb{S}$. In addition, any theory with a multiplicative zero $x * 0 = 0$ can be a good candidate, if the binary operation $*$ is either idempotent or unital. Rings are an obvious example, but \refprop~\ref{propnary} gives us many more. The multiset monad satisfies all the criteria for $\bb{S}$ and $\bb{T}$ in \refprop~\ref{propnary}. We know that there is a distributive law for the multiset monad over itself, see \refex~\ref{ex:multisetdistlaw}, so the unit of one of the binary operations in the composite theory corresponding to the monad $M \circ M$ must act as a multiplicative zero for the other.

We can hence make the following table of example compositions \refthm~\ref{theoremwithconst} proves impossible via a distributive law:

%\begin{center}
%    \begin{tabular}{l|cccc}
%      & List \;\; & Multiset \;\; & Powerset \;\; & Exception \\ \hline
%    Abelian groups \;\;     & \no & \no & \no & \no \\
%    Rings \;\;     & \no & \no & \no & \no \\
%    (Multiset)$^2$ \;\; & \no & \no & \no & \no \\
%    \end{tabular}
%\end{center}
%All considered compositions in this table are of type: row $\circ$ column $\Rightarrow$ column $\circ$ row.

\begin{table}[h]
  \centering
  \caption{Overview of some distributive laws of type: }\label{tab:theoremwithconstexample}
     row $\circ$ column $\Rightarrow$ column $\circ$ row, which are excluded by \refthm~\ref{theoremwithconst}.\\
     \aboverulesep=0ex
     \belowrulesep=0ex
     \renewcommand{\arraystretch}{1.1}
    \begin{tabular}{l|cccc} % chktex 44
    \toprule
      & List \;\; & Multiset \;\; & Powerset \;\; & Exception \\ \midrule
    Abelian groups \;\;     & \no & \no & \no & \no \\
    Rings \;\;     & \no & \no & \no & \no \\
    (Multiset)$^2$ \;\; & \no & \no & \no & \no \\\bottomrule
    \end{tabular}
\end{table}
More examples illustrating the scope of \refthm~\ref{theoremwithconst} will be given in \refsec~\ref{ch:boom}.
\end{exa}

The second theorem we find as a generalization of \refcounter~\ref{counter:noplusovertimes} focusses on the equation $s = s'$, where $s$ has a variable that does not appear in $s'$. Here
the motivating example axioms are absorption laws~$a \wedge (a \vee b) = a$, seen in lattices and similar structures.
This is a slightly less direct abstraction of~\refcounter~\ref{counter:noplusovertimes}, exploiting the observation
that the key requirement is controlled introduction and elimination of variables.
\begin{thm}[No-Go Theorem: Absorption Trouble]\label{theoremwithvar}
  Let $\bb{S}$ be an algebraic theory such that:
  \begin{enumerate}[label=(S\arabic*)]
  \item\label{ax:absorptionax1} $\bb{S}$ satisfies an equation of form $X \vdash s \theoryeq{S} s'$, where $\var(s)\setminus \var(s') \neq \emptyset$ (that is, $s$ has a variable that does not appear in $s'$).
   \item\label{ax:absorptionax2} There is a substitution $f: \var(s) \rightarrow \bb{S}$ such that for any $x \in \var(s)$:
   \[
   \Gamma \vdash s[f(y)/y \neq x] \theoryeq{S} x.
   \]
  \item\label{ax:absorptionax3} There is a substitution $f': \var(s') \rightarrow \bb{S}$ such that for any $x \in \var(s')$:
   \[
   \Gamma \vdash s'[f'(y)/y \neq x] \theoryeq{S} x.
   \]
  \end{enumerate}
  Let $\bb{T}$ be an algebraic theory such that:
  \begin{enumerate}[label=(T\arabic*)]
  \item $\bb{T}$ has a constant $e_\bb{T}$.
  \item\label{ax:absorptionax4} For all terms $t'$ and any substitution $f: X \rightarrow Y$:
  \[
  Y \vdash t'[f] \theoryeq{T} e_\bb{T} \quad\Rightarrow\quad X \vdash t' \theoryeq{T} e_\bb{T}.
  \]
  \end{enumerate}

\noindent
Then there does not exist a composite theory of $\bb{T}$ after $\bb{S}$.
\end{thm}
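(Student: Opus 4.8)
The plan is to mimic the mechanism behind \refcounter~\ref{counter:noplusovertimes} and \refthm~\ref{theoremwithconst}: assuming a composite theory $\bb{U}$ of $\bb{T}$ after $\bb{S}$ exists, I would force $\bb{U}$ to be inconsistent, and then invoke \refprop~\ref{prop:consistency}. The key observation is that the hypotheses have been arranged precisely so that \refprop~\ref{propnary} applies: axioms \ref{ax:absorptionax2} and \ref{ax:absorptionax3} are property \ref{ax:newsnary} instantiated at the terms $s$ and $s'$ respectively, axiom \ref{ax:absorptionax4} is property \ref{ax:newtoomanyconstantsreq}, and $\bb{T}$ has the constant $e_\bb{T}$. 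Hence \refprop~\ref{propnary}, applied to the term $s$, tells us that $s[e_\bb{T}/x] \theoryeq{U} e_\bb{T}$ for every variable $x$ of $s$; the same holds with $s'$ in place of $s$.

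First I would use \ref{ax:absorptionax1} to pick a variable $w \in \var(s)\setminus\var(s')$ and substitute the constant $e_\bb{T}$ for $w$ in the equation $s \theoryeq{U} s'$ (which holds because $\bb{U}$ contains $\bb{S}$). Since $w \notin \var(s')$ the right-hand side is syntactically unchanged, while the left-hand side becomes $s[e_\bb{T}/w] \theoryeq{U} e_\bb{T}$ by \refprop~\ref{propnary}. This yields $s' \theoryeq{U} e_\bb{T}$, and therefore also $s \theoryeq{U} e_\bb{T}$.

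Next I would collapse $s$ down to a single variable. Applying the substitution $f$ provided by \ref{ax:absorptionax2}, but holding the variable $w$ fixed, transforms the left-hand side of $s \theoryeq{U} e_\bb{T}$ into $s[f(y)/y \neq w]$, which equals $w$ in $\bb{S}$ hence in $\bb{U}$, and leaves the right-hand side $e_\bb{T}$ untouched (a constant has no variables). So $w \theoryeq{U} e_\bb{T}$, and renaming $w$ to an arbitrary variable shows that all variables are identified in $\bb{U}$, i.e.\ $\bb{U}$ is inconsistent. As in \refcounter~\ref{counter:noplusovertimes}, the component theories $\bb{S}$ and $\bb{T}$ are assumed consistent, so \refprop~\ref{prop:consistency} gives a contradiction; hence no composite theory of $\bb{T}$ after $\bb{S}$ exists, and by \refthm~\ref{thm:distlaw-vs-compositetheory} there is no distributive law $S \circ T \Rightarrow T \circ S$. (One could instead invoke \ref{ax:absorptionax3}: from $s' \theoryeq{U} e_\bb{T}$, note that $\var(s') \neq \emptyset$ — otherwise $s'$ is a constant and \ref{ax:absorptionax2} would already equate a variable with a constant in the consistent theory $\bb{S}$ — and then collapse $s'$ via $f'$ in the same way.)

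I do not expect any genuine obstacle: the entire substantive content is packaged inside \refprop~\ref{propnary}, which itself runs on separation and essential uniqueness in the composite theory. The only care required is the routine bookkeeping of which variable is held fixed at each substitution, so that $s$ is driven down to $e_\bb{T}$ while the variable witnessing the inconsistency is not destroyed along the way.
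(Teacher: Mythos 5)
Your proof is correct and runs on exactly the same engine as the paper's: \refprop~\ref{propnary} applied to $s$, substitution of $e_\bb{T}$ into a variable of $s$ missing from $s'$, a collapse of a term down to a single variable, and \refprop~\ref{prop:consistency} to finish. The only difference is the order of the two moves and which collapse axiom is invoked: the paper first collapses $s'$ (and hence $s$) to a variable $x \in \var(s')$ using the substitution $f'$ from \ref{ax:absorptionax3}, and only then kills the leftover variables of $s$ with $e_\bb{T}$ via \refprop~\ref{propnary}; you first derive $s \theoryeq{U} e_\bb{T}$ and then collapse $s$ onto $w$ using $f$ from \ref{ax:absorptionax2}. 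A small bonus of your ordering is that your main line never uses \ref{ax:absorptionax3} at all (it appears only in your optional alternative route), so your argument in fact establishes the conclusion under slightly weaker hypotheses than stated.
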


\begin{rem}[Interpretation of Axioms]
Apart from~\ref{ax:absorptionax1}, all the required properties are familiar from \refsec~\ref{section:beyondPlotkin}.~\ref{ax:absorptionax2} and~\ref{ax:absorptionax3} are generalizations of idempotence/unitality equations, while~\ref{ax:absorptionax4} states that ``If a variable substitution of term $t$ is provably equal to a constant, then $t$ is already provably equal to that constant.''
\end{rem}

\begin{proof}
Let $\bb{U}$ be any (candidate) composite theory of $\bb{T}$ after $\bb{S}$. Consider the equation $s = s'$ from the assumptions. Let $f'$ be a substitution of the variables in $s'$ such that $s'[f'] \theoryeq{S} x$, where $x \in \var(s')$. Then:
\begin{align*}
& s \theoryeq{S} s' \\
\Rightarrow \;\; & \reason{\text{Substitution}}\\
 & s[f'] \theoryeq{S} s'[f'] \\
\Rightarrow \;\; & \reason{ s'[f'] \theoryeq{S} x \text{ by construction of } f' } \\
 & s[f'] \theoryeq{S} x.
\end{align*}

Now let $g$ be the constant substitution $g: \var(s)\setminus\var(s') \rightarrow \{e_\bb{T}\}$, mapping all the variables that appear in $s$ but not in $s'$ to the constant $e_\bb{T}$. Since $\var(s)\setminus\var(s') \neq \emptyset$, at least one instance of $e_\bb{T}$ will be present in $s[f'][g]$. Then:
\begin{align*}
& s[f'] \theoryeq{U} x \\
\Rightarrow \;\; & \reason{\text{Substitution}}\\
 & s[f'][g] \theoryeq{U} x[g] \\
\Rightarrow \;\; & \reason{ x \in \var(s') \text{ and therefore untouched by } g } \\
& s[f'][g] \theoryeq{U} x \\
\Rightarrow \;\; & \reason{ \text{\refprop~\ref{propnary}, since } s[f'][g] \text{ contains at least one } e_\bb{T} } \\
& e_\bb{T} \theoryeq{U} x.
\end{align*}
Using substitution, we can hence show that $x \theoryeq{U} e_\bb{T} \theoryeq{U} y$, from which it follows that $\bb{U}$ is inconsistent. As the original theories are assumed to be consistent, there is no such composite theory.
\end{proof}

\refthm~\ref{theoremwithvar} solves a question Julian Salamanca posed in 2018~\cite{Salamanca2018}, asking whether there is a distributive law $bL \circ P \Rightarrow P \circ bL$, distributing the bounded lattice monad $bL$ over the powerset monad $P$. The answer is no.

\begin{exa}\label{ex:latticeandpower}
There is no distributive law for the (bounded) lattice monad over the powerset monad.
The presentation for the bounded lattice monad is given by:
\begin{itemize}
\item Signature: $\{\top^{(0)}, \bot^{(0)}, \vee^{(2)}, \wedge^{(2)}\}$.
\item Equations: $\top$ is the unit of $\wedge$, $\bot$ is the unit of $\vee$, associativity of $\vee$ and $\wedge$, commutativity of $\vee$ and $\wedge$, idempotence of $\vee$ and $\wedge$, absorption both ways: $x \vee (x \wedge y) = x$ and  $x \wedge (x \vee y) = x$.
\end{itemize}
We see that the bounded lattice monad satisfies the criteria for $\bb{S}$ in \refthm~\ref{theoremwithvar}:
\begin{itemize}
  \item The equation $s = s'$ is $x \vee (x \wedge y) = x$.
  \item The substitution $f$ such that $s[f] = x$ uses the unit of $\wedge$ and idempotence of $\vee$:
  \[
  x \vee (x \wedge y) [\top / y] = x \vee (x \wedge \top) = x \vee x = x.
  \]
  \item The substitution $f'$ such that $s'[f'] = x$ is the identity, since the term $s'$ is just the variable $x$.
\end{itemize}
The presentation of the powerset monad is given by:
\begin{itemize}
\item Signature: $\{0^{(0)}, +^{(2)}\}$.
\item Equations: $0$ is unit of $+$, $+$ is associative, commutative, and idempotent.
\end{itemize}
The powerset monad satisfies the criteria for $\bb{T}$ in \refthm~\ref{theoremwithvar}, and so we conclude that there is no distributive law $bL \circ P \Rightarrow P \circ bL$.
\end{exa}

\section{The Boom Hierarchy: a case study for distributive laws}\label{ch:boom}
We now pursue a detailed investigation of when distributive laws exist between some natural families of monads. To do so, we shall combine the techniques developed in earlier sections with results from the wider literature. Our objectives are to illustrate that the absence of distributive laws is not at all unusual, to document many useful examples, and to develop some intuitions via concrete applications.

We shall begin with the so-called Boom hierarchy, a small family of monads considered in the functional programming literature~\cite{Meertens1986}. Later, to increase our available data points, we will expand the original Boom hierarchy to include more exotic, if slightly less natural, data structures. Similar expansions of the Boom hierarchy have been studied by Uustalu~\cite{UUSTALU2016}.

The Boom hierarchy is a simple family of four monads, providing a pleasing connection between commonly used data structures and natural algebraic axioms.
The hierarchy consists of the tree, list, multiset, and powerset monads. Each of these monads has the same signature, consisting of a constant and a binary operation.
If the only axiom is the unitality axiom, the resulting monad is the binary tree monad. Adding associativity yields the list monad. Further adding commutativity yields the multiset monad, and finally adding idempotence results in the finite powerset monad, as shown in~\reftab~\ref{Boom1}.

%\begin{rem}
%  The Boom hierarchy is named after the Dutch-Canadian Computer Scientist Boom. Meertens is the first to mention this hierarchy in the literature~\cite{Meertens1986},
%  and he attributes the idea to Boom. Apparently, the fact that `boom' is the Dutch word for tree is simply a fortuitous coincidence.
%\end{rem}

\begin{rem}
  The Boom hierarchy is named after the Dutch Computer Scientist Hendrik Boom. The fact that `Boom' also means `tree' in Dutch is not entirely coincidental. Allegedly, the name was coined by Peter Grogono in a meeting with Stephen Spackman and Hendrik Boom. Spackman was a MSc student co-supervised by Grogono and Boom, working on this hierarchy of data structures. When Grogono suggested the name for the hierarchy, Boom's response was ``What, because it is about trees?''. The name has stuck ever since~\cite{Spackman2019}. Lambert Meertens is the first to mention this hierarchy in the literature~\cite{Meertens1986}, citing an unpublished working paper by Boom~\cite{Boom1981}.
\end{rem}

\begin{center}
\begin{table}[h]
  \caption{The Boom hierarchy}\label{Boom1}
  \begin{minipage}{\textwidth}
    \begin{center}
    \begin{tabular}{lcccc}
    \toprule
    theory \;\; & unit\;\; & associative \;\; & commutative \;\; & idempotent \\ \midrule
    tree \;\; & \yes  & \no & \no & \no \\
    list \;\; & \yes & \yes & \no & \no \\
    multiset \;\; & \yes & \yes & \yes & \no \\
    powerset \;\; & \yes & \yes & \yes & \yes \\\bottomrule
    \end{tabular}
    \end{center}
  \end{minipage}
\end{table}
\end{center}
Studying the patterns of distributive laws in the Boom hierarchy provided some of the original inspiration for the abstract no-go theorems presented in earlier sections.
We now use the same hierarchy, and generalizations of it, to demonstrate both their scope and limitations, and their relationship to the existing positive results we are aware of in the literature.

\subsection{The Original Boom Hierarchy}%
\label{sec:original-boom}
For the original Boom hierarchy we have complete knowledge of possible compositions via distributive laws. An overview is presented in \reftab~\ref{tab:distlawsBoom1} below.

The negative result for $P \circ P \Rightarrow P \circ P$ was already shown by Klin and Salamanca~\cite{KlinSalamanca2018}, and can also be recovered from both~\refthms~\ref{thm:plotkin-1} and~\ref{thm:nogoTreeIdem}. The other negative results follow from either \refthm~\ref{thm:nogoTreeList} or \refthm~\ref{thm:nogoTreeIdem}. Sometimes both theorems can be applied, for example to preclude a distributive law~$P \circ L \Rightarrow L \circ P$.

The positive results are due to Manes and Mulry. They show that any monad with only linear equations in its presentation distributes over any commutative monad\footnote{A commutative monad is a strong monad for which the two possible double strengths coincide~\cite{Kock1970}. Algebraically, this means that all operations in the signature commute with one another. This is quite different from the algebraic property of commutativity that we consider in the Boom hierarchy.} via the times over plus distributivity~\cite[Theorem 4.3.4]{ManesMulry2007}. The multiset and powerset monad are both commutative, and the theories of the tree, list and multiset monads all have solely linear equations.  This yields the six distributive laws indicated in the table. \refthm~\ref{thm:times-over-plus} proves that these distributive laws are in fact the only possible distributive laws for these monads.
\begin{center}
\begin{table}[h]
  \centering
  \caption{Possible compositions in the Boom hierarchy,}\label{tab:distlawsBoom1}
  with distributive laws of type: row $\circ$ column $\Rightarrow$ column $\circ$ row.
  \begin{minipage}{\textwidth}
    \begin{center}
    \belowrulesep=0ex
    \aboverulesep=0ex
    \renewcommand{\arraystretch}{1.1}
    \begin{tabular}{l|cccc} % chktex 44
    \toprule
     & tree\;\; & list \;\; & multiset \;\; & powerset \\ \midrule
    tree \;\;     & \no & \no & \yes & \yes \\
    list \;\;     & \no & \no & \yes & \yes \\
    multiset \;\; & \no & \no & \yes & \yes \\
    powerset \;\; & \no & \no & \no & \no \\\bottomrule
    \end{tabular}
    \end{center}
  \end{minipage}
\end{table}
\end{center}

In this small sample, Manes and Mulry's theorem is powerful enough to yield all the possible positive results. However, \reftab~\ref{tab:distlawsBoom1} is too small to draw general conclusions. To extract more information, we expand our hierarchy of monads in the next section.

\subsection{The Extended Boom Hierarchy}%
\label{sec:extended-boom}
The original Boom hierarchy discussed in~\refsec~\ref{sec:original-boom} consisted of a small number of well motivated data structures. Unfortunately, this small size means that it provides limited scope for identifying patterns in distributive law phenomena. To address this, we now consider an extended hierarchy in which all possible combinations of the original algebraic axioms appear, rather than the axioms being gradually added in a fixed order.
This yields a total of eight different monads: tree, idempotent tree, commutative tree (mobile), associative tree (list), idempotent and commutative tree, idempotent and associative tree (square-free list), associative and commutative tree (multiset), idempotent commutative and associative tree (powerset). If we additionally consider the non-empty versions of these monads, corresponding algebraically to removing the constant from the signature, the number of monads doubles to sixteen. Of course, some of these monads are less natural from a functional programming perspective, but they provide a convenient range of candidates for investigation.

An overview of these monads is given in \reftab~\ref{Boom2} below. We have named the monads according to the axioms their theories satisfy: $U$(unitality), $A$(associativity), $C$(commutativity), $I$(idempotence). For example, multisets are associative and commutative trees with units, so they are denoted $UAC$ in the table, whereas their non-empty version, which has no unit, is called $AC$. Note that this convention is unambiguous as we always impose the unitality axiom when the unit constant is present.

\begin{center}
\begin{table}[h]
  \caption{The extended Boom hierarchy}\label{Boom2}
  \begin{minipage}{\textwidth}
    \begin{center}
    \begin{tabular}{lcccc}
    \toprule
    theory \;\; & unital\;\; & associative \;\; & commutative \;\; & idempotent \\ \midrule
    non-empty tree \;\; & \no & \no & \no & \no \\
    I \;\;           & \no & \no & \no & \yes \\
    C \;\;           & \no & \no & \yes & \no \\
    CI \;\;           & \no & \no & \yes & \yes \\
    A \;\;           & \no & \yes & \no & \no \\
    AI \;\;           & \no & \yes & \no & \yes \\
    AC \;\;           & \no & \yes & \yes & \no \\
    ACI \;\;           & \no & \yes & \yes & \yes \\
    U (tree) \;\; & \yes  & \no & \no & \no \\
    UI \;\;           & \yes & \no & \no & \yes \\
    UC (mobile) \;\;           & \yes & \no & \yes & \no \\
    UCI \;\;           & \yes & \no & \yes & \yes \\
    UA (list) \;\; & \yes & \yes & \no & \no \\
    UAI (square-free list) \;\;           & \yes & \yes & \no & \yes \\
    UAC (multiset) \;\; & \yes & \yes & \yes & \no \\
    UACI (powerset)\;\; & \yes & \yes & \yes & \yes \\\bottomrule
    \end{tabular}
    \end{center}
  \end{minipage}
\end{table}
\end{center}
Within the extended Boom hierarchy, there are a total of 256 monad compositions to consider. Some distributive laws arise via Manes and Mulry's positive general theorems~\cite[Theorem 4.3.4]{ManesMulry2007} and~\cite[Example 4.9]{ManesMulry2008}. Other combinations are known to have a distributive law because an ad-hoc one has been found, for example for the non-empty list monad over itself~\cite{ManesMulry2007, ManesMulry2008}. A large number of the combinations are proven impossible by theorems from this paper. Our current knowledge about the existence of distributive laws for this extended Boom hierarchy is given in~\reftab~\ref{tab:DistBoomExtended}.
\begin{center}
\begin{table}[h]
  \caption{The extended Boom hierarchy, with laws of type: row $\circ$ column $\Rightarrow$ column $\circ$ row.}\label{tab:DistBoomExtended}
  \begin{minipage}{\textwidth}
    \begin{center}
    {\small
    \belowrulesep=0ex
    \aboverulesep=0ex
    \renewcommand{\arraystretch}{1.1}
    \begin{tabular}{l|cccccccc|cccccccc} % chktex 44
    \toprule
     & $\emptyset$ & I & C & CI & A & AI & AC & ACI & U & UI & UC & UCI & UA & UAI & UAC & UACI \\ \midrule
    $\emptyset$ & \yes & \maybe & \maybe  & \maybe & \maybe & \maybe & \yes & \yes & \maybe & \maybe & \maybe & \maybe & \maybe & \maybe & \yes & \yes \\
    I          & \maybe & \no & \maybe & \no & \maybe & \no & \maybe & \no & \maybe & \no & \no & \no & \maybe & \no & \no & \no \\
    C          & \yes & \maybe & \maybe & \maybe & \maybe & \maybe & \yes & \yes & \maybe & \maybe & \maybe & \maybe & \maybe & \maybe & \yes & \yes \\
    CI         & \maybe & \no & \maybe & \no & \maybe & \no & \maybe & \no & \maybe & \no & \no & \no & \maybe & \no & \no & \no \\
    A          & \yes & \maybe & \maybe & \maybe & \yes & \maybe & \yes & \yes & \maybe & \maybe & \maybe & \maybe & \maybe & \maybe & \yes & \yes \\
    AI         & \maybe & \no & \maybe & \no & \maybe & \no & \maybe & \no & \maybe & \no & \no & \no & \maybe & \no & \no & \no\\
    AC         & \yes & \maybe & \maybe & \maybe & \maybe & \maybe & \yes & \yes & \maybe & \maybe & \maybe & \maybe & \maybe & \maybe & \yes & \yes \\
    ACI        & \maybe & \no & \maybe & \no & \maybe & \no & \maybe & \no & \maybe & \no & \no & \no & \maybe & \no & \no & \no\\ \midrule
    U          & \yes  & \maybe & \maybe & \maybe & \maybe & \maybe & \yes & \yes & \no & \no & \no & \no & \no & \no & \yes & \yes \\
    UI         & \maybe & \no & \maybe & \no & \maybe & \no & \maybe & \no & \no & \no & \no & \no & \no & \no & \no & \no \\
    UC         & \yes  & \maybe & \maybe & \maybe & \maybe & \maybe & \yes & \yes & \no & \no & \no & \no & \no & \no & \yes & \yes \\
    UCI        & \maybe & \no & \maybe & \no & \maybe & \no & \maybe & \no & \no & \no & \no & \no & \no & \no & \no & \no \\
    UA         & \yes  & \maybe & \maybe & \maybe & \maybe & \maybe & \yes & \yes & \no & \no & \no & \no & \no & \no & \yes & \yes \\
    UAI        & \maybe & \no & \maybe & \no & \maybe & \no & \maybe & \no & \no & \no & \no & \no & \no & \no & \no & \no \\
    UAC        & \yes  & \maybe & \maybe & \maybe & \maybe & \maybe & \yes & \yes & \no & \no & \no & \no & \no & \no & \yes & \yes \\
    UACI       & \maybe & \no & \maybe & \no & \maybe & \no & \maybe & \no & \no & \no & \no & \no & \no & \no & \no & \no \\\bottomrule
    \end{tabular}
    } \end{center}
  \end{minipage}
\end{table}
\end{center}
Out of the 256 pairs of monads in \reftab~\ref{tab:DistBoomExtended}:
\begin{itemize}
\item The 41 labelled with~\yes have a distributive law between them.
\item The 122 labelled with~\no do not have a distributive law between them.
\item The remaining 93 pairs labelled with~\maybe remain to be understood.
\end{itemize}
That is, almost half of the combinations of monads from the extended Boom hierarchy do not have a distributive law between them. The bottom right corner of \reftab~\ref{tab:DistBoomExtended} is especially striking. Every possible combination is understood, and 56 out of 64 combinations do not have a distributive law. This provides further evidence that we should not assume ``most'' monads will compose via suitable distributive laws.

An unsurprising pattern that emerges from \reftab~\ref{tab:DistBoomExtended} is that the axioms of idempotence and units are `bad' properties for monad compositions. Since all of our no-go theorems require at least one of these properties to hold, this observation does not lead to any new insights.

In the positive results, the most apparent patterns are the columns $\emptyset$, $AC$, $ACI$, $UAC$, and~$UACI$. These are precisely the cases captured by Manes and Mulry.

For the remaining open cases, we cannot make any meaningful predictions. Our current techniques in no-go theorems require some way of bringing a term down to a variable, either via idempotence or via units. Whether this is the key property that prevents the existence of distributive laws remains an open question. On the other hand, all general positive results rely on one of the monads being commutative. The ad-hoc distributive law of the non-empty list monad over itself clearly indicates that commutativity is not a necessary condition for distributive laws to exist.

One thing is certain: to systematically fill in the gaps in \reftab~\ref{tab:DistBoomExtended} we will need additional ideas, supporting either further positive, or negative, theorems. As such, analyzing these hierarchies highlight directions which will deepen our understanding of distributive laws.

\subsection{Repeated Compositions}
Another way to extend the original Boom hierarchy is to add the monads resulting from the six distributive laws appearing in~\reftab~\ref{tab:distlawsBoom1}.
In order to study these additional monads, we first give concrete presentations for each of them:
\begin{lem}
 The presentations of the composite monads $MT$, $ML$, $MM$, $PT$, $PL$, $PM$ are as follows:
  \begin{itemize}
  \item The composite monad $MT$ is presented by the following theory:
    \begin{itemize}
    \item Signature: $\Sigma^{MT} = \{0^{(0)},1^{(0)},+^{(2)},*^{(2)}\}$.
    \item Equations: $0$ is the unit of $+$, $+$ is associative and commutative, $1$ is the unit of $*$, $0$ is a multiplicative zero:
        \begin{align}
        0 * x & = 0\label{eq:zero1} \\
        x * 0 & = 0\label{eq:zero2},
        \end{align}
        and $*$ distributes over $+$ from both left and right:
        \begin{align}
        x * (y + z) & = (x * y) + (x * z)\label{eq:timesoverplus1} \\
        (x + y) * z & = (x * z) + (y * z)\label{eq:timesoverplus2}.
        \end{align}
    \end{itemize}
  \item $ML$ has the same signature and equations as $MT$, with the additional equation that $*$ is associative.
  \item $MM$ also has the same signature and equations as $MT$, with $*$ additionally being associative and commutative.
  \item $PT$, $PL$, and $PM$ have the same signatures and equations as $MT$, $ML$, and $MM$ respectively, with one additional equation: $+$ is idempotent.
  \end{itemize}
\end{lem}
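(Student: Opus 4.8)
The plan is to treat all six composite monads uniformly through the algebraic dictionary of \refsec~\ref{sec:prelims}. Each is the free-model monad of a composite theory: by \refthm~\ref{thm:distlaw-vs-compositetheory} the distributive law listed in \reftab~\ref{tab:distlawsBoom1} yields a composite theory, and by \refprop~\ref{thm:theoryfromdistlaw} (equivalently \refcor~\ref{cor:presentationforcomposite}) this composite theory is $\langle \Sigma^\bb{S}\uplus\Sigma^\bb{T},\; E^\bb{S}\cup E^\bb{T}\cup E^\lambda\rangle$, where I write $\bb{S}$ for the \emph{inner} theory (tree for $MT,PT$; list for $ML,PL$; multiset for $MM,PM$) and $\bb{T}$ for the \emph{outer} theory (multiset for $MT,ML,MM$; powerset for $PT,PL,PM$), with $\specialopS=*$ the binary of $\bb{S}$ and $\specialopT=+$ the binary of $\bb{T}$. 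The signature is therefore $\{0,1,+,*\}$ in every case, and $E^\bb{S}\cup E^\bb{T}$ already contributes ``$0$ is a commutative-monoid unit for $+$'', ``$1$ is a unit for $*$'', together with associativity of $*$ in $ML,PL$, associativity and commutativity of $*$ in $MM,PM$, and idempotence of $+$ in $PT,PL,PM$. So the entire task reduces to showing that, over $E^\bb{S}\cup E^\bb{T}$, the set $E^\lambda$ is generated by the multiplicative-zero and distributivity equations \eqref{eq:zero1}--\eqref{eq:timesoverplus2}.

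For the inclusion ``these equations hold in the composite theory'' I would not compute $\lambda$ at all, but invoke the machinery of \refsec~\ref{section:beyondPlotkin}. By \refex~\ref{ex:STproperties}, together with the same routine checks for the tree and multiset theories (which differ from the monoid theory only by dropping or adding linear equations), each inner theory $\bb{S}$ satisfies \ref{ax:svar0}, \ref{ax:svar1}, \ref{ax:newsnary-all}, with $*$ witnessing \ref{ax:sbinary} (unit $1$), and each outer theory $\bb{T}$ satisfies \ref{ax:tvar0}, \ref{ax:tvar1}, \ref{ax:tconst}, with $+$ witnessing \ref{ax:tunit} (unit $0$). Then \refthm~\ref{thm:times-over-plus} forces both distributivity equations \eqref{eq:timesoverplus1}--\eqref{eq:timesoverplus2} in the composite theory, and \refprop~\ref{propnary} — applicable since \ref{ax:sbinary} gives \ref{ax:newsnary} for $*$ and \ref{ax:tvar0} gives \ref{ax:newtoomanyconstantsreq} — forces $\specialopS[0/x_i]\theoryeq{U}0$, that is, the zero equations \eqref{eq:zero1}--\eqref{eq:zero2}. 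Hence every axiom of the theory presented by $\Sigma^\bb{S}\uplus\Sigma^\bb{T}$ with $E^\bb{S}\cup E^\bb{T}$ and \eqref{eq:zero1}--\eqref{eq:timesoverplus2} holds in the composite theory.

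For the converse inclusion I would show that this presented theory, call it $\bb{U}'$, is \emph{already} a composite of $\bb{T}$ after $\bb{S}$ in the sense of \refdef~\ref{def:composite}. Separation: orient \eqref{eq:timesoverplus1}--\eqref{eq:timesoverplus2} left-to-right to pull every $+$ above every $*$, and \eqref{eq:zero1}--\eqref{eq:zero2} to delete any $*$-branch in which $0$ occurs; one argues that this rewriting terminates and drives every $\bb{U}'$-term to a separated $\bb{T}$-of-$\bb{S}$ form. Essential uniqueness then amounts to checking that two separated normal forms provably equal in $\bb{U}'$ are equal modulo $(\bb{T},\bb{S})$ — plausible because, once the distributivity and zero rewrites are exhausted, the only equalities $\bb{U}'$ can still prove between separated terms are those generated by $E^\bb{T}$ acting on the outer skeleton and $E^\bb{S}$ on the inner factors, so nothing can merge distinct outer $\bb{T}$-skeletons or $E^\bb{S}$-inequivalent inner terms. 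Since in any composite of $\bb{T}$ after $\bb{S}$ the free model on $X$ is exactly the set of separated terms over $X$ modulo $(\bb{T},\bb{S})$ — a set depending only on $\bb{S}$ and $\bb{T}$ — the composite of $\bb{T}$ after $\bb{S}$ is unique, so $\bb{U}'$ coincides with the composite theory, yielding the stated presentations; the extra axioms appearing in $MM,PM$ ($*$ commutative) and $PT,PL,PM$ ($+$ idempotent) are then simply the contributions of $E^\bb{S}$ and $E^\bb{T}$ in those cases.

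The step I expect to be the real obstacle is the essential-uniqueness half of the previous paragraph: making it rigorous that the rewriting to separated form is confluent modulo $E^\bb{S},E^\bb{T}$ and introduces no spurious identifications — precisely the ``straightforward but tedious'' content underlying Propositions~\ref{thm:distlaw-from-compth} and \ref{thm:theoryfromdistlaw}. If one prefers to bypass re-proving it, the alternative is to use \refprop~\ref{thm:theoryfromdistlaw} directly with the explicit description of the times-over-plus law $\lambda$ as the ``multiply-out'' natural transformation: one then checks that every defining equation of $E^\lambda$, i.e.\ every provable identity of the shape $s[t_x/x]\theoryeq{U}t[s_y/y]$, is already derivable from $E^\bb{S}\cup E^\bb{T}$ together with \eqref{eq:zero1}--\eqref{eq:timesoverplus2} via the same rewriting, which is the combinatorial heart of the lemma.
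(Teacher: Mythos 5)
Your setup (via \refcor~\ref{cor:presentationforcomposite}) and your first inclusion are fine --- indeed, deriving the four extra equations from \refprop~\ref{propnary} and \refthm~\ref{thm:times-over-plus} is a cleaner justification of that direction than the paper gives. The problem is the converse inclusion, where your main route has a genuine gap in exactly the place you flag. You propose to show that the presented theory $\bb{U}'$ is \emph{itself} a composite of $\bb{T}$ after $\bb{S}$, which forces you to prove essential uniqueness for $\bb{U}'$ (confluence of your rewriting modulo $E^{\bb{S}},E^{\bb{T}}$, no spurious identifications); you leave this as ``plausible'', and it is the hard combinatorial content you would actually have to supply. Worse, the uniqueness claim you then use to close the argument is justified incorrectly: the set of separated terms over $X$ modulo $(\bb{T},\bb{S})$ determines only the underlying functor $T\circ S$, not the theory --- two different distributive laws give two different composite theories with the same separated terms but different equalities between non-separated and separated terms. (Uniqueness does hold for these particular pairs, but by \refcor~\ref{cor:upperlimit}, not by your argument.)

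The paper's proof shows that none of this is needed. It proves only the \emph{separation} half for $\bb{U}'$: by induction on the term complexity of $s$, every term $s[t_x/x]$ is rewritten to a separated term $t'[s'_{y'}/y']$ using just \refeqns~\eqref{eq:zero1}--\eqref{eq:timesoverplus2}. The bridging step you are missing is then to invoke essential uniqueness of the \emph{already known} composite $\bb{U}=\bb{TS}^\lambda$: if $s[t_x/x]=t[s_y/y]$ is in $E^\lambda$ and the four equations separate $s[t_x/x]$ into $t'[s'_{y'}/y']$, then $t'[s'_{y'}/y']\theoryeq{U}t[s_y/y]$ are two separated terms equal in $\bb{U}$, so they are equal modulo $(\bb{T},\bb{S})$, i.e.\ provably equal from $E^{\bb{S}}\cup E^{\bb{T}}$ alone; hence the whole equation of $E^\lambda$ is derivable in $\bb{U}'$. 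No confluence argument, no essential uniqueness for $\bb{U}'$, and no appeal to uniqueness of composites is required --- only a termination/induction argument for the separation rewriting. Your ``alternative'' route in the last paragraph gestures at this but does not identify that essential uniqueness of $\bb{U}$ is what turns ``separable'' into ``derivably equal to the specific right-hand side''.
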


\begin{proof}
We prove only that the given presentation for the monad $MT$ is correct, the others follow similarly.
From \refcor~\ref{cor:presentationforcomposite}, we know that the composite monad $MT$ has presentation:

    \begin{itemize}
    \item Signature: $\Sigma^{MT} = \Sigma^M \uplus \Sigma^T = \{0^{(0)},1^{(0)},+^{(2)},*^{(2)}\}$.
    \item Equations: $E^{MT} = E^M \cup E^T \cup E^{\lambda}$ = $\{ 0 $ is the unit of $+$, $+$ is associative and commutative$\}$ $\cup$ $\{ 1$ is the unit of $* \}$ $\cup$ $\{ s[t_x/x] \theoryeq{MT} t[s_y/y] \;|\; s, s_y$ terms in $\bb{M}$, $t, t_x$ terms in $\bb{T} \}$.
   \end{itemize}
So all we need to show is that \refeqns~\eqref{eq:zero1},~\eqref{eq:zero2},~\eqref{eq:timesoverplus1}, and~\eqref{eq:timesoverplus2} are enough to prove all equations in $E^{\lambda}$, that is, of form $s[t_x/x] \theoryeq{MT} t[s_y/y]$.

%To prove that all equations in $E^{\lambda}$ are provable from the four given axioms, it is enough to show that each term of form $s[t_x/x]$ is separable via the given axioms. Essential uniqueness then does the rest: Suppose $t'[s'_{y'}/{y'}]$ is a separated term obtained from $s[t_x/x]$ using \refeqns~\eqref{eq:zero1},~\eqref{eq:zero2},~\eqref{eq:timesoverplus1}, and~\eqref{eq:timesoverplus2}. Then since $t'[s'_{y'}/{y'}] = s[t_x/x] = t[s_y/y]$ essential uniqueness gives us that $t'[s'_{y'}/y'] = t[s_y/y]$ is provable using just the axioms in $E^M$ and $E^T$.

To prove that all equations in $E^{\lambda}$ are provable from the four given axioms, it is enough to show that each term of form $s[t_x/x]$ is separable via the given axioms. Essential uniqueness then does the rest: Suppose that both $s[t_x/x] \theoryeq{MT} t[s_y/y]$ and $s[t_x/x] \theoryeq{MT} t'[s'_{y'}/{y'}]$ are in $E^{\lambda}$, and the first of these equations is provable using \refeqns~\eqref{eq:zero1},~\eqref{eq:zero2},~\eqref{eq:timesoverplus1}, and~\eqref{eq:timesoverplus2}. Then by transitivity of $\theoryeq{MT}$ we know $t'[s'_{y'}/{y'}] \theoryeq{MT} t[s_y/y]$. Since this is an equality between two separated terms, essential uniqueness gives us that it is provable using just the axioms in $E^M$ and $E^T$. So once we have derived one equation in $E^{\lambda}$ involving a particular term $s[t_x/x]$, we automatically gain all equations in $E^{\lambda}$ involving this term.

To prove that every term of form $s[t_x/x]$ is separable using just the axioms \refeqns~\eqref{eq:zero1},~\eqref{eq:zero2},~\eqref{eq:timesoverplus1}, and~\eqref{eq:timesoverplus2}, we use induction on the term complexity of $s$:

Base cases: $s$ is a constant or a variable. In these cases $s[t_x/x]$ is already separated.
Induction step: Suppose that $s = s_1 * s_2$, and assume that $s_1[t_x/x]$ and $s_2[t_x/x]$ are both separable using the four equations described above. Then we need to show that $s[t_x/x]$ is separable. Let $t'_1[s'_{y'}/y']$ and $t'_2[s'_{y'}/y']$ be the terms resulting from separating $s_1$ and $s_2$ respectively. We need induction on both $t'_1$ and $t'_2$:
\begin{itemize}
  \item If $t'_1$ is a constant, $t'_1 = 0$, then by \refeqn~\eqref{eq:zero1}: $s[t_x/x] = 0 * s_2 = 0$, and hence $s[t_x/x]$ is separable.
  %\item $t'_2 = 0$. Then, by equation $x * 0 = 0$: $s[t_x/x] = s_1 * 0 = 0$, and hence $s[t_x/x]$ is separable.
  \item If $t'_1$ is a variable, $t'_1 = x_1$, we use induction $t'_2$:
   \begin{itemize}
   \item If $t'_2$ is a constant, then by \refeqn~\eqref{eq:zero2}: $s[t_x/x] = s_1 * 0 = 0$, and hence $s[t_x/x]$ is separable.
   \item If $t'_2$ is a variable, $t'_2 = x_2$, then $s[t_x/x] = x_1 * x_2$, which is a separated term.
   \item If $t'_2$ is term of form $t'_3 + t'_4$, and we assume (induction hypothesis) that $x_1 * t'_3$ is separable and $x_1 * t'_4$ is separable, then $s[t_x/x] = x_1 * (t'_3 + t'_4)$. By \refeqn~\eqref{eq:timesoverplus1}, we can write:
       \begin{align*}
       & s[t_x/x] \\
       = \; &  x_1 * (t'_3 + t'_4) \\
       = \; & (x_1 * t'_3) + (x_1 * t'_4).
       \end{align*}
       Since both $(x_1 * t'_3)$ and $(x_1 * t'_4)$ are separable, this shows that $s[t_x/x]$ is separable.
    \end{itemize}
    \item If $t'_1$ is a term of form $t'_5 + t'_6$, we may assume (induction hypothesis) that $t'_5 * t'_2$ and $t'_6 * t'_2$ are separable. By \refeqn~\eqref{eq:timesoverplus2}, we know:
       \begin{align*}
       & s[t_x/x] \\
       = \; &  (t'_5 + t'_6) * t'_2 \\
       = \; & (t'_5 * t'_2) + (t'_6 * t'_2).
       \end{align*}
       Since both $(t'_5 * t'_2)$ and $(t'_6 * t'_2)$ are separable, this shows that $s[t_x/x]$ is separable.
\end{itemize}
We conclude that the given presentation is indeed a presentation for the monad $MT$.
\end{proof}
Checking these presentations against the various no-go theorems presented in this paper yields a new table of possible compositions, displayed in \reftab~\ref{tab:distlawsBoom2}.
Other than the six combinations we already discovered in the original Boom hierarchy, we find no new combinations of monads that compose via a distributive law.  The results shown in the columns of the composite monads $MT, ML, MM, PT, PL, PM$ are covered by \refthm~\ref{thm:nogoConstants}, while the rows with any of these composite monads are covered by \refthm~\ref{theoremwithconst}.
\begin{table}[h]
  \centering
  \caption{Possible compositions in the Boom hierarchy,}\label{tab:distlawsBoom2}
  with distributive laws of type: row $\circ$ column $\Rightarrow$ column $\circ$ row.
  \begin{minipage}{\textwidth}
    \begin{center}
    \aboverulesep=0ex
    \belowrulesep=0ex
    \renewcommand{\arraystretch}{1.1}
    \begin{tabular}{l|cccc|cccccc} % chktex 44
    \toprule
     & tree\;\; & list \;\; & multiset \;\; & powerset & MT & ML & MM & PT & PL & PM \\ \midrule
    tree \;\;     & \no & \no & \yes & \yes & \no & \no & \no & \no & \no & \no\\
    list \;\;     & \no & \no & \yes & \yes & \no & \no & \no & \no & \no & \no\\
    multiset \;\; & \no & \no & \yes & \yes & \no & \no & \no & \no & \no & \no\\
    powerset \;\; & \no & \no & \no & \no & \no & \no & \no & \no & \no & \no\\ \midrule
    MT & \no & \no & \no & \no & \no & \no & \no & \no & \no & \no \\
    ML & \no & \no & \no & \no & \no & \no & \no & \no & \no & \no \\
    MM & \no & \no & \no & \no & \no & \no & \no & \no & \no & \no \\
    PT & \no & \no & \no & \no & \no & \no & \no & \no & \no & \no \\
    PL & \no & \no & \no & \no & \no & \no & \no & \no & \no & \no \\
    PM & \no & \no & \no & \no & \no & \no & \no & \no & \no & \no \\\bottomrule
    \end{tabular}
    \end{center}
  \end{minipage}
\end{table}

\section{Conclusion}\label{sec:conclusion}
We have shown there can be no distributive law between large classes of monads:
\begin{itemize}
\item Section~\ref{section:Plotkin} developed general theorems for demonstrating when distributive laws cannot exist, derived from a classical counterexample of Plotkin, making essential use of idempotence axioms.
\item Section~\ref{section:beyondPlotkin} introduced novel ideas, exploiting unitality axioms, yielding counterexamples beyond those possible in \refsec~\ref{section:Plotkin}.
\item Section~\ref{sec:plus-over-times} developed a third family of no-go theorems, motivated by a question of Beck. Here the ability to ``make variables disappear'' in terms is central, such as in inverse and absorption axioms.
\end{itemize}
Our results cover many naturally occurring combinations of monads, including all previously known negative results. They also identify issues in the existing literature, resolve the open question of whether the list monad distributes over itself,
and confirm a negative conjecture of Beck.

We strongly advocated the use of algebraic methods. These techniques were used for all of our proofs. Taking this approach, rather than direct calculations involving Beck's axioms, enabled us to single out the essentials of each proof, so that the resulting theorems could be stated in full generality.

Lastly, we would like to emphasize that the methods described in this paper are of broader application, beyond our specific theorems. For example, Julian Salamanca found that there is no distributive law of the group monad over the powerset monad $G \circ P \Rightarrow P \circ G$~\cite{Salamanca2018}, before we formulated \refthm~\ref{theoremwithconst}. Although elements of his proof are very similar to those used in Section~\refsec~\ref{section:Plotkin}, the proof itself uses further ideas outside the scope of our theorems.

\subsection{Summary of Axioms and Theorems}
Our theorems use a range of different properties of algebraic theories. To aid navigation, we conclude with a summary of the properties that we exploit, and which theorems exploit them.

Our theorems typically require the existence of a special term with multiple properties, such as a binary term that is both idempotent and commutative. Below we list all these properties separately. Table~\ref{overview} then specifies which theorems require which properties to hold simultaneously for a single term. Note that some properties, such as~\ref{endax:general-idemp-exists1}, seem trivial on their own, but in conjunction with other properties we require of the special term in our theorems they are no longer trivial and must be stated as a requirement.
%Some of these properties, such as~\ref{endax:general-idemp-exists}, are trivially satisfied by every theory. However, in our theorems they are used in conjunction with other axioms that put further constraints on the terms satisfying these axioms, excluding the trivial cases. Although sometimes, the trivial cases are exactly what we need, see for example \refex~\ref{ex:latticeandpower}, where~\ref{endax:general-idemp-exists} is used for two terms, one of which being a variable which trivally satisfies it.

In addition, our theorems require certain properties to hold for the theories in question, such as the existence of a constant in the theory, or conditions that must hold for all terms in the theory, such as \emph{``if a term is equal to a constant, then it cannot have any variables''}. In the list below, we split the properties into those that must hold for one special term, and those that must hold for the theory.

\begin{itemize}[label=$ $]
\item {\bf Properties of terms:}
\begin{itemize}[label=$\bullet$]
\item There is a binary term $\anyterm$ such that:
\begin{enumerate}[label=Ax\arabic*]
  \item\label{endax:binary-idemp} $\anyterm$ is idempotent:
        \[
          \{x\} \vdash \anyterm(x,x) = x
        \]
  \item\label{endax:binary-unit} $\anyterm$ has a unit:
  \[
    \{x\} \vdash \anyterm(x,e) = x = \anyterm(e,x)
  \]
  \item\label{endax:general-idemp-exists-binary} There is a substitution $f$ of terms for variables such that $\anyterm[f]$ is equal to a variable:
  \[
    \exists f~s.t.~\Gamma \vdash \anyterm[f] = x
  \]
  \item\label{endax:binary-comm} $\anyterm$ is commutative:
  \[
    \{x,y\} \vdash \anyterm(x,y) = \anyterm(y,x)
  \]
  \item\label{endax:no-abides} $\anyterm$ does not satisfy the abides equation:
  \[
    \left(\Gamma  \vdash \anyterm(\anyterm(x,y),\anyterm(z,w)) = \anyterm(\anyterm(x,z),\anyterm(y,w))\right) \;\;\Rightarrow\;\; \#\{x,y,z,w\} \leq 3
  \]
  \item\label{endax:bounded-var-binary} Any term provably equal to $\anyterm$ has at most two variables:
  \[
    \forall \anytermtwo : (\Gamma \vdash \anyterm(x,y) = \anytermtwo) \;\;\Rightarrow\;\; \{x,y\} \vdash \anytermtwo
  \]
  \item\label{endax:at-least-one-var-binary} Any term provably equal to $\anyterm$ has at least one variable:
  \[
    \forall \anytermtwo : (\Gamma \vdash \anyterm(x,y) = \anytermtwo) \;\;\Rightarrow\;\; \neg (\emptyset \vdash \anytermtwo)
  \]
  \item\label{endax:at-least-two-var-binary} Any term provably equal to $\anyterm$ has at least two variables:
  \[
    \forall \anytermtwo : (\Gamma \vdash \anyterm(x,y) = \anytermtwo) \;\;\Rightarrow\;\; \neg (\{x\} \vdash \anytermtwo \;\vee\; \{y\} \vdash \anytermtwo)
  \]
\end{enumerate}
\item There is an $n$-ary term $\anyterm$ ($n \geq 2$) such that:
\begin{enumerate}[label=Ax\arabic*] \setcounter{enumi}{8}
  \item\label{endax:nary-idemp} $\anyterm$ is idempotent:
         \[
           \{x\} \vdash \anyterm[x/x_i] = x
          \]
  \item\label{endax:nary-unit} $\anyterm$ has a unit:
  \[
    \{x_i\} \vdash \anyterm[e/x_{j\neq i}] = x_i
  \]
  \item\label{endax:general-idemp-exists} There is a substitution $f$ of terms for variables such that $\anyterm[f]$ is equal to a variable, and none of the terms in the range of $f$ contain that variable:
  \[
    \exists f~s.t.~\Gamma \vdash \anyterm[f] = x \; \wedge \; \forall \anytermtwo \in \ran(f) : \neg (x \in \var(\anytermtwo))
  \]
  \item\label{endax:nary-comm} $\anyterm$ is stable under a fixed-point free permutation $\sigma$ of its variables:
  \[
    \Gamma \vdash \anyterm = \anyterm[\sigma]
  \]
  \item\label{endax:bounded-var-nary} Any term provably equal to $\anyterm$ has at most $n$ variables:
  \[
    \forall \anytermtwo : (\Gamma \vdash \anyterm(x_1,\ldots,x_n) = \anytermtwo) \;\;\Rightarrow\;\; \{x_1,\ldots,x_n\} \vdash \anytermtwo
  \]
  \item\label{endax:at-least-two-var-nary} Any term provably equal to $\anyterm$ has at least two variables:
  \[
    \forall \anytermtwo : (\Gamma \vdash \anyterm = \anytermtwo) \;\;\Rightarrow\;\; \neg (\bigvee_{x \in \Gamma} \{x\} \vdash \anytermtwo)
  \]
\end{enumerate}
\item There is an $n$-ary term $\anyterm$ ($n \geq 1$) such that:
\begin{enumerate}[label=Ax\arabic*] \setcounter{enumi}{14}
  \item\label{endax:general-idemp-exists1} There is a substitution $f$ of terms for variables such that $\anyterm[f]$ is equal to a variable, and none of the terms in the range of $f$ contain that variable:
  \[
    \exists f~s.t.~\Gamma \vdash \anyterm[f] = x \; \wedge \; \forall \anytermtwo \in \ran(f) : \neg (x \in \var(\anytermtwo))
  \]
  \item\label{endax:thedifficultone} $\anyterm$ can be written as $\anytermtwo[\anytermthree_i/x_i]$, such that $\anyterm$ and $\anytermtwo$ share at least one variable, and~\ref{endax:general-idemp-exists1} holds for $\anytermtwo$.
  \item\label{endax:missing-var-const} $\anyterm$ is provably equal to a constant, but contains at least one variable:
  \[
    \exists e~s.t.~(\emptyset \vdash e) \;\wedge\; (e = \anyterm) \;\wedge\; \neg (\emptyset \vdash \anyterm)
  \]
  \item\label{endax:missing-var-terms} There is a term $\anytermtwo$ that is provably equal to $\anyterm$, and $\anyterm$ contains a variable that $\anytermtwo$ does not have. Moreover, $\anytermtwo$ satisfies~\ref{endax:general-idemp-exists1}.
  \[
    \exists \anytermtwo~s.t.~(\Gamma \vdash \anyterm = \anytermtwo) \;\wedge\; \var(\anyterm)\setminus\var(\anytermtwo) \neq \emptyset
  \]
\end{enumerate}
\end{itemize}
\item {\bf Properties of theories:}
\begin{itemize}[label=$\bullet$]
\item The algebraic theory in question must satisfy:
\begin{enumerate}[label=Ax\arabic*] \setcounter{enumi}{18}
\item\label{endax:general-idemp-forall} For all terms $\anyterm$ of arity $\geq 1$ there is a substitution $f$ of terms for variables such that $\anyterm[f]$ is equal to a variable, and none of the terms in the range of $f$ contain that variable:
    \[
      \forall \anyterm~\exists f~s.t.~\Gamma \vdash \anyterm[f] = x  \; \wedge \; \forall \anytermtwo \in \ran(f) : \neg (x \in \var(\anytermtwo))
    \]
\item\label{endax:constant} There exists a constant term in the theory. That is, there is a term that has no variables.
\[
\exists e~s.t.~\emptyset \vdash e
\]
\item\label{endax:twoconstants} There are at least two constants.
\item\label{endax:weakno-var} If any variable substitution of a term is equal to a constant, then that term itself is equal to a constant:
\[
\forall \anyterm, e : (\exists f~s.t.~\anyterm[f] = e) \;\;\Rightarrow\;\; \anyterm = e
\]
\item\label{endax:no-var} If a term is provably equal to a constant, then it does not have any variables:
\[
\forall \anyterm, e : (\emptyset \vdash e \wedge e = \anyterm) \;\;\Rightarrow\;\; \emptyset \vdash \anyterm
\]
\item\label{endax:single-var} If a term is provably equal to a variable, then that variable is the only variable appearing in the term:
\[
\forall \anyterm, x : (\Gamma \vdash \anyterm = x)  \;\;\Rightarrow\;\; \{x\} \vdash \anyterm
\]
\end{enumerate}
\end{itemize}
\end{itemize}

\noindent
An overview of our theorems and which axioms they use is presented in~\reftab~\ref{overview}. %Note that some theorems require multiple axioms to hold for a single term. This table can therefore be used
We advise to use this table only as a quick reference, and to always check the precise statements of the theorems before drawing any conclusions about specific monads.
\begin{table}[h!]
\caption{Theorems and axioms appearing in this paper.}%
  \label{overview}

\bigskip
\begin{tabularx}{\textwidth}{Xll}
  % after \\: \hline or \cline{col1-col2} \cline{col3-col4} ...
  Theorem & Axioms for $S$ & Axioms for $T$ \\ \toprule
  \refthm~\ref{thm:plotkin-1} \newline (first Plotkin generalisation) & \ref{endax:binary-idemp},~\ref{endax:at-least-two-var-binary},~\ref{endax:single-var} & \ref{endax:binary-idemp},~\ref{endax:binary-comm},~\ref{endax:bounded-var-binary}  \\\midrule % chktex 2
  \refthm~\ref{thm:plotkin-2} \newline (n-ary version) & \ref{endax:binary-idemp},~\ref{endax:at-least-two-var-nary},~\ref{endax:single-var} & \ref{endax:nary-idemp},~\ref{endax:nary-comm},~\ref{endax:bounded-var-nary}  \\\midrule % chktex 2
  \refthm~\ref{thm:plotkin-3} \newline (no commutativity) & \ref{endax:binary-idemp},~\ref{endax:at-least-two-var-binary},~\ref{endax:single-var} & \ref{endax:binary-idemp},~\ref{endax:bounded-var-binary},~\ref{endax:at-least-one-var-binary},~\ref{endax:single-var} \\\midrule % chktex 2
  \refthm~\ref{thm:NoGo-Videm-Punit} \newline (generalising idempotence) & \ref{endax:binary-idemp},~\ref{endax:at-least-two-var-binary},~\ref{endax:single-var}& \ref{endax:general-idemp-exists-binary},~\ref{endax:binary-comm},~\ref{endax:bounded-var-binary} \\\midrule % chktex 2
  \refthm~\ref{thm:nogoConstants} \newline (too many constants) & \ref{endax:general-idemp-exists} & \ref{endax:twoconstants},~\ref{endax:weakno-var}\\\midrule % chktex 2
  \refthm~\ref{thm:times-over-plus} \newline (times over plus) & \ref{endax:binary-unit},~\ref{endax:general-idemp-forall},~\ref{endax:no-var},~\ref{endax:single-var} & \ref{endax:binary-unit},~\ref{endax:no-var},~\ref{endax:single-var}  \\\midrule % chktex 2
  \refthm~\ref{thm:nogoTreeList} \newline (lacking abides) & \ref{endax:binary-unit},~\ref{endax:general-idemp-forall},~\ref{endax:no-var},~\ref{endax:single-var} & \ref{endax:binary-unit},~\ref{endax:no-abides},~\ref{endax:no-var},~\ref{endax:single-var} \\\midrule % chktex 2
  \refthm~\ref{thm:nogoTreeIdem} \newline (idempotence and units) & \ref{endax:binary-idemp},~\ref{endax:binary-unit},~\ref{endax:general-idemp-forall},~\ref{endax:no-var},~\ref{endax:single-var} & \ref{endax:binary-unit},~\ref{endax:no-var},~\ref{endax:single-var} \\\midrule % chktex 2
  \refthm~\ref{theoremwithconst} \newline (inverse trouble) & \ref{endax:nary-unit},~\ref{endax:thedifficultone},~\ref{endax:missing-var-const}   & \ref{endax:constant},~\ref{endax:weakno-var} \\\midrule % chktex 2
  \refthm~\ref{theoremwithvar} \newline (absorption trouble) & \ref{endax:general-idemp-exists1},~\ref{endax:missing-var-terms}  & \ref{endax:constant},~\ref{endax:weakno-var} \\ % chktex 2
\end{tabularx}
\end{table}

\section*{Acknowledgements}

We are very grateful to Jeremy Gibbons, Bartek Klin, Hector Miller-Bakewell, Julian Salamanca and our reviewers for their insightful feedback on earlier versions of this paper. We also would like to thank Prakash Panangaden, Fredrik Dahlqvist, Louis Parlant, Sam Staton, Ohad Kammar, Jules Hedges, and Ralf Hinze for inspiring discussions, which all had a positive influence on this paper.

This work was partially supported by Institute for Information \& communications Technology Promotion(IITP) grant funded by the Korea government(MSIT) (No.2015--0--00565, Development of Vulnerability Discovery Technologies for IoT Software Security).

%\IEEEtriggeratref{12}

\bibliographystyle{alphaurl}
\bibliography{paper}

\end{document}